\numberwithin{equation}{section}
\newtheorem{theorem}{Theorem}[section]
\newtheorem{lemma}[theorem]{Lemma}
\newtheorem{proposition}[theorem]{Proposition}
\newtheorem*{lemma*}{Lemma}      
\theoremstyle{definition}
\newtheorem{definition}[theorem]{Definition}
\numberwithin{equation}{section}
\let\Im\undefined
\let\Re\undefined
\DeclareMathOperator{\Im}{Im \,}
\DeclareMathOperator{\Re}{Re \,}
\DeclareMathOperator{\dist}{dist}
\DeclareMathOperator{\sgn}{sgn}
\newcommand{\Z}{ {\mathbb Z} }
\newcommand{\C}{\mathbb{C}}
\newcommand{\G}{\mathcal{G}}
\newcommand{\R}{\mathbb{R}}
\def\ket#1{|#1 \rangle}   
\def\bra#1{\langle #1|}   %
\def\id {\mathbbm{1} }
\DeclareMathOperator{\indfct}{\mathbbm{1}}
 \def\Xint#1{\mathchoice
 {\XXint\displaystyle\textstyle{#1}}%
 {\XXint\textstyle\scriptstyle{#1}}%
 {\XXint\scriptstyle\scriptscriptstyle{#1}}%
 {\XXint\scriptscriptstyle\scriptscriptstyle{#1}}%
 \!\int}
 \def\XXint#1#2#3{{\setbox0=\hbox{$#1{#2#3}{\int}$}
 \vcenter{\hbox{$#2#3$}}\kern-.5\wd0}}
 \def\dashint{\Xint-}
\def\be{\begin{equation}}
\def\ee{\end{equation}} 
\def\bra#1{\langle #1 |} 
\def\ket#1{| #1 \rangle}
\def\={\  = \  }   
\def\+{\  + \  }   
\def\-{\  - \  }    
\begin{document}

%
%
%
%
%
\title[Partial Delocalization on the Complete Graph]{Resonances and Partial Delocalization on the Complete Graph} 

\author[Aizenman]{Michael Aizenman}
\address{Departments of Physics and Mathematics, Princeton University,\\ Princeton NJ 08544,  USA.}
\email{aizenman@princeton.edu}
  
\author[Shamis]{Mira Shamis}
\address{Mathematics Department, Princeton University, \\ Princeton NJ 08544,  USA.}
 \email{mshamis@princeton.edu}

\author[Warzel]{Simone Warzel}
\address{Zentrum Mathematik, TU M\"unchen, \\
 Boltzmannstr. 3, 85747 Garching, Germany.  }
 \email{warzel@ma.tum.de}

\subjclass{82B44; 47B80.}
\keywords{Anderson localization and delocalization}

\date{15 May, 2014}

\begin{abstract} 
Random operators may acquire extended states formed from a multitude of   mutually resonating local quasi-modes.  
This mechanics is explored here in the context of the random Schr\"odinger operator on  the complete graph.  The operators exhibits  local quasi modes mixed through  a single channel.   
While most of its spectrum consists of localized eigenfunctions, under appropriate conditions it includes also  bands of states which are delocalized  in the $\ell^1$-though not in $\ell^2$-sense, where the eigenvalues have the statistics of  \v{S}eba spectra.  
The analysis proceeds  through some general observations on the scaling limits of random functions in the Herglotz-Pick class. 
The results are in agreement with  a heuristic condition  for the emergence of resonant delocalization, which is stated in terms of the tunneling amplitude among quasi-modes.

\end{abstract}

\maketitle

\section{Random Schr\"odinger operator on the  complete graph } 

\subsection{The operator and its phase diagram}  

The random Schr\"odinger operator on the complete   graph is given by  the $M \times M$ matrix: 
\begin{equation}\label{eq:H}
H_M  := -  \ket{\varphi_0} \bra{\varphi_0}  + \kappa_M V \, , 
\end{equation}
with 
\be
 \bra{\varphi_0} = (1,1,\dots,1)/\sqrt{M}\, \qquad \mbox{and}  \qquad  \qquad \kappa_M := \frac{\lambda}{\sqrt{2\ln M}}  \, .
\ee 
The rank-one operator  $- \ket{\varphi_0} \bra{\varphi_0}$, which plays  the role of  the kinetic term, is a multiple  of the adjacency matrix on the complete graph of $M$ vertices ($\times (- 1)/M $).
The second term is a multiple of a random potential $V$, which  is a diagonal matrix with independent  
entries $ (V(1),..., V(M)) $ with the common probability density  
\be
\varrho(v) := \exp\left(-v^2/2\right) / \sqrt{2\pi} \, . 
\ee
The gaussian distribution is chosen here mainly for concreteness sake; most of the analysis can be adapted to other distributions with continuous densities, with suitable adjustments in the scale of the coupling constant $ \lambda > 0 $. \\

At $\lambda=0$ the spectrum  of $H_M $  consists of two levels.   The ground state is non-degenerate, at energy $E=-1$ and given by 
the ``extended state'' $\varphi_0$, and the other energy level is  $(M-1)$-fold degenerate, at energy $E=0$.  The degeneracy is split as soon as $\lambda \neq 0$.    For $M\gg 1$, as $\lambda$ is increased  the hitherto degenerate levels spread at rates proportional to $\lambda$, being asymptotically dense in the interval $[-\lambda,\lambda]$.\\ 

This model is studied here as a case study of resonant delocalization. 
The $\delta$ function states which are localized at sites of unusually high values of  the potential $V$ (whose maximum is typically close to $\sqrt{2 \ln M}$) form approximate eigenfunctions, or  ``quasi -modes''.   The kinetic term allows tunneling between such states, and under the right conditions the operator's eigenfunctions  take the form of hybredized mixtures of localized states.  Of particular interest  is the consequent emergence of a narrow spectral band at which the eigenstates are semi-delocalized.     Following is an outline of the results established in this work. \\ 

The energy of the extended state which starts as  $\varphi_0$ changes with $\lambda$ at a much slower rate  ($o(1)$).  At $\lambda=1 +o(1)$  a  first order transition occurs at the spectral edge;  the extended state is passed through an `avoided crossing' by a localized state which is  supported mainly on the minimum of the potential (Theorem~\ref{thm:spectrumandgs}).   As $\lambda$ is  further increased the operator continues to have an extended state at an energy close to $(-1)$, which  is repeatedly passed by localized states as $\lambda$ is increased over the interval $(1,\sqrt 2)$.   The extended state and the localized state passing it hybridize for only a ``brief instant'' on the $\lambda$ scale.  The signature of  that is that for $\lambda < \sqrt 2$,  in energies away from $ E =0$ where the quasi-modes are  initially bunched up, at any a-priori chosen value  of $\lambda$ the operator has only strongly localized states  except for one which is a slightly perturbed version of $ \ket{\varphi_0}$ (Theorems~\ref{thm:localization} and~\ref{thm:localizationplusdeloc}).   \\ 

Another transition happens when $\lambda$ passes the value $\sqrt 2$.  Beyond that more massive hybridization occurs and  a small band of semi-extended states emerges at energies $E = -1 +o(1)$  (Theorem~\ref{thm:resdeloc_rev}).   Similar semi-delocalization is found  at energies close to $E = 0$ for all $\lambda >0$.   \\  

\subsection{The spectral scaling limit}

In discussing the random operator's spectrum in the limit $M\to \infty$ we consider its  blown-up picture under a magnification in which it appears as a random point process with mean spacing of order $1$.   In the regime of resonant delocalization the limit is given by  what we call  the \v{S}eba process,  after the prior appearance  of similar spectra in the \v{S}eba billiard.  It appears also in other contexts, which are mentioned in Section~\ref{sec:seba} where the  process is defined.  \\ 
 
In Section~\ref{sec:general} we present some tools which yield a classification of the different possibilities for the  scaling limits of spectra of similar characteristic equations, and for establishing  convergence.  These may be of independent interest,  involving some general results about the possible limits for the set of roots of a characteristic equation which is expressed in terms of random functions in the Herglotz-Pick class.  \\ 
 
 In Section~\ref{sec:disc} we compare the  results on the emergence of narrow bands or semi-delocalization with a heuristic criterion which is based on a condition for hybridization among two resonating quasi modes.   For this purpose, the standard notion of quasi-modes is enhanced here with a  definition of tunneling amplitude  which is natural for systems with random potential.
   The heuristic criterion is found  to be of relevance also in the present context, even though the hybridization studied here is of somewhat more extensive nature.    Also discussed there are the different notions of delocalization which are of relevance for operators with long range hopping.   \\

\subsection{Relation to past works} 

Our motivation to study resonant delocalization as a mechanism for the formation of bands of extended states was  in part motivated by recent results on  random Schr\"odinger operators on tree graphs~\cite{AW11,AW13}.  The mechanism plays there a role  even in regimes of very low density of states, and it is of interest to understand its role in other systems with rapid growth of volume reached by $n$ steps.      (Tree graphs are also of interest from other perspectives; e.g. the location of the mobility edge is affected by multifractality effects on which more can be found in~\cite{AW11,MG,BRT,LAKS}.) \\

The random Schr\"odinger operator on the complete graph is particularly amenable to analysis, and, by a number of different methods, it has already attracted attention in various contexts:  Anderson localization~\cite{molchanov,SUSY},  quantum chaos~\cite{Seba90,AlSeba,BGS1,BGS2,BBK}, and adiabatic quantum computation \cite{FGGN08}.  
Our discussion overlaps in part with~\cite{molchanov},  which focused more on the localization phase, and~\cite{SUSY} which highlighted a SUSY calculation by which partial localization results were obtained. 
However, these works have not addressed the delocalization phenomena on which we focus.     
In particular, the transition in which many localized modes resonating through a single channel turn into a band of spatially delocalized states, seems to not  have been discussed.   The description of this phenomenon, and the general tools which are presented here, form the main points of this work.

\section{The macro and micro perspective}   
\subsection{Spectral range - on the macroscopic scale}     
    
While our main focus here is on the nature of eigenfunctions,
it is natural to first determine the range of the spectrum $\sigma(H_M) $ and the spectral density.    This can be obtained simply through the following two observations: 
\begin{enumerate} 
\item Since $H_M = - \ket{\varphi_0}\bra{\varphi_0}  + \kappa_M V $ differs from  $\kappa_M V$ by just a rank-one perturbation, the eigenvalues of the two interlace.  Hence the spectral  density of states of $H_M $ is that of its potential part, and given by $\rho(E/\kappa_M)$ (as was already discovered through a somewhat more involved SUSY calculation~\cite{SUSY}).\\ 

\item  Due to well understood large deviations, the values which the random potential $ V $ assumes over the $M$ point set  $\mathcal{K}_M$  typically  span the interval  $ [- \sqrt{2 \ln M},  \sqrt{2 \ln M}]$ up to fluctuations of  order  $1/\sqrt{\ln M}$.   In the normalization selected in \eqref{eq:H}, for fixed  $\lambda$ the two terms of $H_M$ are (typically)  of comparable norms:     
\be \label{eq:spectrum}     
\| T\| \ = \  1 \, \, , \qquad  \| \kappa_M V \| \ = \  \ \lambda + \mathcal{O}\big(\frac{1}{\sqrt{\ln M }}\big)    
\ee     
where the second equality holds in a probability sense.   (This normalization may remind one of a familiar feature  of the random energy model, cf.~\eqref{eq:extremevaluestat} for a precise statement.) 
\end{enumerate}      
          
The emerging picture is summarized is the following statement, in which we employ the notion of  
the Hausdorff distance between two subsets of a metric space, here $ I, J \subset \mathbb{R} $:
\be
 d_H(I,J) := \max\left\{ \sup_{x\in I} \inf_{y\in J} |x-y| \, ,   \, \sup_{y\in J} \inf_{x\in I} |x-y|\right\} \, .
 \ee   
\begin{theorem} [Spectrum and ground state]
\label{thm:spectrumandgs}
For  the sequence of operators $H_M$, with $M \to \infty$ at fixed $\lambda > 0  $: 
\begin{enumerate}
\item  For large $M$ the spectrum $ \sigma(H_M) $ of $H_M$ is typically close, in the  Hausdorff distance $ d_H $, to the  non-random set
\be 
S(\lambda) \  = \  \{ -1 \} \, \cup \, [-\lambda, \lambda]
\ee 
 in the sense that for any $ \varepsilon > 0 $:
\be\label{eq:Hausdconv}
\lim_{M\to \infty} \mathbb{P}\left( d_H( \sigma(H_M), S(\lambda) ) > \varepsilon \right) \ = \ 0 \, .
\ee
\item The ground-state energy and the corresponding ground-state function $ \psi_0$ satisfy with asymptotically full probability:
\begin{align}\label{eq:gs}
\mbox{for $\lambda  \in (0,1) $:} \qquad\quad&
	\min \sigma(H_M) = -1 -\kappa_M^2+ O(\kappa_M^4) \, , \notag \\ 
	& \frac{\| \psi_0 \|_2}{\| \psi_0 \|_\infty} \ = \ \Theta(\sqrt{M} ) \, , \notag \\
\mbox{for $\lambda  \in (1,\infty) $:} \qquad\quad &
	\min \sigma(H_M) = \min \sigma(\kappa_M V) + \frac{\mathcal{O}(1)}{ M }\, , \notag \\ 
	&  1 \leq \frac{\| \psi_0 \|_2}{\| \psi_0 \|_\infty} \ \leq \ 1 + \frac{\mathcal{O}(1)}{\kappa_M \sqrt{M}}  \, .
\end{align}
\end{enumerate}
\end{theorem}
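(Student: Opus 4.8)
The plan is to reduce the entire analysis to the single scalar secular equation afforded by the rank-one structure of the kinetic term. Since $H_M=\kappa_M V-\ket{\varphi_0}\bra{\varphi_0}$ and $\bra{\varphi_0}$ has all entries equal to $1/\sqrt M$, the Aronszajn--Donoghue (Krein) rank-one formula shows that every eigenvalue $E$ of $H_M$ with $E\notin\{\kappa_M V(i)\}$ solves
\[
 F_M(E)\ :=\ \frac1M\sum_{i=1}^M\frac{1}{\kappa_M V(i)-E}\ =\ 1,
\]
and that the corresponding unnormalized eigenvector has entries $\psi_E(i)\propto(\kappa_M V(i)-E)^{-1}$. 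The function $F_M$ is a Herglotz-Pick function with poles at the $\kappa_M V(i)$; it is strictly increasing between consecutive poles, running from $-\infty$ to $+\infty$, so it has exactly one root in each of the $M-1$ gaps, and it increases from $0$ to $+\infty$ on $(-\infty,\kappa_M V_{\min})$, supplying one further root below all the poles. These $M$ roots are the eigenvalues of $H_M$, and all the quantitative claims will be read off from their location together with the induced eigenvector weights $(\kappa_M V(i)-E)^{-1}$.

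For the Hausdorff statement~(1) I would combine the two observations preceding the theorem. Interlacing of the roots with the poles confines every eigenvalue except the lowest to $[\kappa_M V_{\min},\kappa_M V_{\max}]$, where $V_{\min},V_{\max}$ are the extreme values of the potential. The extreme-value asymptotics for $M$ i.i.d.\ standard Gaussians give $\kappa_M V_{\max}\to\lambda$ and $\kappa_M V_{\min}\to-\lambda$ in probability, as in~\eqref{eq:spectrum}, while the largest gap between consecutive scaled values $\kappa_M V(i)$ is of order $(\ln M)^{-1}$ and hence tends to $0$; thus $\{\kappa_M V(i)\}$, and with it by interlacing the bulk of $\sigma(H_M)$, becomes dense in $[-\lambda,\lambda]$. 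It then remains only to place the lowest eigenvalue, which is the content of~(2): for $\lambda<1$ it sits near $-1<-\lambda$ and supplies the isolated point $\{-1\}$, whereas for $\lambda\ge 1$ one has $-1\in[-\lambda,\lambda]$ and it is absorbed into the interval. Assembling these pieces yields $d_H(\sigma(H_M),S(\lambda))\to 0$ in probability.

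For the ground state~(2) I would study the unique root $E_0$ of $F_M=1$ lying below $\kappa_M V_{\min}$. Writing $E_0=\kappa_M V_{\min}-\delta$ with $\delta>0$, the pole at the minimizing site contributes $1/(M\delta)$ while the background $\frac1M\sum_{i:\,V(i)>V_{\min}}(\kappa_M(V(i)-V_{\min})+\delta)^{-1}$ tends, as $E_0\uparrow\kappa_M V_{\min}$, to $1/\lambda$ (since $\kappa_M(V(i)-V_{\min})\approx\lambda+\kappa_M V(i)$ at bulk sites, whose reciprocals average to $\int\varrho(v)\,dv/\lambda=1/\lambda$). The secular equation then reads, to leading order, $(M\delta)^{-1}+\lambda^{-1}=1$. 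For $\lambda>1$ this forces $\delta\approx\lambda/(M(\lambda-1))=O(1/M)$, so that $\min\sigma(H_M)=\kappa_M V_{\min}+O(1/M)=\min\sigma(\kappa_M V)+O(1/M)$; the eigenvector then has a single dominant entry $\psi_0(i_{\min})\propto1/\delta$ of order $M$, against which the remaining $\ell^2$-mass is smaller by a factor $O(1/M)$, giving $1\le\|\psi_0\|_2/\|\psi_0\|_\infty\le1+o(1)$ at the stated rate. For $\lambda<1$ the relation $(M\delta)^{-1}=1-\lambda^{-1}<0$ has no small positive solution, signalling that the lowest root instead lies near $E=-1$; there I would solve $F_M(E)=1$ by expanding $F_M$ about $E=-1$ -- equivalently, by second-order Rayleigh--Schr\"odinger perturbation theory around the $\lambda=0$ ground state $\varphi_0$ -- using $\frac1M\sum_iV(i)\to0$ and $\frac1M\sum_iV(i)^2\to1$ to obtain $\min\sigma(H_M)=-1-\kappa_M^2+O(\kappa_M^4)$. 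The eigenvector weights $\psi_0(i)\propto(1+\kappa_M V(i)+\kappa_M^2)^{-1}$ are then uniformly $\Theta(1)$ across all sites -- here $\lambda<1$ is exactly what keeps the minimizing-site denominator $1-\lambda+o(1)$ bounded away from $0$ -- so that $\|\psi_0\|_2=\Theta(\sqrt M)$ while $\|\psi_0\|_\infty=\Theta(1)$, giving the ratio $\Theta(\sqrt M)$.

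The step I expect to demand the most care is the uniform control of the background sum near the spectral edge: showing that $F_M(\kappa_M V_{\min}-\delta)-1/(M\delta)$ is well approximated by $1/\lambda$ with enough uniformity in $\delta$ over the relevant $O(1/M)$ window to genuinely pin down the root. This is delicate because the summand is largest precisely at the near-edge sites, where the order statistics are truly random and spaced on the fine scale $(\ln M)^{-1}$; one must show that these near-minimum contributions are negligible with high probability while the bulk reproduces $1/\lambda$. I would handle this by splitting into near-edge and bulk parts, bounding the former through the edge point process (the near-minimum gaps being $\gg\delta$) and evaluating the latter by a law-of-large-numbers / integral comparison, thereby producing matching upper and lower bounds on $F_M$ that bracket $E_0$. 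This is also the point at which the transition at $\lambda=1$ surfaces transparently, since the sign of $1-\lambda^{-1}$ decides whether a root exists just below the edge at all.
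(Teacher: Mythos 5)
Your proposal is correct in substance and, for part~(1) and the case $\lambda<1$, follows essentially the paper's route: the secular equation $F_M(E)=1$ with interlacing, Gaussian extreme-value statistics for the spectral range, and a deterministic approximation of $F_M$ below the potential spectrum that pins the ground state at $-1-\kappa_M^2+O(\kappa_M^4)$. The paper implements this last step by comparing $F_M$ with the Hilbert transform $\widehat{\varrho}_M$ of \eqref{eq:Hilb} -- using either the estimates of Lemmas~\ref{lem:meanT} and~\ref{lem:tail_var} or the Dvoretzky--Kiefer--Wolfowitz inequality for uniform $O(M^{-1/2+\eta})$ control -- and then invoking the asymptotics \eqref{eq:HTrafo}; your moment expansion about $E=-1$ is an equivalent computation, with the caveat that it must be carried out on the event $\max_x|\kappa_M V(x)|\le\lambda$ (the paper's \eqref{eq:maxass}), since unconditioned Gaussian moments grow too fast for a term-by-term expectation, and one must check that the law-of-large-numbers errors $O_P(\kappa_M/\sqrt{M})$ are $\ll\kappa_M^4$ (they are). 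Where you genuinely diverge from the paper is the case $\lambda>1$: you pin the root directly below the minimal pole by balancing $1/(M\delta)$ against a background $\approx 1/\lambda$, which obliges you to control the near-minimum order statistics (gaps of order $1/\ln M\gg 1/M$, negligible near-edge contribution to the background sum and to the $\ell^2$-mass); the paper instead takes the upper bound $E_0\le\min\sigma(\kappa_M V)$ for free from $T\le 0$ and proves the lower bound $E_0\ge\min\sigma(\kappa_M V)-2R/M$ variationally, via Cauchy--Schwarz with a weight function $\alpha(x)$ whose construction uses exactly $\lambda>1$, thereby bypassing edge statistics entirely (it also notes the statement is contained in Theorem~\ref{thm:localization}). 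The trade-off is that the paper's argument is shorter and more robust, while your route -- provided the edge point-process estimates you flag as delicate are carried out, which is standard Gaussian extreme-value theory -- yields somewhat sharper conclusions, e.g.\ $\|\psi_0\|_2/\|\psi_0\|_\infty=1+O(1/M)$, which is stronger than the stated bound $1+\mathcal{O}(1)/(\kappa_M\sqrt{M})$.
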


In \eqref{eq:gs}  we employ the following adaptation of the Bachmann-Landau  notion: $ f_M = \mathcal{O}(g_M) $ means that for all sufficiently large $ M $, except for events of asymptotically vanishing probability $ |f_M| \leq C g_M $, and  $ f_M = \Theta(g_M) $ means  that with similar exception $ c g_M\leq f_M \leq C g_M $ for some $ C, c \in (0, \infty) $, with  constants which are independent of the realization of the randomness.  
The proof of Theorem~\ref{thm:spectrumandgs}  is presented here in Appendix~\ref{app:ground_state}.

\subsection{The characteristic equation}  

Further insight can be obtained from the characteristic equations which determines the spectrum.   A  rank-one  perturbation  argument yields (as in~\cite{molchanov}):   
 \begin{proposition}  \label{lem1} For any potential  $V$ which is non degenerate (i.e. $V(x) \neq V(y)$ except for $x=y$)  the spectrum of $H_M$ consists of the collection of energies $E$ for which 
\be \label{eq:ev}
F_M(E) \ := \ \frac{1}{M} \sum_{x=1}^M\frac{1}{\kappa_M V(x) - E} \ = \ 1 \, , 
\ee 
and the corresponding eigenfunctions are given by: 
\be \label{eq:ef}
\psi_{E}(x)   =  \frac{Const.}{ \kappa_M V(x)- E}  \, .
\ee 
\end{proposition}
\noindent (To which it may be added that in the case discussed here degenerate potentials occur only with probability $0$.) 
\begin{proof}
The resolvent expansion $ \frac{1}{H_M - z} \=  \sum_{n=0}^\infty \frac{1}{\kappa_M V - z} \left[ \ket{\varphi_0}\bra{\varphi_0}   \ \frac{1}{ \kappa_M V - z }      \right]^n$
allows to deduce, by standard arguments, that for any $ z \in \C \backslash \R $: 
 \be  \label{eq:res}
 \frac{1}{H_M - z} \= 
\frac{1}{ \kappa_M V - z } \+ [1-F_M(z)]^{-1} \ \  \frac{1}{ \kappa_M V - z } \ \ket{\varphi_0}\bra{\varphi_0}   \ \frac{1}{ \kappa_M V - z }  
 \ee
and,  in particular,
$
\langle \varphi_0 \, , \, (H_M - z)^{-1}  \varphi_0 \rangle \ = \  
(F_M(z)^{-1}  - 1 )^{-1}  $. The spectrum and eigenfunctions of $H_M$ are then read from the poles and residues of its resolvent.
\end{proof}

 Localization properties of the corresponding eigenfunctions \eqref{eq:ef} are expressed here through 
 the ratios $ \| \psi_E \|_p / \|\psi_{E}\|_\infty $, where 
\be 
 \| \psi_E \|_p^p = \sum_{x=1}^M |\psi_{E}(x) |^{p}   \, , 
   \qquad  \|\psi_{E}\|_\infty \ = 
 \  \max_{x} |\psi_{E}(x) |  \, .  
\ee 
and of particular relevance here will be the cases  $p=2$ and $p=1$.   The relation with the usual participation ratio is discussed in Section~\ref{sec:localization}.   \\ 

Our main objective is to identify conditions under which the  eigenfunctions delocalize, and present a mechanism by which  multiple eigenfunction hybridization occurs due to resonances among many local quasimodes of small energy gaps.  \\


\subsection{The microscopic scale}  

As stated in~\eqref{eq:ev},  the eigenvalues of $ H_M $ form a level set of the random function $ F_M(E) $.  
The mean gap between the values of $\kappa_M V$ in the vicinity of  an energy $ \mathcal{E} \in \R $, and thus also between the intertwining eigenvalues of $H_M$, is given by
\be\label{eq:delta}
\Delta_M( \mathcal E) := \frac{\kappa_M}{ M \varrho( \mathcal E /\kappa_M) } 
 \  =  \  \sqrt{2\pi} \kappa_M \, /  M^{1- (\mathcal E/\lambda)^2}~. 
 \ee   
Upon suitable  amplification of the energy scale  
  the spectrum acquires  the form of a  random point process whose mean spacing is of order one.   
Their locations are given in terms of the rescaled energy parameter 
\be  \label{eq:Escale}
u:= \frac{E-\mathcal E_M}{\Delta_M(\mathcal E_M) } \, ,
\ee 
where we also allow for the center of the scaling window ($\mathcal E_M$) to slightly vary with $M$.    
The rescaled eigenvalues $u_{M,n}$ are simple and intertwine with the 
 collection of similarly rescaled values of the random potential $\kappa_M \, V$ in the vicinity of~$  \mathcal E_M$: 
\be\label{eq:potpoint}
\omega_{M,n}\ := \  \frac{\kappa_M V(x_n) -  \mathcal E_M}{\Delta_M(\mathcal E_M)}  \, .
\ee 
These points are  labeled here in increasing order relative to the reference energy $\mathcal E_M$, so that:  
\be \label{Labels}  
... \le \omega_{M,-1} \le \omega_{M,0} \le  0 < \omega_{M,1} < ...\,,  \quad \mbox{and\,  $ \omega_{M,n-1} <  u_{M,n} \le  \omega_{M,n} $.}   \\  
\ee    
  
In discussing the corresponding eigenfunctions, we chose the constant in \eqref{eq:ef} as $Const. =  \Delta_M(\mathcal E_M) =: \Delta_M $, so that:
  \be   \label{norm}
 \psi_n (x_m) \ = \ \frac{1}{\omega_{M,m} - u_{M,n}} \, , 
\ee 
with $(x_n)$ ordered by  the values of $V(x)$, as above.   \\

For a microscopic perspective on the characteristic equation \eqref{eq:ev}, we rewrite $F_M$ in terms of the  rescaled energy parameter~\eqref{eq:Escale}  as  
$F_M(E) \ = \  \frac{1}{M \Delta_M }  \  \sum_n(\omega_{M,n} - u)^{-1}  $.
Splitting the sum into two parts, \eqref{eq:ev}  can be presented as: 
\be \label{eq:sec}  
\boxed{\quad  S_{M,\omega} (u,L)  \ = \  M \Delta_M \  - \    T_{M,\omega} (u,L) \quad }
\ee 
with 
\be   \label{def:SandT}
S_{M,\omega} (u,L)  := \sum_{n}  \frac{  \indfct \left[ |\omega_{M,n}| \le  L  \right]}{ \omega_{M,n}-u} \,, 
\quad 
T_{M,\omega} (u,L)  := \sum_{n}  \frac{  \indfct \left[ |\omega_{M,n}| > L  \right]}{ \omega_{M,n}-u} \, .
\ee 
The cutoff parameter $L= L _M$  will be taken to increase with  $M$ at a rate such that  
\be \label{eq:Lrate} 
1 \ll L_M \ll M^{1- (\mathcal{E}_M /\lambda)^2} / \sqrt{\ln M} \, . 
\ee

The lower bound on $L$ in \eqref{eq:delta} (i.e. the requirement that $L \to \infty$) ensures that  the restricted sum in $S_{M,\omega} (u,L)$  extends over all the terms in  the ``scaling window'', which is described by the limiting point process.  The upper bound aims at keeping the sum in $S_{M,\omega}$ symmetrically balanced with respect to $u=0$, in distributional sense.  The term $T_{M,\omega}(u,L)$ captures the contribution of the singularities  which fall beyond the range described by the scaling limit.  \\  

Within the above scaling window the functions $ S_{M,\omega} (u,L) $ and $T_{M,\omega} (u,L)  $ exhibit quite different dependence on the energy parameter $u$.     In the next section we shall describe some relevant results on the limiting behavior of each of these terms.    This would yield a  short list of possible characteristics of the limiting behavior of the eigenvalue within the scaling window,  and of the corresponding eigenstates.

\section{The  \v{S}eba process}  \label{sec:seba} 

A characteristic equation similar to \eqref{eq:sec} is known to occur also in other contexts, including  \v{S}eba graphs~\cite{Seba90}, singular perturbations of certain chaotic billiards \cite{AlSeba,BGS1,BGS2} and in random matrix theory \cite{FR05}.    In a number of examples, the singularities of $S_{M,\omega}$ converge to a  Poisson point process, as in our case, however the term on the right is replaced by a constant $\alpha \in \R$.  We shall refer to the  collection of the solutions of the corresponding equation as the  $\alpha -$\v{S}eba   process, after ref.~\cite{Seba90}.    This point process would form one of the limiting situations encountered in our  context.    Let us turn to its definition.

\subsection{Definition} 

 Under the mapping which is described by \eqref{eq:potpoint}   
 the collection of rescaled values of the random potential $(\omega_{M,n})$ 
converges in distribution   to a Poisson process of  {\it intensity} $1$ (i.e. mean  density $dx$).  
We refer to the latter as the {\it  standard Poisson process}.    
Its configurations are   countably infinite discrete random  subsets $\omega \subset \R$.\footnote{The scaling limit appears differently   at the spectral edges,  
where the rescaled collection of potential values converges to a Poisson process with intensity  $ e^{\mp u} du $ (cf.~\cite{Bovier}. 
We will however not need to discuss this process; the results presented here for the spectral edges 
 can be obtained through less detailed information.}  \\ 
For any given configuration $\omega\subset \R$, we shall refer to the following function as its  Borel-Stieltjes transform 
   \be  \label{eq:Borel_Stil}  
S_\omega(\zeta) := \lim_{n\to \infty} \sum_{v \in \omega\cap [-n,n]}  
\frac{1}{v -\zeta} \, , 
\ee  
  assuming that  the limit exists for all $\zeta\in \C^+$.  

\begin{proposition}[Theorem~4.1 in \cite{AW_Cauchy}] \label{prop:Stieltjes}
For almost every realization $\omega$ of the  standard Poisson process: 
\begin{enumerate}
\item   The limit \eqref{eq:Borel_Stil} exists almost surely,  simultaneously 
 for all   $ \zeta \in \C^+ $, and 
 \be \label{eq:4.7}
 \lim_{\eta \to \infty} S_\omega(i\eta) = i \pi \,. 
 \ee     
 \item Along the real line the random function $S_\omega(x)$ has only simple poles.  Between any consecutive pair of such a gap $\Delta  v$, the  function increases monotonously from $-\infty$ to $+\infty$, with slope $S_\omega'(x) \ge 1/|\Delta v|^2$.   
\item The thus defined Stieltjes-Poisson random function  $S_\omega(\zeta)$  is a shift-covariant functional of $\omega$, in the sense that   
\be 
 S_{\mathcal{T}_b  \omega}( \zeta ) = S_{\omega}(\zeta+b) 
 \ee  for all $ b \in \R $ and $ \zeta \in \C^+ $, with $\mathcal{T}_b  \omega$ the point configuration shifted to the left by $b$.   
 \end{enumerate}
\end{proposition}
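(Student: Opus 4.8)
The plan is to handle the three items in turn, putting the real work into the convergence claim (1); items (2) and (3) then follow from the pointwise representation. \emph{Convergence and the value at infinity.} The series is only conditionally convergent, so the symmetric cut-off $[-n,n]$ must be used to produce the needed cancellation. Fixing $R>0$, I would split off the far field via $\frac{1}{v-\zeta}=\frac1v+\frac{\zeta}{v(v-\zeta)}$ for $|v|>R$; the correction terms are $O(|v|^{-2})$ and hence sum absolutely and almost surely, while the finitely many terms with $|v|\le R$ are harmless since $\omega$ almost surely has no point at $\Re\zeta$ and does not accumulate. The delicate piece is $\sum_{R<|v|\le n} v^{-1}$, which I would read as the compensated Poisson integral $\int_{R<|x|\le n} x^{-1}(\omega(dx)-dx)$ --- its compensator $\int_{R<|x|\le n}x^{-1}dx$ vanishes by symmetry --- with mean zero and variance $\int_{|x|>R}x^{-2}dx=2/R<\infty$. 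As a martingale in $n$ bounded in $L^2$ it converges almost surely. For the value at infinity \eqref{eq:4.7}, the same split gives $S_\omega(i\eta)=\lim_n\int_{-n}^n(x-i\eta)^{-1}dx+\int(x-i\eta)^{-1}(\omega(dx)-dx)$; the deterministic integral tends to $i\pi$ for every $\eta>0$ (tracking the branch of the logarithm across $\C^+$), while the compensated remainder has second moment $\int(x^2+\eta^2)^{-1}dx=\pi/\eta\to0$, so it vanishes as $\eta\to\infty$ and leaves $i\pi$.

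\emph{Simultaneity in $\zeta$ and analyticity.} To make the limit exist simultaneously for all $\zeta\in\C^+$, I would first run the above on a countable dense subset of $\C^+$ and then upgrade. The truncated sums $S_n(\zeta)$ are analytic on $\C^+$, and on any compact $K\subset\{\Im\zeta\ge\delta\}$ the same decomposition bounds them uniformly in $n$ by an almost surely finite constant (the $O(|v|^{-2})$ tail being uniform in $\zeta\in K$). By Montel's theorem the family is normal, so pointwise a.s.\ convergence on a dense set promotes to locally uniform a.s.\ convergence, and the limit $S_\omega$ is analytic on $\C^+$.

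\emph{Real-line structure (2).} Working with the convergent representation, on a gap $(v_j,v_{j+1})$ between consecutive points of $\omega$ each summand $1/(v-x)$ is real and smooth; term-by-term differentiation (legitimate by the locally uniform convergence of $\sum_v(v-x)^{-2}$ away from $\omega$) gives $S_\omega'(x)=\sum_v(v-x)^{-2}>0$, so $S_\omega$ is strictly increasing on the gap. Since $|v_j-x|\le\Delta v$, the single $j$-th term already forces $S_\omega'(x)\ge(\Delta v)^{-2}$. The boundary terms give $1/(v_j-x)\to-\infty$ as $x\downarrow v_j$ and $1/(v_{j+1}-x)\to+\infty$ as $x\uparrow v_{j+1}$, so $S_\omega$ sweeps all of $\R$; each pole is simple because the Poisson points are a.s.\ distinct.

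\emph{Shift-covariance (3) and the main obstacle.} Part (3) is the substitution $w=v-b$, which turns the $[-n,n]$-truncated sum for $\mathcal{T}_b\omega$ into the $[b-n,b+n]$-truncated sum for $\omega$ evaluated at $\zeta+b$; the only check is that swapping the window $[b-n,b+n]$ for $[-n,n]$ is free in the limit, which holds because their symmetric difference has bounded length $\sim 2|b|$, sits at distance $\sim n$ from the origin, and so carries $O(|b|)$ points each contributing $O(1/n)$, for a total of $O(|b|/n)\to0$. The main obstacle throughout is item (1): the cancellation that renders the symmetric series convergent is exactly the near-symmetry of the Poisson measure about the origin, and the step I expect to require the most care is passing from the $L^2$/martingale convergence at a single $\zeta$ to an almost-sure limit holding \emph{simultaneously} across $\C^+$, for which the normal-families argument is the key device.
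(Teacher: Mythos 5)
First, a point of reference: the paper does not actually prove this proposition --- it is imported verbatim as Theorem~4.1 of \cite{AW_Cauchy} --- so your proposal has to be judged on its own merits rather than against an in-paper argument. Your overall architecture is sound and is essentially the natural route: centering the divergent part $\sum_{R<|v|\le n} v^{-1}$ using the symmetry of the truncation window, $L^2$-bounded martingale convergence for that piece, absolute a.s.\ convergence of the $O(|v|^{-2})$ correction, a Montel/Vitali upgrade from a countable dense set to locally uniform convergence on $\C^+$, term-by-term differentiation in the gaps for item (2), and change of variables for item (3). Items (2) and (3) come out correctly, with one small repair needed in (3): the symmetric difference of the windows $[-n,n]$ and $[b-n,b+n]$ does \emph{not} carry $O(|b|)$ points uniformly in $n$ --- these counts are Poisson variables of mean $2|b|$ whose supremum over $n$ is a.s.\ infinite --- but by the strong law for the Poisson process (or Borel--Cantelli with the Poisson tails) the count is a.s.\ $o(n)$, which still beats the $O(1/n)$ size of each term, so your conclusion stands.

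The genuine gap is in your treatment of \eqref{eq:4.7}. The proposition asserts $\lim_{\eta \to \infty} S_\omega(i\eta) = i\pi$ \emph{for almost every realization} $\omega$, i.e.\ an almost-sure limit. Your argument writes $S_\omega(i\eta) = i\pi + M(\eta)$ with $M(\eta)$ the compensated Poisson integral and shows $\E\left(|M(\eta)|^2\right) = \pi/\eta \to 0$; this gives convergence in $L^2$, hence in probability, but not almost surely --- the exceptional null set may a priori depend on $\eta$, and neither $\Re S_\omega(i\eta)$ nor $\Im S_\omega(i\eta)$ is monotone in $\eta$, so no soft argument rescues you. To close the gap: take $\eta_k = (1+\delta)^k$; Chebyshev and Borel--Cantelli (the bound $\sum_k \pi \, \epsilon^{-2} (1+\delta)^{-k} < \infty$) give $M(\eta_k) \to 0$ a.s.; then control the oscillation on $[\eta_k, \eta_{k+1}]$ via $\left| \tfrac{d}{dt} S_\omega(it) \right| \le \sum_v |v - it|^{-2} = \Im S_\omega(it)/t$ together with the Harnack-type comparison $\Im S_\omega(it) \le (1+\delta)\, \Im S_\omega(i\eta_k)$ for $t \in [\eta_k, \eta_{k+1}]$, which yields $|M(\eta) - M(\eta_k)| \le (1+\delta)\ln(1+\delta)\, \Im S_\omega(i\eta_k)$, eventually of order $\pi\delta$ a.s.; letting $\delta \downarrow 0$ along a countable sequence finishes the proof. (Alternatively, the strong law of large numbers for $\#\left(\omega \cap [0,x]\right)$ combined with integration by parts against the Cauchy kernel proves the a.s.\ limit directly.) Everything else in item (1), including the simultaneity via normal families, is fine.
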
   

    \begin{figure}
\includegraphics[width=0.65\textwidth]{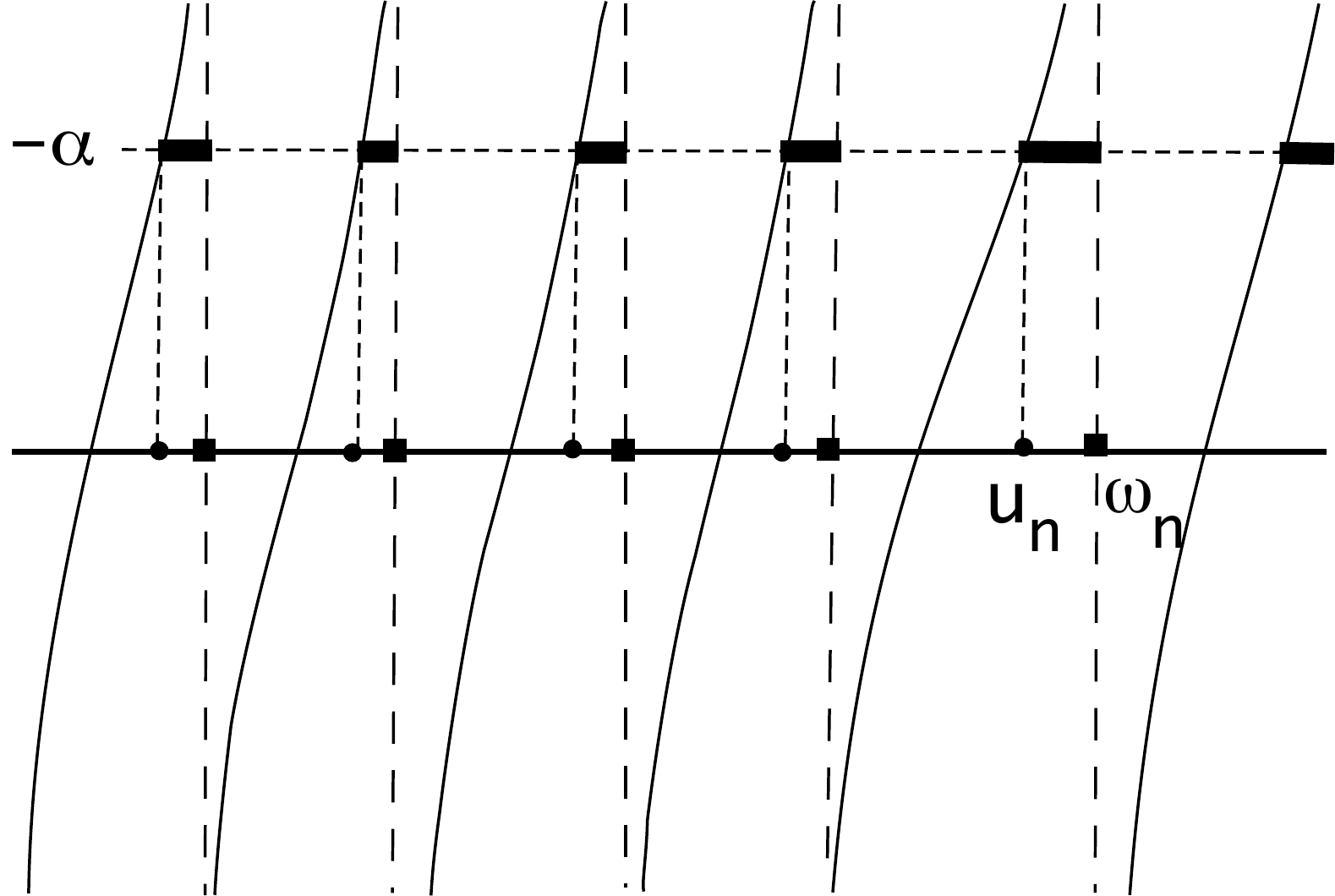}
\begin{center}
\caption{The \v{S}eba process: a schematic depiction of the  solutions of  \eqref{eq:F_alpha} which are discussed in Lemma~\ref{lem:Seba_loc}.  
The function satisfies  $S_\omega(u) \ge - \alpha$ throughout  the intervals  $[u_{n}, \omega_n)$, and $S_\omega(u) \le - \alpha$ throughout   $(\omega_{n-1},u_{n}]$.    This is used in the proof that for $|\alpha| \gg 1$  the  solutions typically lie very close to points of $\omega$, on a side determined by  $sgn(\alpha)$. }
\label{fig:S} 
\end{center}
\end{figure}

\begin{definition} \label{def:Seba}  
Let $S_\omega(\zeta)$  be the Borel-Stieltjes transform of a standard Poisson process $\omega$ whose points are labeled in increasing order relative to $0\in \R$, and for $\alpha \in \R$ let  
$(u_n(\alpha,\omega)=:u_n)$ be the set of solutions of the equation   
 \be  \label{eq:F_alpha}
S_\omega(u)  \ = \ - \alpha \,   
\ee 
ordered as in \eqref{Labels}.

{\it 1.\/}    We refer to the intertwined point process 
$(\{\omega_n, u_n\})$, as the 
{\bf \v{S}eba  process} at parameter $\alpha\in \R$.     

{\it 2.\/}   For any  given $\omega$  and  $\alpha\in \R$, we refer to the points in 
$(u_n)$ 
as the  {\bf \v{S}eba-eigenvalues},  and for each 
$n\in \Z$ 
 regard as the corresponding {\bf \v{S}eba-eigenfunction} 
the function $\Psi_n :  \omega \mapsto \C$ defined by: 
\be  \label{eq:Seba_ef}
 \Psi_n (v) \ :=  \Psi^{(\omega,\alpha)}_n(v) := \  \frac{1}{v-u_n} \, , \qquad   v\in \omega  \, .  \\ 
\ee 
\end{definition}   

The terminology is motivated by the comparison of \eqref{eq:ev} and  \eqref{eq:ef} with \eqref{eq:F_alpha} and 
\eqref{eq:Seba_ef}.    The \v{S}eba-eigenfunctions' norms  
\be 
\|\Psi_{n}\|_p \ := \  \left( \sum_{v\in \omega} |\Psi_{n}(v)|^p \right)^{1/p} \,    
\ee 
satisfy:  
\be 
 \|\Psi_{n}\|_\infty \  = \ \frac{1 }{ \dist \left (u_n(\alpha,\omega), \omega \right)  }
  \, , \qquad   \|\Psi_{n}\|^2_2 \ := \     S'_\omega(u_n) \,.
\ee 

For $p <  \infty$, $\|\Psi_{n}\|_p$  may not yet capture all the relevant information about the eigenfunctions of a finite system whose spectrum within the scaling window the
\v{S}eba process  may approximate, since the finite systems' wavefunctions have  
also weight in regions which asymptotically will be off scale (see Section~\ref{sec:main}).  
Let us nevertheless note  that due to the fact that  $\lim_{n\to \infty} \omega_n /n = 1$ 
 (by the ergodic theorem applied to the Poisson process), one has:    
\begin{lemma}  \label{lem:Seba_ef} 
For any \v{S}eba eigenvalue process at $\alpha \in \R$, with probability one 
all the \v{S}eba-eigenfunctions  are almost surely  $\ell^{1}$-delocalized, 
\be 
\|\Psi_u\|_1 /\|\Psi_{u}\|_\infty \ = \ \infty \, , \qquad 
\ee 
yet also localized in the $\ell^2$-sense, satisfying: 
\be \label{eq:posrec}
  \|\Psi_u\|_2  / \|\Psi_u\|_\infty  \  
 \  \in  (1,\infty)  
\ee 
and $ \|\Psi_u\|_\infty \in  (0,\infty)  $ 
for all $ u \in S^{-1}_\omega(\{ \alpha \})$.\\  
\end{lemma}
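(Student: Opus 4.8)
The plan is to deduce all three properties directly from the two norm identities recorded just above the lemma, namely $\|\Psi_u\|_\infty = 1/\dist(u,\omega)$ and $\|\Psi_u\|_2^2 = S'_\omega(u)$, combined with the structural information in Proposition~\ref{prop:Stieltjes} and the almost-sure asymptotics $\omega_n/n \to 1$. Everything is to be verified for a fixed realization $\omega$ lying in the full-measure set on which the conclusions of Proposition~\ref{prop:Stieltjes} and the ergodic asymptotics both hold; the ``probability one'' statement then follows. I would begin with $\|\Psi_u\|_\infty \in (0,\infty)$. By Proposition~\ref{prop:Stieltjes}(2) the function $S_\omega$ has a simple pole at each point of $\omega$ and increases monotonically from $-\infty$ to $+\infty$ between consecutive poles; hence any solution $u$ of $S_\omega(u) = -\alpha$ lies strictly between two consecutive points of $\omega$, so that $\dist(u,\omega) > 0$ and $\|\Psi_u\|_\infty = 1/\dist(u,\omega) < \infty$, while $\dist(u,\omega) < \infty$ (as $\omega$ is infinite) gives $\|\Psi_u\|_\infty > 0$.

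For the $\ell^2$-statement I would note that, since $u \notin \omega$, every term $\Psi_u(v) = (v-u)^{-1}$ is nonzero and finite, and $\|\Psi_u\|_2^2 = S'_\omega(u) = \sum_{v\in\omega}(v-u)^{-2}$. Convergence of this series follows from $\omega_n/n \to 1$: for all large $|n|$ one has $|\omega_n| \geq |n|/2$, so $|\omega_n - u| \geq |n|/4$ and the tail is dominated by a convergent multiple of $\sum n^{-2}$, giving $\|\Psi_u\|_2 < \infty$. The elementary inequality $\|\Psi_u\|_2 \geq \|\Psi_u\|_\infty$ holds for any vector, and here it is strict because $\sum_{v}\Psi_u(v)^2$ contains infinitely many strictly positive terms and so exceeds its single largest entry. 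Together with $\|\Psi_u\|_\infty \in (0,\infty)$ this yields $\|\Psi_u\|_2/\|\Psi_u\|_\infty \in (1,\infty)$.

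Finally, for the $\ell^1$-delocalization I would compare $\|\Psi_u\|_1 = \sum_{v\in\omega}|v-u|^{-1}$ with the harmonic series. Using the two-sided asymptotics $\omega_n/n \to 1$ as $n \to \pm\infty$ (valid by the ergodic theorem applied to the Poisson process on each half-line), for all large $|n|$ one has $|\omega_n| \leq 2|n|$, hence $|\omega_n - u| \leq 3|n|$ and $|\omega_n - u|^{-1} \geq (3|n|)^{-1}$. Since $\sum_{|n|\geq N}(3|n|)^{-1} = \infty$, the sum $\|\Psi_u\|_1$ diverges, and because $\|\Psi_u\|_\infty < \infty$ the ratio $\|\Psi_u\|_1/\|\Psi_u\|_\infty$ is infinite.

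The argument is a collection of elementary norm estimates, so no single step is a genuine obstacle. The only points that require care are confirming that $u$ never coincides with a pole of $S_\omega$, which is exactly what the strict monotonicity in Proposition~\ref{prop:Stieltjes}(2) supplies, and invoking the \emph{two-sided} form of $\omega_n/n \to 1$ so as to control simultaneously the convergent $\ell^2$-tail (via the lower bound $|\omega_n| \geq |n|/2$) and the divergent $\ell^1$-tail (via the upper bound $|\omega_n| \leq 2|n|$).
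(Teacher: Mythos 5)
Your proof is correct and follows exactly the route the paper intends: the paper offers no written proof of this lemma beyond the remark that it follows from $\lim_{n\to\infty}\omega_n/n = 1$ (ergodic theorem for the Poisson process) together with the norm identities $\|\Psi_n\|_\infty = 1/\dist(u_n,\omega)$ and $\|\Psi_n\|_2^2 = S'_\omega(u_n)$, and your argument is precisely the elaboration of that sketch. The comparison of the $\ell^2$-tail with $\sum n^{-2}$, of the $\ell^1$-sum with the harmonic series, and the use of the strict monotonicity between poles to guarantee $\dist(u,\omega)>0$ are all exactly what the paper's one-line justification presupposes.
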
 
  
   A different situation is found in the limiting case $|\alpha|  \to   \infty$: the \v{S}eba eigenvalues  coalesce then with the  poles of $S_\omega$ and the  \v{S}eba eigenfunctions (once normalized in  $\ell^p$-sense at some $p>1$)  get to be totally localized, each at a single point of $\omega$.  This is quantified in the following  estimate, which in Section~\ref{sec:main} will be used in the discussion of the scaling limits in situations where one finds eigenfunction localization. 
 
\begin{lemma}  \label{lem:Seba_loc}  
The \v{S}eba process at level $\alpha$ satisfies for any $ W , t > 0$: 
 \be  \label{SnUn1}
  \mathbb{P} \left( \max_{n \, : |u_{n}| \le W}   \dist \left (u_n(\alpha,\omega), \omega\cup \{ -W,W\} \right)   \,  \geq   \, t \,  \frac{2\, W}{\max \{ |\alpha|,1\}} \right) \ \leq  
  \ \frac{1}{t}  \, , 
  \ee 
  and for $t>|\alpha|$: 
  \be \label{SnUn2}
   \mathbb{P} \left( \min_{n \, : |u_{n}| \le W}   \dist \left (u_n(\alpha,\omega), \omega \right)   \,  \leq \frac{1}{t} \  \right) 
  \, \leq  \, \frac{2 W }{  t  - |\alpha |}  \, .
  \ee 
 \end{lemma}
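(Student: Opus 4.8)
The plan is to reduce both estimates to the distribution of the real boundary values of $S_\omega$, which by the stationarity in Proposition~\ref{prop:Stieltjes}(3) and the normalization \eqref{eq:4.7} is the Cauchy law of scale $\pi$; in particular, for every fixed $x\in\R$ one has $\mathbb{P}(\pm S_\omega(x) > s) = \tfrac{1}{\pi}\arctan(\pi/s)\le 1/s$ for all $s>0$ (this is the analytic input I would take from \cite{AW_Cauchy}). Both bounds then follow from a first-moment (Markov) estimate, the monotonicity and sub-level structure recorded in Proposition~\ref{prop:Stieltjes}(2) and Figure~\ref{fig:S}, together with the translation invariance of the Poisson process and the Slivnyak--Mecke formula. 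Throughout I treat $\alpha\ge 0$; the case $\alpha<0$ follows since the reflection $\omega\mapsto-\omega$ interchanges the roles of $\alpha$ and $-\alpha$ while preserving all distances.

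For \eqref{SnUn1} I would first turn the left-hand side into the Lebesgue measure of a sub-level set. Fix a realization $\omega$. By Figure~\ref{fig:S}, within each gap $(\omega_{n-1},\omega_n)$ the set $\{u: S_\omega(u)\le -\alpha\}$ is exactly the interval $(\omega_{n-1},u_n]$, whose length is $u_n-\omega_{n-1}\ge \dist(u_n,\omega)$. Intersecting with $[-W,W]$ and replacing the left endpoint by $\max(\omega_{n-1},-W)$ shows that each solution with $|u_n|\le W$ contributes a disjoint interval of length at least $\dist(u_n,\omega\cup\{-W,W\})$. Summing,
\[
\sum_{n:\,|u_n|\le W}\dist\!\left(u_n,\omega\cup\{-W,W\}\right)\ \le\ \big|\{u\in[-W,W]:S_\omega(u)\le -\alpha\}\big|.
\]
Taking expectations and using stationarity to render $\mathbb{P}(S_\omega(u)\le-\alpha)$ independent of $u$, the right side equals $2W\,\mathbb{P}(S_\omega(0)\le-\alpha)\le 2W/\max\{\alpha,1\}$. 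Since for nonnegative summands $\{\max\ge a\}\subseteq\{\sum\ge a\}$, Markov's inequality with $a=t\cdot 2W/\max\{\alpha,1\}$ yields \eqref{SnUn1}.

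For \eqref{SnUn2} I would bound the probability by the expected number of solutions that come within $1/t$ of a pole, and compute that expectation by the Mecke formula. By the monotonicity of $S_\omega$ on each gap, a solution lies within $1/t$ to the right of its left pole $\omega_{n-1}$ precisely when $S_\omega(\omega_{n-1}+1/t)\ge-\alpha$; writing $S_\omega(\omega_{n-1}+1/t)=-t+R$, with $R$ the regular part obtained by deleting the pole at $\omega_{n-1}$, this reads $R\ge t-\alpha$. Under the Palm measure at a pole $x$, Slivnyak's theorem makes the remaining configuration a standard Poisson process, so $R$ is distributed as $S_\omega(1/t)\overset{d}{=}S_\omega(0)$ by shift covariance, whence $\mathbb{P}(R\ge t-\alpha)\le 1/(t-\alpha)$. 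Integrating over $x\in[-W,W]$ bounds the expected number by $2W/(t-\alpha)$, and since $\{\min\le 1/t\}$ is contained in the event that this count is at least one, \eqref{SnUn2} follows. The sign convention (Figure~\ref{fig:S}) is what singles out this one approach side: for $\alpha\ge 0$ the solutions sit to the right of their nearest pole, so this is the relevant encounter.

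The main obstacle, and the step I would scrutinize most, is precisely this last one: I must verify that for $t>\alpha$ the minimum distance is genuinely governed by the single sign-determined side, so that the estimate carries the clean constant $2W/(t-|\alpha|)$ rather than an extra $2W/(t+|\alpha|)$ arising from the atypical opposite-side encounters (which require an anomalously small gap and are controlled by the smaller tail $\mathbb{P}(S_\omega(0)\ge t+\alpha)\le 1/(t+\alpha)$). Showing that these opposite-side events do not contribute to the stated bound---either by excluding them outright for $t>\alpha$ or through a sharper-than-union argument---is the delicate point; the analytic input (the Cauchy law and its $1/s$ tail) and the measure-theoretic bookkeeping are otherwise routine.
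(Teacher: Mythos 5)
Your proposal follows essentially the same route as the paper's proof. For \eqref{SnUn1} the two arguments coincide: the paper converts the event into the statement that the sub-level set $\{u\in[-W,W]:\sgn(-\alpha)S_\omega(u)\geq|\alpha|\}$ has Lebesgue measure at least $q$, bounds its expected measure by $2W/\max\{|\alpha|,1\}$ using the Cauchy law of $S_\omega(u)$ at each fixed $u$, and applies Markov's inequality -- exactly your sub-level-set/stationarity/Markov chain of reasoning (the paper works with the maximal gap rather than the sum, which is immaterial). For \eqref{SnUn2} the paper likewise does what you propose: it introduces $S^*_\omega$ (your regular part $R$, obtained by deleting the pole nearest to the solution), moves the evaluation point to a deterministic displacement $t^{-1}$ from the pole, invokes the independence of a Poisson point at a location from the contribution of all other points (your Slivnyak--Mecke step), and bounds the probability by the expected number of poles at which the displaced regular part exceeds $t-|\alpha|$.

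Concerning the ``delicate point'' you flag: the paper contains no sharper-than-union argument. It runs the displaced-pole estimate separately for both approach sides $\sigma=\pm1$, obtains $2W/(t-|\alpha|)$ for each, and then states the one-sided constant -- so the honest output of the paper's own argument is your two-term bound $2W/(t-|\alpha|)+2W/(t+|\alpha|)\leq 4W/(t-|\alpha|)$. In fact the clean constant cannot be rescued: at $\alpha=0$ each of the roughly $2W$ gaps in $[-W,W]$ produces a close encounter on either side with probability $\approx 1/t$, so the expected number of encounters is $\approx 4W/t$ and the probability in \eqref{SnUn2} is asymptotically $\approx 4W/t$, exceeding the stated $2W/t$. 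The stated constant is thus loose by a factor of at most $2$, which is harmless: every use of Lemma~\ref{lem:Seba_loc} in the paper (in Theorems~\ref{thm:GenCrit}, \ref{thm:resdeloc_rev}, \ref{thm:localization} and \ref{thm:localizationplusdeloc}) is qualitative, in the regime $t\to\infty$ at fixed $W$ and $\alpha$. So you should simply complete your argument with the two-sided union bound and accept the correspondingly larger constant; nothing more is available, and nothing more is needed.
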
 
\begin{proof}   
For   any   $q>0$, if    
$\dist (u_n(\alpha, \omega), \omega )    \geq     q  $  for some  $u_n(\alpha, \omega) \in [-W,W]$, then by the monotonicity of the function $S_\omega(u)$ between its singularities  (Fig.~1) it satisfies   $|S_\omega(u)|  \ge  |\alpha|$ throughout  either the interval  $[u_{n}, \omega_n)$ (in case $\alpha <0$) or throughout the interval $(\omega_{n-1} , u_{n}]$ (in case $\alpha > 0$).   This implies that, regardless of the sign of $\alpha$, 
\be     \label{CauchyBnd}
 \int_{-W}^W \indfct [ \sgn (-\alpha)  S_\omega (u)   \geq |\alpha |  ] \, du   \  \geq \    q   ~.   
\ee 
To bound the probability that \eqref{CauchyBnd} occurs, we recall that for each $u\in \R$ the value of $S_\omega(u)$ has the probability distribution of a Cauchy variable with barycentre $i\pi $~\cite{AW_Cauchy}.  Thus: 
\begin{align}  \label{Poisson1}
\mathbb{E} \left( \int_{-W}^W \indfct [ \sgn (-\alpha)  S_\omega (u)   \geq |\alpha |  ] \, du  \right)  & =  \int_{-W}^W  \mathbb{P}  \left (   \sgn (\alpha)  S_\omega (u)  \ \geq |\alpha |  \, \right) \, \, d u  \notag \\ 
&   \leq   \   \frac{2 W}{|\alpha|}  \, . 
\end{align} 
Through the Chebyshev inequality  this allows to conclude that for any $q>0$: 
\be  \label{CauchyProbest}
 \mathbb{P} \left(     
 \int_{-W}^W \indfct [ \sgn (-\alpha)  S_\omega (u)   \geq |\alpha |  ] \, du \  \geq \    q  
         \right) \ \leq  
  \ \frac{2 W}{|\alpha| \,  q} \, . 
\ee 
The choice $q=t \, \frac{2\, W}{\max\{|\alpha|,1\}}  $ yields the first bound claimed in \eqref{SnUn1}.   \\ 
 
In the proof of the second bound it is convenient to employ the function: 
\be 
 S^*_\omega(x) \  := \ \lim_{\varepsilon \to 0} 
 \sum_{v\in \omega} \frac{\indfct [ |v-x|\geq \varepsilon] }{v-x}  
 \ee 
in which  we omit the contribution of the site of $\omega$  which is closest to $x$.  \\ 

Let now $t> |\alpha |$.  We note that if       for  some $u_n(\alpha,\omega) \in [-W,W]$ 
\be  \label{distance}
 \dist \left (u_n(\alpha,\omega), \omega \right)   \le 1/t 
 \ee 
then  
 depending on the sign, $\sigma = \pm$, of the shortest path from  that point to $\omega$, 
 we find that for one of the sites $\omega_n  \in [-W,W] +\sigma \, t^{-1}$   the function 
 $S^*_\omega $  
satisfies, 
\be \label{S*}
\sigma\, \,   S^*_\omega(\omega_n - \sigma \,  t^{-1}) \ \ge \   t - |\alpha|  \, .  \\[2ex]  
\ee 

For a Poisson process the probability that there is a Poisson point within an $\varepsilon$ neighborhood of a given point $x\in \R$,  and the contribution to $S_\omega(x)$ from all other sites of $\omega$, are independent quantities.  
Thus, by a  calculation similar to \eqref{Poisson1}, for either of the two values ($\pm 1$) of $\sigma$:
\be  \label{Poisson2}
\lim_{M\to \infty}   
\mathbb{E} \left( \sum_{v\in \omega \cap ([-W,W]+\sigma \, t^{-1})} \indfct \left[ \sigma S^*_\omega(v-\sigma t^{-1})  
> \ t  -  |\alpha | \, \right]  \  \right)  \ \leq \   \frac{2W}{t- |\alpha |}    . \\[2ex]   
\ee  
Since the probability that \eqref{distance} holds for some $u_n(\alpha,\omega) \in [-W,W]$  is dominated by the mean of the number of such points   we conclude that  for any $t> |\alpha|$: 
\be  \label{head_deloc}
\lim_{M\to \infty}   
\mathbb{P} \left( \min_{n: |u_{n}|<W} \dist(u_n,\omega)  \ < \ t^{-1}  \  \right) 
  \, \leq  \, \frac{2 W }{  t  - |\alpha| } \, .    
\ee  
which completes the proof.
\end{proof}

\section{Scaling limits of Herglotz-Pick functions}    \label{sec:general}

To proceed with \eqref{eq:sec} let us  present three basic results on the possible scaling limits of the random  functions  $S_{M,\omega}(u,L)$ and  $T_{M,\omega}(u,L)$.    Both belong to the Herglotz-Pick class (HP),   
of functions with analytic extension to the upper half plane which assume there only values in $\mathbb{C}^+$  (More on this class of functions can be read in, e.g., \cite{Dono,Duren}.)
By the Herglotz representation theorem any HP function  admits a unique representation in the form: 
\be \label{eq:Herg}
F(z) = a_F z+ b_F + \int \left( \frac{1}{x-z} -  \frac{x}{1+x^2} \right) \  \mu_F(dx)
\ee
with $ a_F \geq 0 $, $ b_F \in \R $ and $ \mu_F $ a Borel measure satisfying: $\int (1+x^2)^{-1} \mu_F(dx)  <  \infty$  
In the cases discussed here the `spectral  measure' $\mu_F$  is of pure-point type, i.e.     $\mu_F$ consists of discrete point masses.   
\\ 

Associated with any open  interval $(a,b)$ is the subclass $P(a,b)$ of  functions which are analytic in $(a,b)$, or equivalently for which $\mu_F( (a,b) ) =0$.

\subsection{The oscillatory part} 

A natural topology for the set of HP functions is that of uniform  convergence on compact subsets of $\C^+$.   The topology is metrizable, with a metric in which this class of  functions forms a complete separable metric space.  Basic properties of the corresponding notion of convergence, and its extensions  to random  functions in this class, are discussed in~\cite{AW_Cauchy}.    Of particular relevance  is the following  implication of \cite[Theorems 3.1 and 6.1]{AW_Cauchy}. 
In applying it to the function $S_{M,\omega}$,  $\widehat \omega_M$ is intended to describe the symmetrically truncated process  
\be  \label{eq:Lest}
\widehat \omega_{M} \ = \  \omega_M  \mathbbm{1}[|\omega_{M,n}| < L_M ]\, , 
\ee 
with $L_M$ satisfying \eqref{eq:Lrate}  (see Lemma~\ref{lem:L}).

\begin{theorem}\label{thm:Sebaconv_rev}
Let  $\widehat \omega_M$ be a sequence of random point processes.
 Suppose that as $M\to \infty$:
\begin{enumerate}[i.]
\item $ \widehat \omega_{M} $ converge in distribution  to the standard Poisson process,  
\item for all $ \varepsilon > 0 $:
\be  \label{eq:as21_rev} 
\lim_{W \to \infty} \limsup_{M \to \infty} \mathbb{P}\left(   \left|   
\sum_{v\in \widehat \omega_M}  \frac{ \mathbbm{1}[|v| > W ] }{v}   \right| \geq \varepsilon \right) \   = 0   \, ,
\ee 
and
\be \label{eq:as22_rev}
\lim_{W \to \infty} \limsup_{M \to \infty} \mathbb{P}\left(    \sum_{v\in \widehat \omega_M}  \frac{ \mathbbm{1}[|v| > W ] }{v^2}   \geq \varepsilon \right) \   = 0   \, .
\ee 
\end{enumerate}
Then 
\be 
\widehat S_{M,\widehat \omega}(z) \ := \  \sum_{v\in \widehat \omega_M}  \frac{ 1 }{v-z}   \,.  
\ee 
converge in distribution to 
the  Stieltjes-Poisson random function $ S_\omega $.
\end{theorem}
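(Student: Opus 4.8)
The plan is to establish the convergence in distribution through the standard approximation scheme for weak convergence of metric-space-valued random variables: I approximate $\widehat S_{M,\widehat\omega}$ by its symmetric truncation to a window $[-W,W]$, pass to the limit $M\to\infty$ at fixed $W$ by a continuous-mapping argument, and only then let $W\to\infty$, controlling the truncation error uniformly in $M$. Abbreviating $\widehat S_M := \widehat S_{M,\widehat\omega}$, I set for $W>0$
\[
\widehat S_{M}^{(W)}(z) := \sum_{v\in\widehat\omega_M} \frac{\indfct[|v|\le W]}{v-z}, \qquad S_\omega^{(W)}(z) := \sum_{v\in\omega} \frac{\indfct[|v|\le W]}{v-z}.
\]
The theorem will follow once three facts are in place: (a) for each continuity level $W$ one has $\widehat S_M^{(W)}\to S_\omega^{(W)}$ in distribution in the HP topology; (b) $S_\omega^{(W)}\to S_\omega$ as $W\to\infty$ for almost every $\omega$; and (c) the truncation error $\widehat S_M-\widehat S_M^{(W)}$ is uniformly negligible in probability as $W\to\infty$, uniformly in $M$.

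For step (a), I would observe that, restricted to configurations that charge no mass at the endpoints $\pm W$, the map $\omega\mapsto S_\omega^{(W)}$ is a continuous functional from the space of locally finite point configurations (with the vague topology) into the space of HP functions equipped with the topology of uniform convergence on compact subsets of $\C^+$. Indeed the sum involves only the finitely many points in the compact set $[-W,W]$, and vague convergence with $\omega(\{\pm W\})=0$ forces these points to converge, so that $\sum_{|v|\le W}(v-z)^{-1}$ converges uniformly for $z$ in any compact $K\subset\C^+$. Since the standard Poisson process almost surely has no point at a prescribed location, almost every $W$ is a continuity level, and the continuous-mapping theorem applied to hypothesis (i) yields $\widehat S_M^{(W)}\to S_\omega^{(W)}$ in distribution for all such $W$; passing to a Skorokhod coupling renders this convergence almost sure on a common probability space and streamlines the bookkeeping.

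Step (b) is immediate from the definition \eqref{eq:Borel_Stil} of the Borel--Stieltjes transform together with Proposition~\ref{prop:Stieltjes}(1), which guarantees that the symmetric limit defining $S_\omega$ exists simultaneously for all $z\in\C^+$, almost surely; thus $S_\omega^{(W)}\to S_\omega$ locally uniformly as $W\to\infty$. Step (c) is where the tail hypotheses enter and is the heart of the matter. I would use the algebraic splitting
\[
\frac{1}{v-z} = \frac1v + \frac{z}{v(v-z)},
\]
valid for $v\neq0$. Summed over $|v|>W$, the error $\widehat S_M-\widehat S_M^{(W)}$ becomes the sum of $\sum_{|v|>W}v^{-1}$, controlled directly by \eqref{eq:as21_rev}, and a remainder whose summands satisfy $|z/(v(v-z))|\le C_K/v^2$ for all $z\in K$ once $W>2\sup_{z\in K}|z|$ (since then $|v-z|\ge|v|/2$), the latter controlled by \eqref{eq:as22_rev}. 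Hence $\sup_{z\in K}\bigl|\widehat S_M(z)-\widehat S_M^{(W)}(z)\bigr|\to0$ in probability as $W\to\infty$, uniformly in $M$, for every compact $K\subset\C^+$.

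Combining (a)--(c) through the standard three-$\varepsilon$ approximation theorem for weak convergence in a metric space completes the argument. The principal obstacle is precisely the conditional, rather than absolute, convergence of the defining sum: the functional $\omega\mapsto S_\omega$ is \emph{not} continuous, because arbitrarily distant points contribute non-negligibly through the slowly decaying $1/v$ tail, so the continuous-mapping theorem cannot be invoked directly. The resolution rests on the symmetric truncation, which makes the dangerous first-order tail small in distributional average, and on the two moment-type conditions \eqref{eq:as21_rev}--\eqref{eq:as22_rev}, which are exactly what is needed to tame, respectively, the first- and second-order parts of the large-$|v|$ expansion. A secondary technicality, the continuity of the truncated functional across the window boundary, is dealt with by restricting to continuity levels $W$, of which there are all but countably many.
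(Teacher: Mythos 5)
Your proposal is correct, but it takes a genuinely different route from the paper's proof. The paper does not argue from first principles: it invokes Theorem~6.1 of \cite{AW_Cauchy}, which reduces distributional convergence of random HP functions to the single anchoring condition $\lim_{\eta\to\infty}\limsup_{M\to\infty}\mathbb{P}\left(|\widehat S_M(i\eta)-i\pi|\geq\varepsilon\right)=0$, and then verifies that condition from hypotheses (i)--(ii): the imaginary part is controlled by \eqref{eq:as22_rev} together with \eqref{eq:4.7}, while the real part is split at the $\eta$-dependent window $W=\eta^2$ into a bulk term (which converges, by hypothesis (i), to the corresponding Poisson quantity and then vanishes as $\eta\to\infty$ by \eqref{eq:4.7}), the tail sum $\sum_{|v|>\eta^2}v^{-1}$ handled by \eqref{eq:as21_rev}, and a cross term dominated by $\eta^{-1}\Im\widehat S_M(i\eta)$. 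Your argument bypasses that citation and re-derives the convergence directly: symmetric truncation, the continuous-mapping theorem for the finite truncated sums, almost-sure convergence $S^{(W)}_\omega\to S_\omega$ from Proposition~\ref{prop:Stieltjes} and the definition \eqref{eq:Borel_Stil}, and Billingsley's convergence-together theorem, with the identity $1/(v-z)=1/v+z/(v(v-z))$ exhibiting \eqref{eq:as21_rev} and \eqref{eq:as22_rev} as exactly the first- and second-order tail controls needed, uniformly on compacta of $\C^+$. What the paper's route buys is brevity---the functional-analytic tightness issues for random HP functions are delegated to the companion paper---whereas yours buys self-containedness: beyond the a.s. existence of the symmetric limit defining $S_\omega$, only standard weak-convergence tools are used, so in effect you reconstruct the mechanism behind the cited theorem in this special case. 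Two minor remarks: since the limiting Poisson process a.s. puts no point at any fixed $\pm W$, every $W$ is a continuity level, so your restriction to all-but-countably-many $W$ is unnecessary (though harmless); and the final appeal to the convergence-together theorem tacitly uses that uniform smallness on each compact controls the metric of the HP space, which indeed holds for the standard exhaustion metric and deserves a sentence.
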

The convergence  is in the sense of probability distribution of random functions in the HP class, which are elements of a space whose topology is based on pointwise convergence on $\C^+$ (as  in \cite{AW_Cauchy}). 

\begin{proof}[Proof of Theorem~\ref{thm:Sebaconv_rev}]
Theorem~6.1  in \cite{AW_Cauchy} states that the assertion is implied  by assumption~i. and the requirement that 
for all $ \varepsilon> 0 $:
\begin{align}\label{eq:Thm6.1_rev}
 &  \lim_{\eta \to \infty} \limsup_{M \to \infty} \mathbb{P}(   \left|  \widehat S_M(i\eta) -  i \pi \right| \geq \varepsilon ) \   = 0  \, ,  
 \end{align}
where $i\pi$ is the corresponding distributional limit of $S(i\eta)$.   To  show that this condition is implied by   assumption~{\it ii.}, let us consider separately the contribution to the limit of  the real and imaginary parts of $\widehat S_M(i\eta)-i \pi$.   Condition {\it i.}  and \eqref{eq:as22_rev} entail the distributional convergence of $ \Im \widehat S_M(i\eta) \stackrel{\mathcal D}{\to}  \Im  S(i\eta) $ as $ M \to \infty $ for every $ \eta > 0 $ and hence, using \eqref{eq:4.7}, one may deduce the distributional convergence  of $\Im [ \widehat S_M(i\eta) -  i \pi ]$ to $0$  in the limit seen in~\eqref{eq:Thm6.1_rev}.    To bound the real part, we let $ W = \eta^2 $, and split: 
\begin{align}
\Re \widehat S_M(i\eta)   \ = \ & \sum_n \frac{\widehat \omega_{M,n}}{\widehat \omega_{M,n}^2+\eta^2} \mathbbm{1}[|\widehat \omega_{M,n}| \leq \eta^2 ]  +  \sum_{n}  \frac{ \mathbbm{1}[|\widehat \omega_{M,n}| > \eta^2 ] }{\widehat \omega_{M,n}}   \notag \\ &  - 
 \sum_n \frac{\eta^2}{\widehat \omega_{M,n}(\widehat \omega_{M,n}^2+\eta^2)}  \mathbbm{1}[|\widehat \omega_{M,n}| > \eta^2 ]  \, . 
\end{align}
The last term on the right side is bounded by $ \eta^{-1} \Im \widehat S_M(i\eta) $ and hence, by the just established statement it converges in probability in the double limit $ M \to \infty $ and $ \eta \to \infty $  to zero. The second term converges to $0$ in probability by assumption~{\it ii}.  By assumption~({\it i.}) for each fixed $0<\eta <\infty$ the first term converges in distribution as $ M \to \infty $ to the corresponding sum for the Poisson process $\omega_P$: $\Re  \sum_{v \in \widehat \omega_P \cap [-\eta^2, \eta^2]}  (v-i\eta)^{-1} $ and hence as $ \eta \to \infty $ to zero by \eqref{eq:4.7}.   This concludes the proof of \eqref{eq:Thm6.1_rev}.\\[1ex]
\end{proof} 

\subsection{Linearity at the tail} 

The second preparatory statement addresses possible limits of the tail contributions to Herlotz-Pick functions due to spectral components whose support moves away to infinity.    The following   implies that in their restrictions to 
any fixed window $[-W,W]$ the functions are asymptotically linear, on the relevant scale.   It would be applied to  the tail function   
$[T_{M,\omega}(u,L_M) - M \Delta_M]$ which is of interest here.

\begin{theorem}   \label{thm:linear}
Let $F(z)$ be a  function in the class $P(-L,L)$.  Then for any $W <L/10$ and $u, u_0, u_1 \in [-W,W]$: 
\be   \label{interp} 
\left |  \frac{ F(u)  -   F(u_0)}{ u - u_0 }    -  \frac{ F(u_1)  -   F(u_0)}{ u_1 - u_0 }  \right |  \ 
  \le \   \frac{6W}{L}   \, \, \frac{ F(u_1)  -   F(u_0)}{ u_1 - u_0 }  \,  
  \ee
 \end{theorem}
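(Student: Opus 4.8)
The plan is to work directly with the Herglotz representation \eqref{eq:Herg} of $F$, exploiting that membership in $P(-L,L)$ forces the spectral measure to be carried outside the symmetric interval, i.e. $\mu_F$ is supported on $\{|x| \ge L\}$. On this support every kernel $1/(x-z)$ is analytic across the real window $[-W,W]$, so I can evaluate $F$ at real arguments and form divided differences term by term. The linear coefficient $a_F$ and the constants $b_F$ and $-x/(1+x^2)$ contribute only the constant $a_F$ to any first divided difference, while the Cauchy kernel telescopes: a direct computation gives, for real $u,u_0 \in [-W,W]$,
\be
\frac{F(u) - F(u_0)}{u - u_0} = a_F + \int \frac{\mu_F(dx)}{(x-u)(x-u_0)} \, .
\ee

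Subtracting the two divided differences and telescoping once more collapses the left-hand side of \eqref{interp} into a single triple-product integral,
\be
\frac{F(u) - F(u_0)}{u-u_0} - \frac{F(u_1) - F(u_0)}{u_1 - u_0} = (u - u_1) \int \frac{\mu_F(dx)}{(x-u)(x-u_0)(x-u_1)} \, .
\ee
The comparison quantity on the right of \eqref{interp} equals $a_F + \int (x-u_0)^{-1}(x-u_1)^{-1}\,\mu_F(dx)$, and I would first record that it is nonnegative: since $|x| \ge L > W$ on the support while $u_0,u_1 \in [-W,W]$, both factors $x-u_0$ and $x-u_1$ carry the sign of $x$, so their product is strictly positive, and $a_F \ge 0$. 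This both makes the stated inequality meaningful (its right side is a genuine magnitude) and lets me drop the $a_F$ term when bounding the comparison quantity from below.

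The estimate then follows from a crude uniform bound on the extra factor. On the support $|x-u| \ge |x| - W \ge L - W > \tfrac{9}{10}L$, so $|x-u|^{-1} < \tfrac{10}{9L}$, and together with $|u - u_1| \le 2W$ and positivity of the integrand $(x-u_0)^{-1}(x-u_1)^{-1}$ I obtain
\be
\left| \frac{F(u)-F(u_0)}{u-u_0} - \frac{F(u_1)-F(u_0)}{u_1-u_0}\right| \le \frac{2W}{L-W} \int \frac{\mu_F(dx)}{(x-u_0)(x-u_1)} \le \frac{2W}{L-W}\cdot \frac{F(u_1)-F(u_0)}{u_1-u_0} \, .
\ee
Finally $W < L/10$ gives $\tfrac{2W}{L-W} < \tfrac{20W}{9L} < \tfrac{6W}{L}$, which is the claimed bound, with room to spare. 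The only genuinely delicate point is the sign bookkeeping: one must justify treating the comparison divided difference as a nonnegative quantity and passing the absolute value inside the integral, both of which rest on the support condition $|x| \ge L$. Once that is in place, the remainder is a uniform kernel estimate and the particular constant $6$ is merely a convenient rounding of $20/9$.
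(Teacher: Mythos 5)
Your proof is correct, and it takes a genuinely different route from the paper's. The paper works with derivatives: from the Herglotz representation it bounds the logarithmic derivative, $F''(u)/F'(u) \le 2/(L-W)$ on $[-W,W]$, integrates this to get the oscillation bound $F'(x)/F'(y) \le \exp\left(4W/(L-W)\right)$, and then converts that into a statement about divided differences via the mean-value theorem, finishing with the numerical inequality $e^{4x/(1-x)}-1 \le 6x$ for $x \le 1/10$. You instead telescope the Cauchy kernels exactly: the identity
\be
\frac{F(u)-F(u_0)}{u-u_0} - \frac{F(u_1)-F(u_0)}{u_1-u_0} \ = \ (u-u_1)\int \frac{\mu_F(dx)}{(x-u)(x-u_0)(x-u_1)}
\ee
reduces everything to a uniform bound on the extra factor $|u-u_1|/|x-u| \le 2W/(L-W)$ over the support $\{|x|\ge L\}$, with the sign analysis (positivity of $(x-u_0)(x-u_1)$ on the support, $a_F \ge 0$) justifying both passing the absolute value inside and comparing against the right-hand side. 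Your route is more elementary (no mean-value theorem, no exponential estimate), makes the nonnegativity of the comparison quantity explicit rather than implicit, and yields the sharper constant $20W/(9L)$ in place of $6W/L$. What the paper's approach buys in exchange is the intermediate statement that $F'$ is multiplicatively almost constant across the window, which is the conceptual form of "asymptotic linearity" invoked elsewhere in the text; but as a proof of the stated inequality, yours is complete and arguably cleaner. The only points worth adding for completeness are minor: the degenerate cases $u=u_0$ or $u_1=u_0$ are handled by continuity (the integral formulas extend), and all integrals converge absolutely because $(x-u)^{-1}(x-u_0)^{-1} = \mathcal{O}\left((1+x^2)^{-1}\right)$ on $\{|x|\ge L\}$, which is exactly the integrability built into the Herglotz representation.
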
 
\begin{proof}  
The spectral representation \eqref{eq:Herg}  yields
\be \label{eq:H'}
F'(z) = a_F + \int_{|x| > L}  \frac{1}{(x-z)^2}   \  \mu_F(dx)  \  \  ( \geq \  0) ~,  
\ee
from which it follows that for any $u\in [-W,W]$:
\be
\frac{d}{ d u}  \ln F'(u) \ = \   \frac {F''(u)}{F'(u)} \ \le  \    \frac{2}{L-W} 
\ee 
since the bound is valid for the ratio of each pair of corresponding spectral components in the integrals yielding $F''$ and $F'$, the latter being an integral of positive terms.   Hence, for any $x, y \in [-W,W] $:
\be \label{Fprime}
   \frac {F'(x)}{F'(y)}    \  \le \  \exp{\left(\frac{4W}{L-W} \right)}
\ee 
which can be restated in the form:  
\be 
\left |  F'(x) -   F'(y)  \right |  \\ 
  \le \     F'(y)     \,  \left [\exp{\left(\frac{4W}{L-W} \right)}  -1 \right]  \, .
\ee 
Thus, through an application of the mean-value theorem: 
\be   \label{interp_2} 
\left |  \frac{ F(u)  -   F(u_0)}{ u - u_0 }    -  \frac{ F(u_1)  -   F(u_0)}{ u_1 - u_0 }  \right |  \ 
  \le \      \, \, \frac{ F(u_1)  -   F(u_0)}{ u_1 - u_0 }  \,   \left[\exp{\left(\frac{4W}{L-W} \right)}  -1 \right] \, . 
\ee
Using:  $[e^{4/(1-x)} - 1] \le 6x$ for $x\le 1/10$, we get  \eqref{interp}. 
\end{proof}

The asymptotic linearity implies that one has only the following finite list of possibilities for the scaling limits of random functions in the class $P(-L,L)$ (up to a restriction to subsequences, to guarantee consistency).

\begin{definition}  \label{def:3types} 
A sequence of random monotone increasing functions  $K_M(u)$ is said to have a \emph{generalized linear limit}  if the corresponding condition in the following list holds, for all $0<W<\infty$ and $\varepsilon>0$:
\begin{enumerate} 
\item [1.]  {\bf regular linear limit}  \\   for some $a,b\in \R$, $a>0$:  
\be  \label{eq:type1}
\lim_{M\to \infty} \mathbb{P} \left( \max_{u \in [-W,W]}|K_M(u) - (a u + b)| \le \varepsilon  \right) \ = \ 1 \, 
\ee 
\item[$2_{\pm}$.] {\bf singular   $(+\infty)$ or singular $(-\infty)$ limit}  \\ 
  for any $b \in \R$, and $\sigma$ given by the corresponding choice of $\pm1$:  
\be  \label{eq:type2+}
\lim_{M\to \infty} \mathbb{P} \left( \min_{u \in [-W,W]}   \sgn(\sigma) K_M(u)     \, > b   \right) \ = \ 1  \,  
\ee 
\item [3.] {\bf singular limit with transition}  \\  
 there is a ``transition point'', $\tau \in \R$, such that for any $b\in \R$ and $\varepsilon >0$:  
 \be
 \lim_{M\to \infty} \mathbb{P} \left(   \pm K_M(\tau \pm \varepsilon )  \, >  b   \right)  \ = \  1 \qquad .  
\ee
for each value of the $\pm$ sign. 
\end{enumerate} 
\end{definition} 

In essence, the above is the list of possible scaling limits of  linear  functions, $\alpha(u) = a u+b $ with $a\ge 0$, allowing the parameters to assume diverging values.   
The distinction here is between the  cases: 
\begin{enumerate} 
\item both $a$ and $b$ assume a finite limit,  
\item  one of the parameters dominates over the other (either $|a|\ll |b|$, or $|a|\ll |b|$),  with $\pm = \text{sign}(a/b)$), 
\item   the two diverge but the ratio $a/b$ has a finite limit. \\    
\end{enumerate}  

One may note that for functions to which Theorem~\ref{thm:linear} applies, just pointwise convergence, in the sense that 
 \be  \label{eq:Klimit} 
\ {\mathcal D} - \lim_{M\to \infty}  K_{M,\omega}(u)  \  \in \ \R\cup\{-\infty\}\cup \{+\infty\}
\ee 
for a dense collection of $u\in \R$, implies the uniformity which is required  in Definition~\ref{def:3types}, cases (1) and (3). (The argument can be seen in the proof of Lemma~\ref{lem:genlinlimit} below.)

\subsection{Convergence to the \v{S}eba process} 

Theorems~\ref{thm:Sebaconv_rev} and \ref{thm:linear} carry the following  implication  
for the limiting behavior of the solutions of an equation of the form \eqref{eq:sec}.

\begin{theorem} \label{thm:GenCrit}
Let  $\widehat \omega_M $ be  a sequence of point processes satisfying the assumptions {\it i} and {\it ii.} of Theorem~\ref{thm:Sebaconv_rev} and $\widehat S_{M,\widehat \omega}(u)$ the sequence of the related functions 
\be
\widehat S_{M,\widehat \omega}(u) \ = \    \sum_{v\in \widehat \omega_M}  \frac{ 1 }{v-u}   \, , 
\ee 
and let $K_{M}(u)$ be a sequence   of random Herglotz-Pick  functions with generalized linear limiting behavior.
Then the intertwined point process consisting of the support of $\widehat \omega_M$ and  the solution set of equation 
\be \label{eq:TheEq}
\widehat S_{M,\widehat \omega}(u) + K_{M}(u) \ = \ 0 
\ee
has the following limit:
\begin{enumerate} 
\item If $K_{M}$ has a  regular linear limit   given by a constant function 
(i.e. type $1$ limit  in the sense of Theorem~\ref{thm:linear} with $a=0, |b| < \infty$),  
then the intertwined point processes converges  to the \v{S}eba  process at the corresponding level ($  \alpha = b $).  
\item If $K_{M}$ has a singular $(+\infty)$ or singular $(-\infty)$ limit   (type  $2_\pm$), then the solution set coincides asymptotically with $\widehat \omega$.   
\item If the limit is singular with transition (type $3$) then the solution set converges to the limit of $\widehat \omega \cup \{\tau \}$, where the added point is the transition point of $K$.  
\end{enumerate} 
\end{theorem}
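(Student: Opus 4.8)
The plan is to pass to the joint distributional limit and then argue pathwise about the location of the roots within each gap of the limiting pole configuration. By Theorem~\ref{thm:Sebaconv_rev} the functions $\widehat S_{M,\widehat\omega}$ converge in distribution to the Stieltjes--Poisson function $S_\omega$, and by hypothesis $K_M$ has one of the three generalized linear limits of Definition~\ref{def:3types}, the relevant convergence being uniform on each window $[-W,W]$ and holding in probability. Since these are statements in the topology of \cite{AW_Cauchy}, I would invoke the Skorokhod representation to realize $\widehat\omega_M$, $\widehat S_{M,\widehat\omega}$ and $K_M$ on a common probability space so that, almost surely, $\widehat\omega_M\to\omega$, $\widehat S_{M,\widehat\omega}\to S_\omega$ locally uniformly on $\C^+$, and $K_M$ attains its linear limit on every window; it then suffices to prove the claimed convergence of the intertwined root--pole process pathwise. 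A first observation, used in all cases, is that each of the three limits keeps $K_M$ pole-free on a fixed window for large $M$ (for types $1$ and $2_\pm$ a pole of a monotone HP function would force nearby values of both signed infinities, contradicting near-constancy or one-signed divergence; for type $3$ this follows from the asymptotic linearity behind Theorem~\ref{thm:linear}). Hence inside $[-W,W]$ the poles of the HP function $\widehat S_{M,\widehat\omega}+K_M$ are exactly $\widehat\omega_M\cap[-W,W]$, and this function increases from $-\infty$ to $+\infty$ across each gap between consecutive poles; consequently \eqref{eq:TheEq} has exactly one solution per gap, and the analysis reduces to locating that solution.

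For case (1), $K_M\to b$ uniformly on the window, so $\widehat S_{M,\widehat\omega}+K_M\to S_\omega+b$ and the solution set should converge to the level set $\{S_\omega=-b\}$, which by Definition~\ref{def:Seba} and \eqref{eq:F_alpha} is the \v{S}eba eigenvalue set at $\alpha=b$. To make the convergence of roots rigorous I would use that on each gap $S_\omega$ is strictly increasing with slope $S_\omega'\ge 1/|\Delta v|^2>0$ (Proposition~\ref{prop:Stieltjes}(2)); the crossing $S_\omega(u)=-b$ is therefore transversal, and local uniform convergence of $\widehat S_{M,\widehat\omega}-(-K_M)$ together with this transversality pins the unique root in each gap to the corresponding \v{S}eba root. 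Taking $W\uparrow\infty$ and using the intertwining then gives convergence of the full intertwined process.

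For cases ($2_\pm$) and (3) the function $K_M$ diverges, so I would argue directly by monotonicity and the intermediate value theorem rather than through convergence of the sum. In case $2_+$ one has $-K_M\to-\infty$ uniformly on $[-W,W]$ while $S_\omega$ stays bounded on any subinterval bounded away from $\omega$; hence for large $M$ no root lies at positive distance from the poles, and since $\widehat S_{M,\widehat\omega}(u)\to-\infty$ as $u$ approaches the left endpoint of its gap, the unique root is pushed onto that endpoint, so the solution set collapses onto $\widehat\omega$ (quantitatively as in the slope estimate behind Lemma~\ref{lem:Seba_loc}); case $2_-$ is symmetric. In case (3), every gap lying strictly to one side of the transition point $\tau$ carries a constant divergent sign of $-K_M$ and so behaves exactly as in case (2), its root collapsing onto a pole; for the single gap $(\omega_k,\omega_{k+1})\ni\tau$ one has $-K_M\to+\infty$ just left of $\tau$ and $-K_M\to-\infty$ just right of $\tau$, while $\widehat S_{M,\widehat\omega}$ stays bounded near $\tau$, so for every $\varepsilon>0$ the increasing difference changes sign between $\tau-\varepsilon$ and $\tau+\varepsilon$ and the root converges to $\tau$. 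Counting one root per gap, the limiting solution set is $\omega\cup\{\tau\}$, as claimed; one also notes $\tau\notin\omega$ almost surely, since a Poisson process has no atom at a fixed point.

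The step I expect to be the main obstacle is the passage from convergence of the Herglotz--Pick functions in the $\C^+$-topology to locally uniform convergence along the real axis between the limiting poles, which is what legitimizes the transversality and intermediate-value arguments. I would handle it by Schwarz reflection: on a complex neighborhood of any compact real interval $I$ with $\mu_{S_\omega}(I)=0$, the vague convergence of the spectral measures keeps the poles of $\widehat S_{M,\widehat\omega}$ away from $I$ and yields a uniform bound, after which a normal-families (Vitali) argument upgrades the $\C^+$ convergence to uniform convergence on $I$; alternatively this continuity of the root map in the HP topology may be read off from \cite{AW_Cauchy}. Finally, tightness of the intertwined point processes (inherited from $\widehat\omega_M\to\omega$) together with the pathwise identification of every finite-window limit yields convergence in distribution of the full process.
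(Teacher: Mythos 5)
Your proposal is correct, and its overall architecture coincides with the paper's proof: Skorokhod coupling to reduce everything to pathwise statements, the observation that $\widehat S_{M,\widehat\omega}+K_M$ increases from $-\infty$ to $+\infty$ across each gap of $\widehat\omega_M$ (hence exactly one root per gap), the slope bound of Proposition~\ref{prop:Stieltjes}, item~2, to pin each root to the corresponding \v{S}eba root in case (1), and collapse-onto-poles respectively sign-change arguments for cases ($2_\pm$) and (3), including your correct remark that $\tau\notin\omega$ almost surely. The one genuine divergence is precisely the step you single out as the main obstacle: upgrading distributional convergence of the Herglotz--Pick functions to convergence along the real line between the limiting poles. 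You propose Schwarz reflection plus a Vitali/normal-families argument; this is workable, but it requires locally uniform bounds on a complex neighborhood of the interval, and such bounds are not automatic from pole-avoidance alone --- they need the tail control of hypothesis~ii of Theorem~\ref{thm:Sebaconv_rev}, so this route is heavier than your sketch suggests. The paper sidesteps complex analysis entirely: it applies the Skorokhod representation at the countably many rationals of $[-W,W]$, so that $\widehat S_{M,\widehat\omega}(u)\to S_\omega(u)$ almost surely on a dense set, and then uses the elementary fact that continuous monotone functions converging on a dense subset of an interval (disjoint from the poles, which is guaranteed for large $M$ by $\widehat\omega_M\to\omega$) to a continuous monotone limit must converge uniformly there. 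This Dini-type monotonicity upgrade is all that the transversality argument needs, and it is what dissolves your ``main obstacle''; if you replace your reflection/Vitali step by it, your write-up becomes essentially the paper's proof.
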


It may be added that  the outcome is not affected by possible correlations between the two random functions $H_M$ and $K_M$.
 
\begin{proof}  
1. Let $W > 0$ be a fixed, but arbitrarily large number. 
According to Theorem~\ref{thm:Sebaconv_rev}, for any fixed $u \in \mathbb{R}$, $\widehat S_{M,\widehat \omega}(u)$ converges  in distribution to the 
Stieltjes-Poisson random function $S_\omega(x)$.  
By Skorokhod's representation theorem \cite[Section~1.6]{Bil}, $\widehat S_{M,\widehat \omega}(u)$
and $S_\omega(u)$ can be realized on the same probability space (i.e. the two probability measures ``coupled''), so that $\widehat S_{M,\widehat \omega}(u) \to S_\omega(u)$ with probability one.  The coupling can be implemented simultaneously for all $u \in [-W, W] \cap \mathbb{Q}$ (which forms a countable collection of sites), i.e. 
with probability one:
\be\label{eq:ratconv_rev}
 \forall  u \in [-W, W] \cap \mathbb{Q} : \quad \widehat S_{M,\widehat \omega}(u) \to S_\omega(u)~.
\ee
Let us show that with probability one
\be\label{eq:allconv_rev} \forall u \in [-W, W] \setminus \omega : \quad 
\quad  \widehat S_{M,\widehat \omega}(u) \to S_\omega(u)~.
\ee
By assumption i.\ 
By the assumed convergence of $,\widehat \omega_M$ to the Poisson process, and an additional application of the Skorokhod representation theorem, we may additionally  assume that with probability one $\widehat \omega_{M}  \to \omega $.   Any point $u \in [-W, W] \setminus \omega$ has a neighborhood $I_u$ which is disjoint from $\omega$. Therefore 
any smaller neighborhood $I'_u \subset I_u$ is disjoint from all $\widehat \omega_M$ for all $M\geq M_u$ sufficiently large. The functions $\widehat S_{M,\widehat \omega}$ are monotone 
increasing on $I'_u$ (for $M \geq M_u$), and converge
to a monotone increasing function $S_\omega$ on 
a dense subset of $I'_u$ by~\eqref{eq:ratconv_rev}.  
Since $\widehat S_{M,\widehat \omega}$ is continuous on $I'_u$, this implies that $\widehat S_{M,\widehat \omega}$
converge to $S_\omega$ uniformly on $I'_u$, and in particular $\widehat S_{M,\widehat \omega}(u) \to S_\omega(u)$, proving
\eqref{eq:allconv_rev}.

Let $\{u_{M,n}\}$ be the finite collection of solution of the equation \eqref{eq:TheEq} within   $[-W, W]$.   
According to \eqref{eq:allconv_rev}, for any $\epsilon > 0$ there exists $M_\epsilon>0$ such that for all $ n $ and all $M \geq M_\epsilon$:
\be
| K_{M}(u_{M,n}) - b| < \epsilon~. 
\ee

Between any pair of consecutive poles $v_{M}$, $v_{M}'$ of $\widehat S_{M,\widehat \omega}$ the function 
$\widehat S_{M,\widehat \omega}(u) + K_{M}(u) $ is continuous and monotone increasing from 
$(-\infty)$ to $(\+\infty)$.  Therefore Eq.~\eqref{eq:TheEq} has 
a  unique solution $u_{M,n}$ there. 
By    
item~2 of Proposition~\ref{prop:Stieltjes} (which bounds the derivative of $\widehat S_{M,\widehat \omega}$ from below), for all $ M \geq M_\epsilon $:
\be
|u_{n} - u_{M,n}| \leq \epsilon |v_{M}-v_{M}'|^2 ~. 
\ee
In combination with assumption~i., this implies the stated convergence to the \v{S}eba process.\\ 

2.  For $|b| \gg 1$, the above argument implies that the solutions of  \eqref{eq:TheEq} are at distances of order $O(|b|^{-1})$ from the sites of $\omega$ (the sign of $b$ determining the side from which the approach happens).   Direct adaptation of the argument to  the limit  $|b| \to \infty$ implies the statement in case $2$. \\ 

3.  In case $3$, the functions   $K_{M}(u_{M,n})$ exhibit a sharp transition whose location converges to $\tau $ from   $(-\infty)$ to $(+\infty)$ (as asymptotic values).   Thus, by an extension of the above argument,  the solution set converges to $\omega\cup \tau$.  
\end{proof}

\section{Relevant estimates}   \label{sec:mainresults}

\subsection{The oscillatory term $S_{M,\omega}(u)$} 

The following estimate shows that under the condition \eqref{eq:Lrate}, the assumptions of  Theorem~\ref{thm:Sebaconv_rev} apply to the function $S_{M,\omega}(u,L_M)$ which is defined in \eqref{def:SandT}.   Consequently, in the scaling limit this component of the characteristic equation converges to the Stieltjes-Poisson random function.

\begin{lemma} \label{lem:L} 
Let $\mathcal E_M$  be a sequence of energies in $(-\lambda,\lambda)$,  $L_M$ a sequence of cutoff values satisfying 
\be \label{eq:Lbound}
\lim_{M\to \infty} \, s_M     =  0\, ,  \qquad s_M  := \ L_M \,  \frac{\sqrt{ \ln M } } {M^{1-(\mathcal E_M/\lambda)^2 } }  
\ee 
and $\omega_{M,n}$  the process of rescaled potential values defined relative to  $\mathcal E_M$ 
 through the scaling relation \eqref{eq:potpoint}.  
Then the quantities 
\be \label{eq:Y}  
Y_{M,W} = \sum_{n} \frac{ \indfct [W < |\omega_{M,n}| < L_M] } {\omega_{M,n} } \, 
\ee 
 are of mean and variance satisfying, for each  $W<\infty$: 
 \be 
\lim_{M \to \infty} \mathbb{E} \left[ Y_{M,W} \right] \ = \   0  \, ,   \qquad  
\lim_{M \to \infty}  {\rm Var} (Y_{M,W})  \  \le \    \frac{\text{C}}{W}
\ee  
with a uniform $\text{C} < \infty$. 
\end{lemma}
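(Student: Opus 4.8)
The plan is to exploit the fact that $\{\omega_{M,n}\}$ is a \emph{binomial} point process --- the image of the $M$ i.i.d.\ Gaussians $V(x)$ under the affine map $v\mapsto(\kappa_M v-\mathcal{E}_M)/\Delta_M$ implicit in~\eqref{eq:potpoint} --- so that the first two moments of any additive functional $Y=\sum_n g(\omega_{M,n})$ are given exactly by
\begin{equation}
\mathbb{E}[Y] = \int g(\omega)\,\mu_M(\omega)\,d\omega, \qquad \text{Var}(Y) = \int g(\omega)^2\,\mu_M(\omega)\,d\omega - \frac{1}{M}\Big(\int g(\omega)\,\mu_M(\omega)\,d\omega\Big)^2,
\end{equation}
with $g(\omega)=\mathbbm{1}[W<|\omega|<L_M]/\omega$ and $\mu_M$ equal to $M$ times the common one-point density of the $\omega_{M,n}$. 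First I would compute $\mu_M$ explicitly: since $V$ has the Gaussian density $\varrho$, a change of variables gives
\begin{equation}
\mu_M(\omega) = M\,\frac{\Delta_M}{\kappa_M}\,\varrho\!\Big(\frac{\Delta_M\omega+\mathcal{E}_M}{\kappa_M}\Big) = \exp\!\Big(-\beta_M\,\omega - \gamma_M\,\omega^2\Big), \qquad \beta_M := \frac{\mathcal{E}_M\Delta_M}{\kappa_M^2},\ \ \gamma_M := \frac{\Delta_M^2}{2\kappa_M^2},
\end{equation}
where the normalization $\Delta_M=\kappa_M/(M\varrho(\mathcal{E}_M/\kappa_M))$ is precisely what forces $\mu_M(0)=1$.

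The heart of the argument is a uniform control of $\mu_M$ on the window $|\omega|<L_M$, and this is where the hypothesis $s_M\to0$ enters. Substituting $\kappa_M=\lambda/\sqrt{2\ln M}$ and $\Delta_M=\sqrt{2\pi}\,\kappa_M/M^{1-(\mathcal{E}_M/\lambda)^2}$ into $\beta_M,\gamma_M$, I would verify the two elementary bounds
\begin{equation}
\sup_{|\omega|\le L_M}|\beta_M\,\omega| \ \le\ |\beta_M|\,L_M \ \le\ \sqrt{4\pi}\,\frac{|\mathcal{E}_M|}{\lambda}\,s_M \ \le\ \sqrt{4\pi}\,s_M, \qquad \sup_{|\omega|\le L_M}\gamma_M\,\omega^2 \ \le\ \pi\,\frac{s_M^2}{\ln M},
\end{equation}
both of which tend to $0$. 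Consequently $\mu_M(\omega)=1+O(s_M)$ uniformly for $|\omega|<L_M$, and in particular $\sup_{|\omega|<L_M}\mu_M(\omega)\to1$.

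For the mean I would use that $g(\omega)=1/\omega$ is \emph{odd} on the symmetric domain $\{W<|\omega|<L_M\}$. The even factor $e^{-\gamma_M\omega^2}$ integrated against $1/\omega$ vanishes by symmetry, so only the odd part of $e^{-\beta_M\omega}$ survives,
\begin{equation}
\mathbb{E}[Y_{M,W}] = -2\int_W^{L_M}\frac{\sinh(\beta_M\omega)}{\omega}\,e^{-\gamma_M\omega^2}\,d\omega,
\end{equation}
and since $|\beta_M\omega|\le\sqrt{4\pi}\,s_M$ is small throughout the range one has $|\sinh(\beta_M\omega)|\le 2|\beta_M|\,\omega$, whence $|\mathbb{E}[Y_{M,W}]|\le 4|\beta_M|\,L_M=O(s_M)\to0$.

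For the variance I would simply discard the subtracted term (it is nonnegative) and estimate
\begin{equation}
\text{Var}(Y_{M,W}) \ \le\ \int_{W<|\omega|<L_M}\frac{\mu_M(\omega)}{\omega^2}\,d\omega \ \le\ \Big(\sup_{|\omega|<L_M}\mu_M\Big)\cdot 2\int_W^{L_M}\frac{d\omega}{\omega^2} \ \le\ \big(1+o(1)\big)\,\frac{2}{W},
\end{equation}
so that $\limsup_{M}\text{Var}(Y_{M,W})\le 2/W$, which is the asserted bound with the uniform constant $C=2$. I expect the only genuinely delicate point to be the uniform intensity estimate of the second paragraph --- specifically, verifying that the \emph{odd} correction $\beta_M L_M$, which is the sole source of a nonzero mean, is of order $s_M$; this is a direct but careful bookkeeping of the definitions of $\kappa_M$ and $\Delta_M$, after which both moment bounds follow from symmetry and a trivial integral.
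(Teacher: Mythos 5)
Your proof is correct and follows essentially the same route as the paper's: both reduce the mean to the odd (antisymmetric) part of the rescaled point density over the symmetric window $\{W<|\omega|<L_M\}$ — which the hypothesis $s_M \to 0$ makes $O(s_M)$ — and both bound the variance via independence by the second-moment integral $\int \mu_M(\omega)\,\omega^{-2}\,d\omega = O(1/W)$. The only difference is cosmetic: you work in microscopic coordinates with the explicit Gaussian intensity $e^{-\beta_M \omega - \gamma_M \omega^2}$ (hence the $\sinh$ identity), whereas the paper works in macroscopic variables and bounds the density difference $\varrho(\mathcal E_M/\kappa_M + v) - \varrho(\mathcal E_M/\kappa_M - v)$ by a generic log-derivative estimate, which keeps its argument adaptable to non-Gaussian densities.
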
 
\begin{proof} 
Taking the expectation value and expressing it in terms of the macroscopic variables, one gets 
\begin{align}    \label{eq:L1}
\mathbb{E} \left[ Y_{M,W} \right] &=   \frac{M \Delta_M}{\kappa_M}  \, 
\int _{ W\Delta_M/\kappa_M}  ^{L_M \Delta_M/\kappa_M} \, 
    \frac{\varrho(\frac{\mathcal E_M}{\kappa_M}+v) - \varrho(\frac{\mathcal E_M}{\kappa_M}-v)}{v} \, dv \notag \\
    &\le    C \, \frac{M\Delta_M}{\kappa_M}  \,   \max_{|x|\leq L_M \Delta_M} \varrho\left(\frac{\mathcal E_M+x}{\kappa_M}\right) \  
   \, \frac{|\mathcal E_M| + |x|}{\kappa_M}
    \,  \frac{L_M \Delta_M}{\kappa_M}  \notag \\[2ex] 
    &  \le  C\,    \frac{\Delta_M}{\kappa_M^ 2} \, L_M   \left(1+  s_M   \right)^2\ = \ C\ s_M \,  \left(1+  s_M   \right)^2 ,     
\end{align}       
where $ C< \infty  $ is a constant and use was made of \eqref{eq:delta} and the observation that the above shift by $x$ does not cause a significant change in  $\log \varrho$.    

To estimate the variance,  we note that despite  our  labeling  convention the   sum  in \eqref{eq:Y}  
can also be regarded as over  $M$ independently and identically distributed variables $\omega_n$ with values in $ \R$.   The variables' independence easily implies for the sum's variance: 
\begin{align}  \label{eq:L2}
{\rm Var} (Y_{M,W})  & \leq  \  
   \mathbb{E} \left[  \sum_{n} \frac{ \indfct [W < |\omega_{M,n}| < L_M] } {|\omega_{M,n} |^2}  \right] 
\notag \\
& \leq \frac{M \Delta_M^2}{\kappa_M^2} \int _{ W\Delta_M/\kappa_M}  ^{L_M \Delta_M/\kappa_M}    \frac{\varrho(\frac{\mathcal E_M}{\kappa_M}+v) + \varrho(\frac{\mathcal E_M}{\kappa_M}-v)}{v^2} \, dv \notag \\
& \leq  C \ \frac{M \Delta_M^2}{\kappa_M^2} \max_{|x|\leq L_M \Delta_M} \varrho\left(\frac{\mathcal E_M+x}{\kappa_M}\right) \frac{\kappa_M }{W \Delta_M}   \notag \\
& \leq  \ \frac{C}{W} \,  \left(1+  s_M   \right) ,  
\end{align}
where again $ C< \infty  $ is a constant.
\end{proof}

\subsection{The tail term} 

The other component of the characteristic equation \eqref{eq:sec}  
is the function 
\be  \label{def:R} 
R_{M,\omega}(u) \ := \  T_{M,\omega}(u,L_M)   - M \Delta_M ~. 
\ee   
Let us first consider its mean value.

\begin{lemma}\label{lem:meanT}
Let $\mathcal E_M$  be a convergent sequence of energies, with  $\lim_{M\to \infty} \mathcal E_M  \ \in \ (-\lambda,\lambda)$, and  $L_M \to \infty $ a sequence of cutoff values satisfying~\eqref{eq:Lbound}. Then
for any $ u \in \R $:
\begin{multline} 
\lim_{M\to \infty}   \,  \left( \mathbb{E}\left[T_M(u,L_M) \right]  - M \Delta_M \widehat{\varrho}_M( \mathcal E_M + u \Delta_M)   \right)  \\ 
= \ - \lim_{M\to \infty}   \, \mathbb{E}\left[S_M(u,L_M) \right]   \ =   \ 0 \,    
\end{multline} 
with  $\widehat{\varrho}_M$ the function 
\be \label{eq:Hilb} 
\widehat{\varrho}_M( \mathcal E) :=   \dashint\frac{\varrho(v) \, dv}{ \kappa_M  v-  \mathcal E } \, , 
\ee
the dash in $\dashint$ indicating a principal-value integral (i.e., up to scaling, the Hilbert transform of $\varrho$).  
\end{lemma}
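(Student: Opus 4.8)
The plan is to reduce the entire statement to the single limit $\lim_{M\to\infty}\mathbb{E}[S_M(u,L_M)]=0$, after which the first equality becomes an exact algebraic identity. Writing the full sum as $\sum_n(\omega_{M,n}-u)^{-1}=S_M(u,L_M)+T_M(u,L_M)$ and taking expectations, I would note that $\mathbb{E}[T_M]$ is an absolutely convergent integral, because its summands are bounded (the constraint $|\omega_{M,n}|>L_M\to\infty$ keeps them away from the fixed pole $u$), whereas $\mathbb{E}[S_M]$ makes sense only as a principal value, since $(\omega-u)^{-1}$ is not absolutely integrable against the one-point density near $\omega=u$. With that convention, the change of variables from the rescaled points $\omega$ back to the macroscopic value $v=(\mathcal E_M+\omega\Delta_M)/\kappa_M$ turns the expected full sum into $M\Delta_M\widehat\varrho_M(\mathcal E_M+u\Delta_M)$; splitting the integral at $|\omega|=L_M$ then gives $M\Delta_M\widehat\varrho_M(\mathcal E_M+u\Delta_M)=\mathbb{E}[S_M]+\mathbb{E}[T_M]$, which is exactly the first claimed equality.

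For the remaining limit I would write
\[
\mathbb{E}[S_M(u,L_M)]=M\dashint_{|\omega|\le L_M}\frac{g_M(\omega)}{\omega-u}\,d\omega,\qquad g_M(\omega):=\frac{\Delta_M}{\kappa_M}\,\varrho\Big(\frac{\mathcal E_M+\omega\Delta_M}{\kappa_M}\Big),
\]
where $g_M$ is the rescaled one-point density, normalized by \eqref{eq:delta} so that $M\,g_M(0)=1$. The decisive manipulation is to subtract the value of the density at the singularity, $g_M(\omega)=g_M(u)+[g_M(\omega)-g_M(u)]$. The first piece contributes $M\,g_M(u)\dashint_{|\omega|\le L_M}(\omega-u)^{-1}d\omega=M\,g_M(u)\,\ln\frac{L_M-u}{L_M+u}$; since $\mathcal E_M\in(-\lambda,\lambda)$ forces $\Delta_M/\kappa_M=\sqrt{2\pi}/M^{1-(\mathcal E_M/\lambda)^2}\to 0$, one has $M\,g_M(u)\to 1$, while the logarithm tends to $0$ as $L_M\to\infty$, so this piece vanishes.

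The second piece, $M\int_{|\omega|\le L_M}\frac{g_M(\omega)-g_M(u)}{\omega-u}\,d\omega$, is now a genuine non-singular integral. I would bound its integrand by the mean value theorem, $|g_M'(\xi)|\le(\Delta_M/\kappa_M)^2\sup|\varrho'|$, and use $\varrho'(v)=-v\varrho(v)$ together with the fact (exactly as in the proof of Lemma~\ref{lem:L}) that $\log\varrho$ varies negligibly across the window $|\omega|\le L_M$, to replace $\sup|\varrho'|$ by $C\,(1+|\mathcal E_M|/\kappa_M)\,\varrho(\mathcal E_M/\kappa_M)$. Multiplying by the interval length $2L_M$ and invoking $M\,g_M(0)=1$ collapses the bound to $\lesssim L_M(\Delta_M/\kappa_M)(1+|\mathcal E_M|/\kappa_M)$, which is precisely of order $s_M\to 0$ under \eqref{eq:Lbound}.

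The main obstacle is the principal-value nature of $\mathbb{E}[S_M]$: the summands fail to be integrable near $\omega=u$, so one cannot simply estimate $\mathbb{E}|S_M|$. The device that resolves this is the subtraction of $g_M(u)$, which isolates the singularity into an explicit, exactly-cancelling logarithm and leaves a regular integrand controlled by $\sup|\varrho'|$. All the remaining estimates then run in parallel with the mean and variance computations already carried out for Lemma~\ref{lem:L}, with the same scale $s_M$ governing the rate of decay.
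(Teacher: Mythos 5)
Your proposal is correct and takes essentially the same route as the paper: both reduce the claim to showing $\lim_{M\to\infty}\mathbb{E}\left[S_M(u,L_M)\right]=0$ (the first equality being an identity once $\mathbb{E}\left[S_M\right]$ is read as a principal value), and both then subtract the density value at the singular point, so that the constant part yields an explicit logarithm $\ln\frac{L_M-u}{L_M+u}\to 0$ while the difference part is a regular integral bounded, via the derivative of the Gaussian density exactly as in the estimates of Lemma~\ref{lem:L}, by a quantity of order $s_M\to 0$ under \eqref{eq:Lbound}. Your explicit handling of the principal-value issue is a point the paper leaves implicit, but the mathematics is the same.
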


\begin{proof} 
The first equality follows directly from the definition of the different terms.  Taking advantage of that, we turn to estimate the  expectation value $\mathbb E (S_M(u,L_M) )$.  It can be written as a sum of   two terms.  One of them is
\begin{align}  
 M \Delta_M \ \varrho\left(\frac{\mathcal{E}_M+ u \Delta_M}{\kappa_M}\right)  \dashint  \frac{ \mathbbm{1}[|\kappa_M v - \mathcal{E}_M| \leq \Delta_M L_M ] }{\kappa_M v - \mathcal{E}_M - u \Delta_M}  dv \notag \\
 = \frac{\varrho\left(\frac{\mathcal{E}_M+ u \Delta_M}{\kappa_M}\right) }{\varrho\left(\frac{\mathcal{E}_M}{\kappa_M}\right) } \ln \frac{L_M-u}{L_M} \to \ 0 \, .  
\end{align}
This tends to zero since $ L_M \to \infty $, and the effect of the shift in the argument of $\varrho$ is controllable as in \eqref{eq:L1}.

The second term in modulus equals 
\begin{align}
&M \Delta_M \, 
\left| \int \indfct[|\kappa_M v - \mathcal{E}_M| \leq \Delta_M L ] \;   \frac{\varrho(v)- \varrho((\mathcal{E}_M+ u \Delta_M)/\kappa_M) }{\kappa_M v - \mathcal{E}_M - u \Delta_M} dv \right| \notag \\
&\le   C \, \frac{M\Delta_M}{\kappa_M}   \max_{|x|\leq L_M \Delta_M} \varrho\left(\frac{\mathcal E_M+x}{\kappa_M}\right)  
   \, \frac{|\mathcal E_M| + |x|}{\kappa_M}
    \frac{L_M \Delta_M}{\kappa_M}  \leq  C    \frac{L_M \sqrt {\ln M} }{M^{1-( \mathcal E_M/\lambda)^2 } } \, .   
\end{align}       
The last inequality is the same as in \eqref{eq:L1}. The rights side vanishes in the limit $ M \to \infty $ by~\eqref{eq:Lbound}.
\end{proof}

We shall  be interested in locating the energies at which $\mathbb{E}\left[ R_{M,\omega}(u) \right] $ is of order $1$ (which is where significant wave function hybridization may occur).   In terms of the mean, this corresponds to:  
\be  \label{eq:Tmean} 
 \ M\, \Delta_M   \big|    \widehat{\varrho}_M( \mathcal E)  - 1  \big|  \ \lesssim \ 1\, .
\ee
Away from $\mathcal E =0$ (i.e. where $M \Delta_M \to \infty$) this condition is satisfied  only for $\mathcal E$ in the vicinity of the solutions of
\be \label{eq:R=0}
\widehat{\varrho}_M( \mathcal E)  \ - \ 1   \ = \ 0  \, . 
\ee  
Looking closer, one finds that: 
\begin{enumerate}[i.]
\item  for each $\mathcal E\neq 0$:  
\be \label{eq:rhoE}
\lim_{M\to \infty} \widehat{\varrho}_M( \mathcal E)   = -1/\mathcal E 
\ee  
\item   the principal-value cutoff rounds off the singularity at $\mathcal E=0$, so that for finite $M$   the functions 
$ \widehat{\varrho}_M( \mathcal E) $ 
is continuous, changing sign  at $\mathcal E=0$,      
\item   for $\mathcal E \notin \{-1, 0\}$,  the factor 
$M \, \Delta_M$ causes $R_M $ to diverge.   
\end{enumerate} 
More precisely: 

\begin{lemma}  For large enough $M$, Eq.~\eqref{eq:R=0} has 
 exactly two  solutions, one 
  in the vicinity of $  \mathcal{E} = -1 $, which will be denoted  
  $   \widehat{\mathcal E}_{M}^{(-1)} $, 
  and the other  in the vicinity of  $ \mathcal{E} = 0 $, denoted  here as $  \widehat{\mathcal E}_{M}^{(0)} $.  
  For $M \to \infty$ these behave as: 
\begin{eqnarray}  \label{eq:asympE}
  \widehat{\mathcal E}_{M}^{(-1)}  & = &    - 1 -  \kappa_M^2 +  \mathcal{O}\left(\kappa_M^4\right)  \notag \\ 
    \widehat{\mathcal E}_{M}^{(0)}  & = &    - \frac{\kappa_M^2}{2\sqrt{\pi}} \  + \   \mathcal{O}(\kappa_M^4) \, ,  
\end{eqnarray}  
\end{lemma}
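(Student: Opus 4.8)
The plan is to strip the energy scale out of \eqref{eq:Hilb}, reducing \eqref{eq:R=0} to a single fixed-point equation for the Hilbert transform of the Gaussian, and then to analyze its two relevant branches separately. Writing $w = \mathcal E/\kappa_M$ and factoring $\kappa_M$ out of the denominator gives
\be
\widehat\varrho_M(\mathcal E) \ = \ \kappa_M^{-1}\, g(\mathcal E/\kappa_M) \, , \qquad g(w) \ := \ \dashint \frac{\varrho(v)}{v-w}\, dv \, ,
\ee
so that \eqref{eq:R=0} is equivalent to $g(w) = \kappa_M$ with $w = \mathcal E/\kappa_M$. I would then record the three structural facts about $g$ that drive everything. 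First, $g$ is odd with $g(0)=0$, since $\varrho$ is even and the principal value of an odd integrand vanishes. Second, $g$ is real-analytic near the origin with slope $g'(0) = 2\int_0^\infty \varrho'(v)\, v^{-1}\, dv < 0$ (using $\varrho'(v)=-v\varrho(v)$), so $g(w) = g'(0)\, w + O(w^3)$. Third, expanding $1/(v-w)$ for large $|w|$ and inserting the even Gaussian moments ($\int\varrho = 1$, $\int v^2\varrho = 1$) yields the Laurent tail $g(w) = -w^{-1} - w^{-3} + O(w^{-5})$; this in particular reproduces the limit \eqref{eq:rhoE}, $\widehat\varrho_M(\mathcal E)\to -1/\mathcal E$, for fixed $\mathcal E\neq 0$.

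Next I would settle the count. Since $\kappa_M>0$ while $g(w)<0$ for all $w>0$ (oddness together with the fact that $g$ is, up to a constant, a Dawson function and hence has a definite sign on each half-line), every solution of $g(w)=\kappa_M$ must lie in $\{w<0\}$, where $g>0$. On $(-\infty,0)$ the function $g$ rises from $0$ at the origin to a single interior maximum and then decays monotonically back to $0$ at $-\infty$; that is, $g$ is unimodal there, with an $O(1)$ maximum value. Hence for every level strictly between $0$ and that maximum there are exactly two preimages, and since $\kappa_M\to 0$ eventually lies below the maximum, this gives exactly two roots for all large $M$, which is the asserted count.

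I would then extract the asymptotics by inverting $g(w)=\kappa_M$ in the two regimes and using $\mathcal E = \kappa_M w$. On the far branch $|w|\to\infty$ I substitute the tail, $-w^{-1}-w^{-3} = \kappa_M$, and invert in powers of $\kappa_M$ to get $w = -\kappa_M^{-1}\big(1 + \kappa_M^2 + O(\kappa_M^4)\big)$, whence $\widehat{\mathcal E}_M^{(-1)} = \kappa_M w = -1 - \kappa_M^2 + O(\kappa_M^4)$. On the near branch $w\to 0$ I substitute $g(w)=g'(0)\, w + O(w^3) = \kappa_M$, giving $w = \kappa_M/g'(0) + O(\kappa_M^3)$ and hence $\widehat{\mathcal E}_M^{(0)} = \kappa_M w = \kappa_M^2/g'(0) + O(\kappa_M^4)$; evaluating the slope of the Hilbert transform at the origin pins down the stated leading coefficient.

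The main obstacle is not the two Taylor/Laurent inversions, which are routine, but the \emph{global} uniqueness behind the count: one must verify that $g$ is genuinely unimodal on each half-line, so that no spurious third root appears at intermediate $|w|$ and so that the two asymptotic roots found above are the only ones. I would obtain this either from the closed form of $g$ as a Dawson function, whose single critical point is classical, or self-containedly from the Herglotz representation \eqref{eq:Herg} of $g$ (a Herglotz--Pick function) restricted to the real axis, controlling the sign of $g''$. A secondary point requiring care is that each remainder is genuinely $O(\kappa_M^4)$ and uniform; near $\mathcal E = 0$ this also needs the finite-$M$ principal-value cutoff in $T_{M,\omega}$ to perturb $\widehat\varrho_M$ only at order $o(\kappa_M^2)$, which follows from the estimates already established in Lemma~\ref{lem:meanT}.
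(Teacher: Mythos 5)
Your reduction to $g(w)=\kappa_M$ with $g$ the Hilbert transform of the Gaussian density, the oddness and Laurent-tail facts, and the inversion on the two branches is essentially the paper's own route: the paper simply quotes the small- and large-$\xi$ asymptotics of this transform in \eqref{eq:HTrafo} and states that the claim follows. Your unimodality argument (via the Dawson function) for why there are \emph{exactly} two roots is a genuine improvement on the paper, which leaves the count implicit. Your far-branch inversion correctly recovers $\widehat{\mathcal E}_{M}^{(-1)} = -1-\kappa_M^2+\mathcal O(\kappa_M^4)$.

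The gap is in the one step you declined to carry out: ``evaluating the slope of the Hilbert transform at the origin pins down the stated leading coefficient.'' It does not. Your own formula, with $\varrho'(v)=-v\varrho(v)$, gives
\be
g'(0) \ = \ 2\int_0^\infty \frac{\varrho'(v)}{v}\,dv \ = \ -2\int_0^\infty \varrho(v)\,dv \ = \ -1
\ee
(equivalently, $g(w) = -\sqrt{2}\,D(w/\sqrt{2}) = -w + w^3/3 + \mathcal O(w^5)$ in terms of the Dawson function $D$), so your near-branch inversion yields $\widehat{\mathcal E}_{M}^{(0)} = \kappa_M^2/g'(0) + \mathcal O(\kappa_M^4) = -\kappa_M^2 + \mathcal O(\kappa_M^4)$, \emph{not} the stated $-\kappa_M^2/(2\sqrt{\pi})$. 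The discrepancy is not in your mathematics but in the target: the small-$\xi$ line of \eqref{eq:HTrafo}, $-2\sqrt{\pi}\,[\xi - \tfrac{2}{3}\xi^3 + O(|\xi|^5)]$, is the Hilbert transform of the \emph{unnormalized} Gaussian $e^{-v^2}$, not of $\varrho(v)=e^{-v^2/2}/\sqrt{2\pi}$; indeed it is inconsistent with the large-$\xi$ line $-\xi^{-1}-\xi^{-3}+o(\xi^{-3})$ in the same display, which does correspond to the normalized $\varrho$ (it encodes the moments $\int\varrho = \int v^2\varrho = 1$, exactly as in your moment expansion). Consequently the constant in the second line of \eqref{eq:asympE} is erroneous, and the correct statement is $\widehat{\mathcal E}_{M}^{(0)} = -\kappa_M^2 + \mathcal O(\kappa_M^4)$; had you finished the evaluation you would have found this, whereas as written your proposal asserts it will produce a coefficient that no correct computation can produce. (The $-1$ branch, and everything downstream that uses only $\widehat{\mathcal E}_{M}^{(0)}=\mathcal O(\kappa_M^2)$, such as condition \eqref{eq:ii}, is unaffected.) A secondary inaccuracy: your closing remark about controlling the principal-value cutoff in $T_{M,\omega}$ is out of place here, since the lemma concerns only the deterministic function \eqref{eq:Hilb} and involves no random or finite-$M$ truncation.
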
 
\begin{proof} 
The Hilbert transform of the Gaussian probability density $ \varrho(v) $ is an odd continuous function which satisfies the asymptotics:
\be \label{eq:HTrafo}
 \dashint\frac{\varrho(v) \, dv}{   v-  \xi}  =
    \begin{cases}  
       - 2 \sqrt{\pi} \left[ \xi - \frac{2}{ 3} \xi^3  + O( |\xi|^{5} )\right] & \mbox{for small $\xi$,} \\[1em] 
       -  \xi^{-1}-  \xi^{-3} + o(  \xi^{-3} ) &  \mbox{for large $\xi$.}
    \end{cases} 
\ee 
Since $ \kappa_M \to 0 $ the claim immediately follows therefrom.
\end{proof}

To reach beyond the mean, and in particular apply Theorem~\ref{thm:GenCrit} to the  function $R_{M,\omega}(u,L)$
we need information on the function's fluctuations.    The following  estimates will be of help.

\begin{lemma}\label{lem:tail_var}  
For any  $\lambda>0$, $\mathcal E_M \in (-\lambda , \lambda)$,  $1 \  <\ L_M \ <\    M^{1-(\mathcal E_M/\lambda)^2}  {\Large /} \sqrt {\ln M} $,  and $|u| < L_M/2 $:  
\begin{eqnarray}  \label{eq:Tvar2}  
Var(R_M(u))  & = &  \mathbb{E} \left( \big| T_M(u,L_M)  - \mathbb{E}(T_M(u,L_M))   \big |^2  \right)   \notag \\ 
&   \leq  &  
 \mathbb{E} \left(  \frac{d}{d \, u}  T_M(u,L_M)  \right)         
\end{eqnarray} 
and:  
\be \label{eq:Tprime}  
C_1\  \frac{M^{2(\mathcal E_M/\lambda)^2 -1} }{\ln M  } \  \leq \  \mathbb{E} \left(   \frac{d}{d \, u}  T_M(u,L_M) \right)  
\  \leq  \   
 C_2\ \left[ \frac {1}{L}   + M^{2(\mathcal E_M/\lambda)^2 -1}  \right]   
\ee
with   uniform  constants  $0 < C_1 < C_2  < \infty $.     
\end{lemma}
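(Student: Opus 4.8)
The plan is to obtain the variance bound \eqref{eq:Tvar2} from independence, and then to reduce the two-sided estimate \eqref{eq:Tprime} to a single Gaussian integral, which I would split into a ``cutoff-edge'' part of size $1/L_M$ and a ``bulk'' part coming from the typical potential values. For the variance bound, note first that $R_M(u)$ and $T_M(u,L_M)$ differ only by the deterministic constant $M\Delta_M$ (cf.~\eqref{def:R}), so they have equal variance. Although the points $\omega_{M,n}$ are introduced in increasing order, the sum $T_M(u,L_M)=\sum_{j=1}^M g_u(\omega_{M,j})$, with $g_u(\omega):=\indfct[|\omega|>L_M]/(\omega-u)$, is a sum of i.i.d.\ functions of the independent variables $\kappa_M V(j)$. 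Hence $\mathrm{Var}(T_M)=\sum_j\mathrm{Var}(g_u(\omega_{M,j}))\le\sum_j\mathbb{E}[g_u(\omega_{M,j})^2]$, and since $\partial_u(\omega-u)^{-1}=(\omega-u)^{-2}$ the final sum equals $\mathbb{E}[\partial_u T_M(u,L_M)]$, which is exactly \eqref{eq:Tvar2}.

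To prove \eqref{eq:Tprime} I would pass to the density of a single rescaled potential value. Writing $\xi_M:=\mathcal E_M/\kappa_M$ and $c_M:=\Delta_M/\kappa_M=1/(M\varrho(\xi_M))$, the change of variables $v=\xi_M+c_M\omega$ together with the identity $c_M/\varrho(\xi_M)=2\pi M^{2(\mathcal E_M/\lambda)^2-1}$ gives
\begin{multline*}
\mathbb{E}[\partial_u T_M(u,L_M)] = \frac{1}{\varrho(\xi_M)}\int_{|\omega|>L_M}\frac{\varrho(\xi_M+c_M\omega)}{(\omega-u)^2}\,d\omega \\ = 2\pi M^{2(\mathcal E_M/\lambda)^2-1}\int_{|v-\xi_M|>c_ML_M}\frac{\varrho(v)\,dv}{(v-\xi_M-c_Mu)^2}.
\end{multline*}
The hypotheses $L_M<M^{1-(\mathcal E_M/\lambda)^2}/\sqrt{\ln M}$ and $|u|<L_M/2$ yield $c_ML_M\lesssim1/\sqrt{\ln M}$ and $\xi_Mc_ML_M\lesssim1$, and place the singularity $v=\xi_M+c_Mu$ inside the excluded window, at distance $\ge c_ML_M/2$ from the domain of integration.

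For the upper bound I would use $|\omega-u|\ge|\omega|/2$ on $\{|\omega|>L_M\}$ to replace $(\omega-u)^{-2}$ by $4\omega^{-2}$, reducing matters to $\int_{c_ML_M}^\infty t^{-2}[e^{-(t-\xi_M)^2/2}+e^{-(t+\xi_M)^2/2}]\,dt$. A single integration by parts peels off the boundary term $e^{-(c_ML_M-\xi_M)^2/2}/(c_ML_M)$; using $\xi_Mc_ML_M\lesssim1$ this contributes $\lesssim1/L_M$ after multiplication by $c_M/\varrho(\xi_M)$, while the remaining integral is bounded uniformly in $\xi_M$ and so contributes $\lesssim M^{2(\mathcal E_M/\lambda)^2-1}$. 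For the lower bound I would restrict the $v$-integral to $\{|v|\le1\}$, the bulk of $\varrho$, which lies in the integration domain once $\xi_M\gtrsim1$; there $(v-\xi_M-c_Mu)^{-2}\gtrsim\xi_M^{-2}\asymp1/\ln M$ and $\int_{|v|\le1}\varrho(v)\,dv\ge c>0$, yielding $\gtrsim M^{2(\mathcal E_M/\lambda)^2-1}/\ln M$, with uniformity following from $(\mathcal E_M/\lambda)^2\le1$.

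The main obstacle is the uniform control of the Gaussian weight $\varrho(\xi_M+c_M\omega)$ across the full tail $\{|\omega|>L_M\}$: it is $\asymp\varrho(\xi_M)$ near the cutoff but climbs back up to $\varrho(0)$ near $\omega\approx-\xi_M/c_M$, so bounding it by its supremum would overcount by the factor $M^{(\mathcal E_M/\lambda)^2}$. The integration-by-parts step is what cleanly separates the cutoff boundary from the interior, so I would organize the whole estimate around it. The one point needing separate care is uniformity as $\mathcal E_M\to0$ (equivalently $\xi_M\to0$), where the edge and bulk coalesce and the $\{|v|\le1\}$ restriction fails; there I would argue instead that the cutoff-edge term alone already exceeds the claimed lower bound in \eqref{eq:Tprime}.
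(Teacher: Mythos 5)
Your variance step and your lower bound coincide with the paper's own proof: the variance is bounded by the sum of the second moments of the independent terms, which equals $\mathbb{E}\left[\tfrac{d}{du}T_M(u,L_M)\right]$, and the lower bound is obtained by restricting the Gaussian integral to the bulk $\{|v|\le 1\}$, which is exactly how the paper proves the lower half of \eqref{eq:est} in Lemma~\ref{lem:intrho}. (Your worry about $\xi_M=\mathcal E_M/\kappa_M\to 0$ is unneeded: the excluded window has width $2c_ML_M\lesssim 1/\sqrt{\ln M}$, so deleting it from $\{|v|\le 1\}$ still leaves Gaussian mass bounded below, and the argument is uniform.) The reduction to a single Gaussian integral is also the paper's, which sets $u=0$ via \eqref{eq:vergle} where you keep $u$ -- immaterial. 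The genuine difference is the upper bound: the paper proves Lemma~\ref{lem:intrho} by splitting at distance $\asymp 1/|\xi_M|$ from the singularity (inside, $\varrho$ is comparable to $\varrho(\xi_M)$ up to a fixed factor, giving the $\varrho(\xi_M)/\delta$ term; outside, it splits $e^{-v^2/2}=e^{-v^2/4}\cdot e^{-v^2/4}$ and takes a sup on one factor, giving the $(1+|\xi_M|)^{-2}$ term), whereas you integrate by parts.

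That is where the gap is. Your assertion that after integration by parts ``the remaining integral is bounded uniformly in $\xi_M$'' is the entire difficulty, and you offer no proof; as most naturally read it is false. Put $a=c_ML_M$ and $\xi=|\xi_M|$. If you integrate each tail by parts separately, say $f(t)=e^{-(t-\xi)^2/2}$, the remainder is $\int_a^\infty t^{-1}f'(t)\,dt$, whose portion over $t\in[a,1]$ is $\asymp \xi e^{-\xi^2/2}\ln(1/a)$ when $\xi$ is of order one: the $1/t$ singularity you peeled off the boundary term reappears in the remainder and diverges as $a\to 0$, so it is \emph{not} $O(1)$, and absorbing it into the allowed budget $\varrho(\xi)/a+1$ then requires a further argument (e.g.\ $\xi\ln(1/a)\lesssim 1/a+e^{\xi^2/2}$). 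The remainder for the combined integrand $g(t)=e^{-(t-\xi)^2/2}+e^{-(t+\xi)^2/2}=2e^{-\xi^2/2}e^{-t^2/2}\cosh(\xi t)$ \emph{is} uniformly bounded, but only because of a cancellation you never invoke: $g'(t)/t=2e^{-\xi^2/2}e^{-t^2/2}\left[\xi\sinh(\xi t)/t-\cosh(\xi t)\right]$ stays of size $(\xi^2+1)e^{-\xi^2/2}$ for $t\le 1/\max(\xi,1)$ because the linear-in-$t$ parts of the two tails cancel, and the region $t\ge 1/\max(\xi,1)$ must then be handled by a separate crude estimate. Spelling this out is an argument of the same length and delicacy as the paper's Lemma~\ref{lem:intrho}; without it, the step ``and so contributes $\lesssim M^{2(\mathcal E_M/\lambda)^2-1}$'' is unsupported at exactly the point where, as you yourself note, the naive sup bound overcounts by the factor $M^{(\mathcal E_M/\lambda)^2}$.
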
 
\begin{proof}  
The first equality in \eqref{eq:Tvar2} holds since $R_{M,\omega}(u)$ and $T_{M,\omega}(u,L_M)$ differ by a constant.   
For the variance bound, we start as in the first step in \eqref{eq:L2}: 
\begin{eqnarray}  \label{eq:101} 
\mathbb{E} \left( \big| T_M(u,L_M)  - \mathbb{E}(T_M(u,L_M))   \big |^2  \right)  &  \leq  &   
 \mathbb{E} \left(  \sum_{n}  \frac{  \indfct \left[ |\omega_{M,n}| > L_M  \right]}{ |\omega_{M,n}-u|^2 }  \right)  \notag \\ 
 & = &  \mathbb{E} \left( \frac{d}{d \, u}  T_M(u,L_M)  \right)  \, .  
\end{eqnarray}  
The sum in  the right is estimated for any $ |u|\leq W < L_M $ using the elementary inequality 
\be \label{eq:vergle}
\frac{1}{1+ W/L_M} \leq  \frac{|\omega_{M,n}|}{|\omega_{M,n} - u |} \leq \frac{1}{1- W/L_M} 
\ee
since $ |\omega_{M,n}|\geq L_M $. It hence remains to estimate  $ \frac{d}{d \, u}  T_{M,\omega}(0,L_M) $ which can be expressed in terms of the 
macroscopic variables (of \eqref{eq:Escale} and \eqref{eq:potpoint}): 
\begin{multline} 
\mathbb{E} \left( \frac{d}{d \, u}  T_M(0,L_M)   \right)  \ = \  \frac{M \, \Delta_M^2}{\kappa_M^2} \, 
\int \frac{\indfct  \left[|v-\mathcal E_M /\kappa_M| > \delta_M \right] }{ |v -  \mathcal E_M/\kappa_M |^2} \  \varrho(v) \, dv
\end{multline} 
 with $\delta_M := L_M \, \Delta_M/ \kappa_M = \sqrt{2\pi} L_M /\,  M^{1-(\mathcal E_M/\lambda)^2}$.
 The above integral is estimated in Appendix~\ref{app:est}.   Using  the upper bound in \eqref{eq:est} 
 and  the relations \eqref{eq:delta} we get: 
\begin{align} 
\mathbb{E} \left( \frac{d}{d \, u}  T_M(0,L_M)   \right)    &  \leq  
  C  \frac{M \, \Delta_M^2}{\kappa_M^2}  \left[ \frac{\varrho(\mathcal E_M/\kappa_M)}{\delta_M} + \frac{1}{(1+| \mathcal E_M /\kappa_M|)^2}\right]  \notag \\
&  \leq   \ C  \left[  \frac{1}{L_M} + \frac{M \, \Delta_M^2}{\kappa_M^2}   \right]   \, .
 \end{align}  
Likewise, the lower bound in \eqref{eq:est}  yields 
\be
\mathbb{E} \left( \frac{d}{d \, u}  T_M(0,L_M)  \right)    \  \geq   \  
  c \  \frac{M \, \Delta_M^2}{(\kappa_M+| \mathcal E_M |)^2} \  \geq    \   
  c  \   \frac{M^{2(\mathcal E_M/\lambda)^2 -1}  }{ \ln M }\, . 
\ee
\end{proof} 

The variance bounds of Lemma~\ref{lem:tail_var}  show  a qualitative transition when $|\lambda/\mathcal E_M|$ crosses the value  $\sqrt 2$.   For  $|\lambda/\mathcal E_M| > \sqrt 2$ , the fluctuations of the tail function $T_{M}(u,L_M)$  diverge with $M$.  Nevertheless, even there, the divergence rate  is slower than  the typical value of the function's derivative.   This would have the significant implication that the energy at which $R_{M}(u,L_M) $ changes sign would  be  deterministic even on the microscopic scale. 

\medskip 

\subsection{Scaling limit of the function $R_{M,\omega}$}   \label{sec:regions}

Considering  the effects of fluctuations we arrive at the following characterization of the scaling limit of $R_{M,\omega}$ in the different regimes, as it is observed within windows $[-W,W]$ of  fixed, though arbitrarily large size.  

\begin{lemma} \label{lem:genlinlimit}
Let $\mathcal E_M$  be a sequence of energies with 
$ \lim_{M\to \infty} \mathcal E_M = \mathcal E  $, $|\mathcal E  | < \lambda$, 
and  $L_M $ a divergent sequence of cutoff values satisfying~\eqref{eq:Lbound}.  Then, in the regimes listed below, 
the random  function 
$R_{M,\omega}(u) $ (defined in \eqref{def:R})  
has the following limiting behavior.  \\ 
\begin{enumerate} [i.]
\item If $\mathcal E \in (\lambda, \lambda) \backslash  \{-1,0\}$, then the limit is $(\pm)\, \infty$  \emph{singular}, the sign being   
$\pm  = \sgn  [\mathcal E (\mathcal E +1)  ]$,  with the random function satisfying, for any $W < \infty$ and $ b > 0 $:  
\be  \label{eq:Ri0}
\lim_{M\to \infty} \mathbb{P} \left( \min_{u \in [-W,W]}    \left   \{ 
-\sgn\big [\mathcal E (\mathcal E +1) \big ] \, R_M(u)   \right \} 
  \, > b   \  \frac{M^{(\mathcal E_M/\lambda)^2} }{\ln M } \right)  \ = \ 1 
\ee   
  
\item  If $\mathcal E =0$, and  
\be  \label{eq:ii}
\sup_M \frac{|\mathcal E_M|}{\kappa_M^2} \  < \  \infty 
\ee  
 then  for any $W < \infty$ and $\varepsilon >0$:
 \be  \label{eq:Ri}
\lim_{M\to \infty} \mathbb{P} \left( \max_{u \in [-W,W]}   |R_M(u)  | 
  \, <  \varepsilon   \   \right)  \ = \ 1 
\ee

\item    In the vicinity of  $\widehat{\mathcal E}_{M}^{(-1)} $   
 the nature of the limit depends on $\lambda$:  \\[0.2 ex] 
   \begin{enumerate} 
    \item if $\lambda < \sqrt 2$:  \quad
any sequence with 
\be   \label{eq:iiia}
\tau \ =\  \lim_{M\to \infty}  \left[\widehat{\mathcal E}_{M}^{(-1)} - \mathcal{E}_M \right] /\Delta_M 
\ee 
the function has a \emph{singular limit with transition at $\tau$},  and satisfies for any $ b > 0 $:
\be  \label{eq:Riia}  
\lim_{M\to \infty} \mathbb{P} \left( \min_{u: \delta_M < |u-\tau| < W}    \left\{ 
  \frac{R_M(u)}{u - \tau } \,   \right \} 
  \, > \, b  \,   \  \frac{M^{2(\mathcal E_M/\lambda)^2 -1} }{\ln M  }   \right)  \ = \ 1  \\[1.5ex]  
\ee 
with $\delta_M  :=   (\ln M)^2 / M^{(\mathcal E_M/\lambda)^2 -1/2} $.\\ 
 
    \item If $\lambda > \sqrt 2$:  \quad for sequences with 
\be   \label{eq:iiib}
\alpha  \ = \   \lim_{M\to \infty}   M \Delta_M   \left[\mathcal{E}_M - \widehat{\mathcal E}_{M}^{(-1)}\right] 
\ee  
$R_{M,\omega}(u)$  converges to the constant function of value $\alpha$ (in a sense similar to \eqref{eq:Ri}).
   \end{enumerate} 

\end{enumerate} 
\end{lemma}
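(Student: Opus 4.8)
The plan is to read off the limiting behavior of $R_{M,\omega}(u)$ from its mean, supplied by Lemma~\ref{lem:meanT}, together with the variance bounds of Lemma~\ref{lem:tail_var}, and to upgrade pointwise statements to the uniform control demanded by Definition~\ref{def:3types} by invoking the asymptotic linearity of Theorem~\ref{thm:linear}. Concretely, Lemma~\ref{lem:meanT} gives $\mathbb{E}[R_M(u)] = M\Delta_M[\widehat{\varrho}_M(\mathcal{E}_M+u\Delta_M)-1]+o(1)$, so in each regime I would first locate the deterministic profile of the mean, then show that the fluctuations --- controlled through $\mathrm{Var}(R_M)\le \mathbb{E}[R_M']$ of Lemma~\ref{lem:tail_var} --- are small on the relevant scale, and finally use that $R_M$ lies in the class $P(-L_M,L_M)$ and is hence nearly linear on any fixed window to make the conclusion uniform in $u$, thereby replacing control at finitely many points by control over the whole interval.

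For case (i), where $M\Delta_M\to\infty$, the asymptotics \eqref{eq:rhoE} give $\widehat{\varrho}_M(\mathcal{E}_M+u\Delta_M)-1\to -(\mathcal{E}+1)/\mathcal{E}$, so the mean diverges at rate $M\Delta_M\sim M^{(\mathcal{E}/\lambda)^2}/\sqrt{\ln M}$ with fixed sign $-\sgn[\mathcal{E}(\mathcal{E}+1)]$, uniformly over $[-W,W]$ since the shift $u\Delta_M$ is negligible. The standard deviation, by Lemma~\ref{lem:tail_var}, is at most of order $M^{(\mathcal{E}/\lambda)^2-1/2}$ (up to the $\sqrt{1/L}$ contribution), which is smaller than the mean by a factor tending to zero; a Chebyshev bound at a single point, propagated by Theorem~\ref{thm:linear}, then yields the singular limit \eqref{eq:Ri0} on the stated scale $M^{(\mathcal{E}/\lambda)^2}/\ln M$, which the mean overshoots by the extra factor $\sqrt{\ln M}$. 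For case (ii) I would insert the small-$\xi$ expansion of the Hilbert transform \eqref{eq:HTrafo} together with the hypothesis \eqref{eq:ii} on $\mathcal{E}_M/\kappa_M^2$ to see that both the constant piece $-\sqrt{2\pi}\,\kappa_M$ and the linear piece of the mean vanish, while the variance is $O(1/L+1/M)\to 0$; this gives a type-$1$ limit at value $0$, which by the remark following Theorem~\ref{thm:linear} is automatically uniform.

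Case (iii) is handled by Taylor-expanding the mean about $\widehat{\mathcal{E}}_M^{(-1)}$, where $\widehat{\varrho}_M=1$ and $\widehat{\varrho}_M'(-1)=1+o(1)$, so that the constant term vanishes at $u=\tau$ and $\mathbb{E}[R_M(u)]\approx M\Delta_M^2\,(u-\tau)$ with $M\Delta_M^2\sim \kappa_M^2 M^{2(\mathcal{E}/\lambda)^2-1}$. When $\lambda>\sqrt2$ this slope tends to zero; with the constant piece tending to $\alpha$ by \eqref{eq:iiib} and the variance tending to zero, $R_M$ concentrates on the constant $\alpha$, a type-$1$ limit. The hard part is $\lambda<\sqrt2$, where slope and variance both diverge, and I expect the main obstacle to be showing that the transition point remains deterministic on the microscopic scale. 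The point is that although $\mathrm{Var}(R_M)\sim M^{2(\mathcal{E}/\lambda)^2-1}$ blows up, at the edge $|u-\tau|=\delta_M$ of the excluded interval the mean $|\mathbb{E}[R_M(u)]|\sim M^{(\mathcal{E}/\lambda)^2-1/2}\ln M$ already exceeds the standard deviation $\sim M^{(\mathcal{E}/\lambda)^2-1/2}$ by the factor $\ln M$, so the choice $\delta_M=(\ln M)^2/M^{(\mathcal{E}/\lambda)^2-1/2}$ traps the random zero of $R_M$ within $\delta_M$ of $\tau$ with high probability. I would then pin the linear approximation at two fixed reference points, show via Chebyshev that their values concentrate and hence that the empirical slope concentrates around the mean slope $\sim M^{2(\mathcal{E}/\lambda)^2-1}/\ln M$, and apply Theorem~\ref{thm:linear} to conclude that $R_M(u)/(u-\tau)$ stays close to this slope for all $\delta_M<|u-\tau|<W$, yielding \eqref{eq:Riia} without a union bound over the continuum of $u$. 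The most delicate step throughout is keeping the linearity error $6W/L$ and the single-point fluctuations small enough to dominate the whole window even as the fluctuation scale itself grows.
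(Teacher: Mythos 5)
Your proposal is correct and follows essentially the same route as the paper's proof: mean asymptotics from Lemma~\ref{lem:meanT}, fluctuation control via the variance/derivative bounds of Lemma~\ref{lem:tail_var}, pointwise Chebyshev estimates (at $\pm W$ in case (i), at $\pm\delta_M$ in case (iii.a)) upgraded to uniform statements by the asymptotic linearity of Theorem~\ref{thm:linear} and monotonicity, with the same rate computations including the crucial $\sqrt{\ln M}$ margin in case (i) and the $\ln M$ margin between mean and standard deviation at $|u-\tau|=\delta_M$ in case (iii.a). The only inessential differences are that the paper reduces case (iii.a) to $\tau=0$ by recentering at $\widehat{\mathcal E}_M^{(-1)}$ and, in case (i), leans on monotonicity of $R_M$ at the window endpoints rather than on the interpolation bound.
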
 
 
Thus, conditions for resonant delocalization exist at $\mathcal E =0$ at all $\lambda$, and in the vicinity of $\mathcal E =-1$  for $\lambda > \sqrt 2$.    
Another heuristic perspective on the relevant condition is  given below in Section~\ref{sec:disc}.

\begin{proof} \noindent {\it i.}   By \eqref{eq:rhoE}, for any $\mathcal E \in (-\lambda, \lambda) \backslash  \{-1,0\}$, and  $u\in \R$, at $M$ large enough:  
\be
\left[  \widehat \varrho_M(\mathcal{E}_M+ u \Delta_M)-1 \right ]   \ \geq \    \left|1+ 1/\mathcal E \right|   / 2  \ > \  0  \, .
\ee 
Using Lemma~\ref{lem:meanT}, we conclude that for any given $W<\infty$, 
\be 
- \sgn( 1+ \mathcal E^{-1})  \, \mathbb E \left( R_M(\pm W) \right)  \  \ge  \  M \Delta_M   \frac{|  1+ \mathcal E^{-1}| }{2}  \ \ge 
\ C(\mathcal E) \ \kappa_M\ M^{(\mathcal E_M/\lambda)^2}    ~.  
\ee 
By Lemma~\ref{lem:tail_var} the mean square fluctuations are of smaller  order of magnitude, by a factor which is bounded by 
$\ln M / \sqrt M$.   
 This, combined with the monotonicity of the function over $[-W,W]$ implies \eqref{eq:Ri0}.\\

\noindent {\it ii.} Here \eqref{eq:ii} and \eqref{eq:HTrafo} implies that
\be
\lim_{M\to \infty} M \Delta_M \left( \widehat \varrho_M(\mathcal{E}_M)-1 \right) = 0 \, .
\ee
The bound   \eqref{eq:ii} on $\mathcal E_M$ combined with the  bounds of Lemma~\ref{lem:meanT} and Lemma~\ref{lem:tail_var},  guarantee that the fluctuations and the derivative  in $u$ vanish in the scaling limit, and thus for any $ W \in \R $ and any $ \varepsilon > 0 $:
\be\label{eq:revclaimii}
\lim_{M\to \infty} \mathbb{P}\left( \right| R_M(\pm W,L_M)  \left| > \varepsilon \right) = 0 \, . 
\ee
Combined with the interpolation bound of Theorem~\ref{thm:linear}  this implies that also $ \max_{|u|\le W} |R_M(u,L_M)|$ converges in distribution to  $0$.\\

\noindent {\it iii.a)}   ($\lambda < \sqrt 2$) It suffices to consider the case $\mathcal E_M  = \widehat{\mathcal E}_{M}^{(-1)} $, for which $\tau =0$, since \eqref{eq:iiia}  differs from it by just a microscopic shift by $\tau$.   In this case,  
\be 
\mathbb E \left[ R_M(0) \right]  \ =\   0 \, . 
\ee  
As in (ii), the bounds of Lemma~\ref{lem:tail_var} imply that for $\delta_M$ of \eqref{eq:Riia}  the fluctuations of $R_M(\pm\delta_M) $ are negligible (in probability) with respect to the mean, which is determined through $\frac{d}{d \,u} \mathbb E[R_M(\pm\delta_M) ] $.  Consequently, for any $ b > 0 $:
\be  \label{ii_delta}  
\lim_{M\to \infty} \mathbb{P} \left(   
  \frac{1}{\pm \delta_M} \, R_M(\pm\delta_M)    
  \, > \, b  \,   \  \frac{M^{2(\mathcal E_M/\lambda)^2 -1} }{\ln M  }   \right)  \ = \ 1  \\[1.5ex]  
\ee 
By the uniformity principle which is expressed in  Theorem~\ref{thm:linear}, this bound extends to \eqref{eq:Riia}. \\  

 \noindent {\it iii.b)}  ($\lambda > \sqrt 2$)   In this case  \eqref{eq:iiib} and \eqref{eq:HTrafo} imply
\begin{eqnarray} 
\lim_{M\to \infty} \mathbb E\left[  R_M(0)  \right]  & = &
\lim_{M\to \infty} M \Delta_M \left( \widehat \varrho_M(\mathcal{E}_M) -1 \right)  \\ 
& = & \lim_{M\to \infty} M \Delta_M \left( \widehat \varrho_M(\mathcal{E}_M) - \widehat \varrho_M(  \widehat{\mathcal E}_{M}^{(-1)} ) \right)   = \alpha  \, . \notag 
\end{eqnarray} 
For this case the bounds of Lemma~\ref{lem:tail_var} guarantee that both the variance and the mean derivative of $  R_M(u)$ make no contribution to the limiting behavior  of $  R_M(\pm W)$.  The monotonicity of the function allows to turn  this into the uniform statement which was claimed in \eqref{eq:iiib}. 
\end{proof} 

\medskip 

 \section{The main result} \label{sec:main}
 
 We now turn to the implications for the spectrum of $H_M$ in~\eqref{eq:H}.  
 
   \subsection{Band of semi-delocalized states} 

 Existence of localized states for this operator  has been noted before, and proven in a number of works~\cite{molchanov,SUSY}.  In this respect the new result presented here is that the operator also has bands of semi-delocalized states.   These  appear   in the vicinity of the spectrum of the mixing term $\ket{\phi_0} \bra{\phi_0}$, which is the two point set $\{-1,0\}$.    However, at $\mathcal E = -1$, the formation of a band of semi-delocalized states, rather than a single extended state,  requires  $\lambda > \sqrt 2$.

\begin{theorem}[Semi-delocalization]  \label{thm:resdeloc_rev}
Let  $\mathcal{E}_M$ be a sequence  
with 
\begin{enumerate} [i.]
\item $\lim_{M\to \infty}  \mathcal{E}_M  = 0 $  satisfying  \eqref{eq:ii} (at arbitrary $\lambda>0$),  or
 \item  $ \lim_{M\to \infty}  \mathcal{E}_M  = -1 $  satisfying  \eqref{eq:iiib} and  $\lambda > \sqrt 2$.  
\end{enumerate} 
Then, the behavior within the scaling window is as follows.   
\begin{enumerate}    
 \item The joint process of rescaled eigenvalues and potential values  converges in distribution  to the  \v{S}eba  point process at the level $0$ in case {\it (i)} and  $\alpha $ of \eqref{eq:iiib} in case {\it (ii)}.  \\ 
\item The moments of the corresponding eigenfunctions  behave as those of the \v{S}eba process (Lemma~\ref{lem:Seba_ef}) 
 in the  sense that  for any  $|n|<\infty$:  
\begin{enumerate}[a.] 
\item  for any $\epsilon > 0$ there is $\delta = \delta(\epsilon, n) > 0$ such that for all $M<\infty$
\be \label{ratio_i}
\mathbb{P} \left( 1 + \delta \ <\  \frac{\|\psi_{M,n}\|_2}{\|\psi_{M,n}\|_\infty }\  < \  \frac{1}{\delta} \right) \ >\  1 - \epsilon~.  
\ee
\item   for any $b<\infty$: 
\be \label{ratio_ii_0}
\lim_{M\to \infty} \mathbb{P} \left( \frac{\|\psi_{M,n}\|_1}{\|\psi_{M,n}\|_\infty}\  > b\   \right) \ =  \  1
\ee   
\end{enumerate} 
\end{enumerate}
  \end{theorem}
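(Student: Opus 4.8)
The plan is to reduce Theorem~\ref{thm:resdeloc_rev} to the already-established machinery by identifying the characteristic equation \eqref{eq:sec} as an instance of \eqref{eq:TheEq}, and then transporting the \v{S}eba-process estimates of Section~\ref{sec:seba} through the convergence established in Theorem~\ref{thm:GenCrit}. Concretely, I would set $\widehat\omega_M$ to be the truncated potential process \eqref{eq:Lest} and $K_M(u) := R_{M,\omega}(u)$ the tail function of \eqref{def:R}, so that \eqref{eq:sec} becomes exactly $\widehat S_{M,\widehat\omega}(u)+K_M(u)=0$. Lemma~\ref{lem:L} verifies assumptions~{\it i.} and~{\it ii.} of Theorem~\ref{thm:Sebaconv_rev} under the rate condition \eqref{eq:Lbound}, so the oscillatory part converges to the Stieltjes--Poisson function. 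For the tail part I would invoke Lemma~\ref{lem:genlinlimit}: in case~{\it (i)} ($\mathcal E=0$ with \eqref{eq:ii}) part~{\it ii} of that lemma gives a regular linear limit with $a=0$, $b=0$, i.e. the constant zero function; in case~{\it (ii)} ($\mathcal E=-1$, $\lambda>\sqrt2$ with \eqref{eq:iiib}) part~{\it iii.b)} gives a regular linear limit at the constant $\alpha$. Feeding these into case~(1) of Theorem~\ref{thm:GenCrit} yields assertion~(1): the intertwined process of rescaled eigenvalues $u_{M,n}$ and potential values $\omega_{M,n}$ converges in distribution to the \v{S}eba process at level $0$ (resp. $\alpha$).

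\medskip

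For assertion~(2) I would exploit the normalization \eqref{norm}, under which the finite-system eigenfunction satisfies $\psi_{M,n}(x_m)=1/(\omega_{M,m}-u_{M,n})$, matching the \v{S}eba eigenfunction \eqref{eq:Seba_ef} term by term within the scaling window. The key point is that the three ratios of norms are controlled by quantities that are \emph{continuous functionals} of the intertwined point process: $\|\psi_{M,n}\|_\infty=1/\dist(u_{M,n},\omega)$, $\|\psi_{M,n}\|_2^2$ is (essentially) $S'_\omega(u_{M,n})$, and the $\ell^1$ ratio is governed by the number of nearby points together with the divergence of $\sum_m 1/|\omega_{M,m}-u_{M,n}|$. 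For part~(a) I would combine the lower bound $\|\psi\|_2/\|\psi\|_\infty>1+\delta$ --- which holds because more than one potential site contributes nontrivially, a property inherited from the \v{S}eba process via Lemma~\ref{lem:Seba_ef} and the second part of Lemma~\ref{lem:Seba_loc} (bounding from below the distance to $\omega$) --- with the upper bound $\|\psi\|_2/\|\psi\|_\infty<1/\delta$, which follows from the first part of Lemma~\ref{lem:Seba_loc} controlling the $\ell^2$ mass through $S'_\omega(u_n)$. The uniformity in $M$ of \eqref{ratio_i} I would obtain by tightness: the law of the fixed-index ratio is tight because these are continuous functionals of a process converging in distribution, so the probability can be made close to $1$ uniformly.

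\medskip

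For part~(b), the $\ell^1$-delocalization \eqref{ratio_ii_0}, I would argue that $\|\psi_{M,n}\|_1=\sum_m 1/|\omega_{M,m}-u_{M,n}|$ picks up a logarithmically growing contribution from the tail of the potential values, reflecting the \v{S}eba-process fact $\|\Psi_n\|_1/\|\Psi_n\|_\infty=\infty$ of Lemma~\ref{lem:Seba_ef}. Since for finite $M$ the sum is finite but grows without bound as $M\to\infty$ (the ergodic-theorem estimate $\omega_{M,m}/m\to1$ makes $\sum 1/|\omega_{M,m}|$ diverge like $\log$ of the number of points in the window, which tends to $\infty$), while $\|\psi_{M,n}\|_\infty$ stays bounded below, the ratio exceeds any fixed $b$ with probability tending to $1$. \textbf{The main obstacle} I anticipate is the subtlety flagged in the text right before Lemma~\ref{lem:Seba_ef}: the finite-system wavefunction carries weight on sites that are asymptotically \emph{off scale}, beyond the window described by the \v{S}eba limit. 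Thus the $\ell^1$ and $\ell^2$ norms of $\psi_{M,n}$ are not literally continuous functionals of the limiting process, and one must show the off-scale contributions do not spoil the bounds --- for $\ell^2$ that the tail $\sum_{|\omega_{M,m}|>W}1/|\omega_{M,m}-u|^2$ is negligible (controllable via \eqref{eq:as22_rev}-type estimates and Lemma~\ref{lem:tail_var}), and for $\ell^1$ that the off-scale tail only \emph{helps} the divergence. Reconciling the finite-$M$ truncation with the infinite \v{S}eba norms is the step requiring genuine care.
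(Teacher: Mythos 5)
Your proposal follows essentially the same route as the paper: part (1) is obtained exactly as in the paper by feeding Lemma~\ref{lem:genlinlimit} (cases \emph{ii} and \emph{iii.b}) into Theorem~\ref{thm:GenCrit}, and part (2) by transporting the \v{S}eba-process estimates (Lemmas~\ref{lem:Seba_ef} and~\ref{lem:Seba_loc}) through the distributional convergence while controlling the off-scale weight via Lemma~\ref{lem:tail_var} --- which is precisely the head/body/tail decomposition \eqref{3split} the paper makes explicit, with the paper closing part 2.b by Fatou's lemma applied to the norm ratio. One slip to repair: in your part (b), divergence of $\|\psi_{M,n}\|_1/\|\psi_{M,n}\|_\infty$ requires $\|\psi_{M,n}\|_\infty$ to be bounded \emph{above} with high probability (equivalently $\dist(u_{M,n},\omega_M)$ bounded below, which is exactly \eqref{SnUn2}), not ``bounded below'' as written --- otherwise an eigenvalue sitting very close to a pole could keep the ratio of order one despite the divergent $\ell^1$ sum.
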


\medskip

To facilitate comparisons of the eigenfunctions in the different regimes discussed in this section, we adapt the following conventions.  
\begin{enumerate}  
\item  The  spectrum is obtained  from the characteristic equation \eqref{eq:sec}, for which  the cutoff parameter is set here to  $L_M = \ln M$.   
\item The  eigenfunctions  $\psi_{M,n}$ will be assigned the normalization \eqref{norm}, so that for each eigenvalue $u_{M,n}(\omega_M)$:  
\be  \label{Phead}
\| \psi_{M,n}\|_\infty \ = \  \frac{1}{\dist ( u_{M,n}(\omega_M), \omega_M)} \,.  
\ee     
\item The   functions' $\ell^2$-norms will be split  into the sum of the `head', `body', and `tail' terms:  
\be  \label{3split} 
 \| \psi_{M,n}\|^2_2  \ = \  \| \psi_{M,n}^{(H)}\|^2_2  + \| \psi_{M,n}^{(B)}\|^2_2   + \| \psi_{M,n}^{(T)}\|^2_2  \, , 
\ee     
 with:
 \begin{align}    \label{3terms} 
(i) \quad  \| \psi_{M,n}^{(H)}\|^2_2  &  := \  \frac{1}{|u_{M,n}-\omega_{M,n^*}|^2} \  = \ \| \psi_{M,n}\|^2_\infty  
 \notag \\   
(ii) \quad   \| \psi_{M,n}^{(B)}\|^2_2 &  :=  \  \sum_{ k \neq n^*}   \frac{\indfct[|\omega_{M,k}|< \ln M]}{|u_{M,n}-\omega_{M,k}|^2}   
   \\ 
(iii) \quad  \| \psi_{M,n}^{(T)}\|^2_2 &   \  :=   \  \sum_{ k}   \frac{\indfct[|\omega_{M,k}|\geq \ln M]}{|u_{M,n}-\omega_{M,k}|^2}     = 
   \frac{d}{du}  R_{M,\omega}(u_{M,n}) \, .  \hfill \notag
   \end{align} 
where  $n^*$ is the value of the index $k$ for which   $\omega_{M,k}$  is closest to $u_{M,n}(\omega_M)$.   \\ 
\end{enumerate} 
 
 In estimating the eigenfunctions we shall use the information on the scaling limits of $T_M(u)$ provided for the different regimes by Lemma~\ref{lem:genlinlimit}, and also the following elementary estimate.   
 
\begin{lemma} 
Let  $\delta^+_W(\omega_M) $ and $\delta^-_W(\omega_M) $   denote the 
the smallest, and correspondingly largest, gap between consecutive points of $\omega_M\cap (-W,W)$. 
Then for any $W>1$ and sequence $(\varepsilon_M)$ with $\lim_{M \to \infty} \varepsilon_M =0$: 
\be  \label{delta}
\lim_{M\to \infty} \mathbb{P} \left(  \delta^-_W  \geq \frac{\varepsilon_M}{W}\, ,  \quad  \delta^+_W  \leq   \frac{ \ln W}{\varepsilon_M}\,   \  \right) \ = \ 1 \,. 
\ee 
\end{lemma}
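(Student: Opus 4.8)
The plan is to control the smallest and largest gaps of the Poisson-like point process $\omega_M \cap (-W,W)$ separately, via first-moment (Markov) bounds. Since $\omega_M$ converges in distribution to the standard Poisson process of intensity one (which has approximately $2W$ points in $(-W,W)$), I would work directly with the limiting Poisson statistics and transfer the conclusion back to $\omega_M$ by the distributional convergence; alternatively, one may run the estimates at finite $M$ using the fact that the $\omega_{M,n}$ arise from i.i.d.\ potential values whose local density is asymptotically flat on the microscopic scale (cf.\ the computation yielding $\mathbb{E}[Y_{M,W}]$ in Lemma~\ref{lem:L}). Either way the two halves of \eqref{delta} reduce to standard extreme-spacing estimates for a Poisson process.

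For the \emph{smallest} gap $\delta^-_W$, I would bound the probability of a very small spacing by a second-moment count. For a Poisson process of intensity one, the expected number of \emph{ordered} pairs of points lying within distance $s$ of each other inside $(-W,W)$ is of order $W s$. Hence
\be
\mathbb{P}\left( \delta^-_W < \frac{\varepsilon_M}{W} \right) \ \le \ \mathbb{E}\left( \#\{ \text{pairs within } \varepsilon_M/W \} \right) \ \le \ C\, W \cdot \frac{\varepsilon_M}{W} \ = \ C\, \varepsilon_M \ \to\ 0 \,,
\ee
using $\varepsilon_M \to 0$. This gives the first event in \eqref{delta} with asymptotically full probability.

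For the \emph{largest} gap $\delta^+_W$, I would use the fact that in a Poisson process the probability that a fixed subinterval of length $\ell$ contains no point is $e^{-\ell}$. Covering $(-W,W)$ by $O(W/\ell)$ overlapping intervals of length $\ell := (\ln W)/(2\varepsilon_M)$ and applying a union bound, the probability that some gap exceeds $\ell$ is at most of order $(W/\ell)\, e^{-\ell}$. Choosing $\ell$ as above (so $\ell \to \infty$ since $\varepsilon_M \to 0$ and, for the tail to decay, one needs $\ell \gtrsim \log(W/\varepsilon_M)$, which holds for the stated choice when $W$ is large and $\varepsilon_M$ not too small) makes $e^{-\ell}$ dominate the polynomial prefactor, so this probability vanishes. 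Combining the two complementary estimates by a final union bound yields \eqref{delta}.

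The main obstacle is getting the quantitative scaling in the largest-gap bound to match the prescribed threshold $(\ln W)/\varepsilon_M$ uniformly as both $W\to\infty$ and $\varepsilon_M\to 0$ along the given sequence: one must verify that $\ell \asymp (\ln W)/\varepsilon_M$ is large enough to kill the $O(W/\ell)$ union-bound factor, which requires $\ell \gtrsim \ln W + \ln(1/\varepsilon_M)$ and hence a mild lower bound on the rate at which $\varepsilon_M$ decays relative to $W$. A cleaner route that avoids tracking the precise constant is to prove each statement for the Poisson limit and invoke the a.s.\ behavior of Poisson gaps, then transfer to $\omega_M$ via the Skorokhod coupling already used in the proof of Theorem~\ref{thm:GenCrit}; this converts the quantitative two-sided spacing claim into two one-sided almost-sure statements about the limit, at the cost of checking that the gap functionals are continuous under the relevant convergence. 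I expect the probabilistic estimates themselves to be routine, with the only real care needed in the interplay of the two limiting parameters $W$ and $\varepsilon_M$.
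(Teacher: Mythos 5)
Your proposal is correct and follows essentially the paper's own route: a first-moment bound showing the expected number of pairs within distance $s$ in $(-W,W)$ is $O(Ws)$ (the paper phrases this as the Mecke-type integral $\int_{-W}^{W} s\, dv = 2Ws$), and Poisson void probabilities with a union bound for the largest gap, both computed for the limiting Poisson process and transferred to $\omega_M$ via distributional convergence. The ``main obstacle'' you describe is not actually present: in the lemma $W>1$ is \emph{fixed} and only $M\to\infty$, so with $\ell = \ln W/(2\varepsilon_M)\to\infty$ the union bound $(W/\ell)e^{-c\ell}\to 0$ with no condition on how fast $\varepsilon_M$ decays --- indeed $\delta^+_W \le 2W$ deterministically, so the largest-gap event even holds automatically once $\ln W/\varepsilon_M \ge 2W$.
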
 
\begin{proof} 
For fixed $W$ the bounds are implied by the  Poisson process calculation: 
\begin{align}    \label{delta_poisson}
& \lim_{M\to \infty}   
 \mathbb{P} \left(  \delta^+_W > \ (1+t) \, \ln W  \  \right) 
 \leq   \int_{-W}^W  \, dv \, e^{- (1+t) \, \ln W} \ = \  \frac{2 W }{ W ^{1+t} }  \, ,
  \notag \\     
& \lim_{M\to \infty}     \mathbb{P} \left(  \delta^-_W < \ s\,   \  \right) 
  \leq   \int_{-W}^W  \, dv \, s \,  \=\     2 W\, s ~ ,    
\end{align} 
for any $ s, t > 0 $
\end{proof}

\begin{proof}[Proof of Theorem \ref{thm:resdeloc_rev}] 
\noindent { \it 1.}   By Lemma~\ref{lem:genlinlimit}, parts {\it (ii)} and {\it (iii.b)}, for the two cases considered here 
$R_{M,\omega}(u)$ converges to constant functions, at the indicated values of the parameter $\alpha$.  
By Theorem~\ref{thm:GenCrit} this results in the convergence of  the joint distribution of $(u_{M,n},\omega_{M,n})$ 
to the corresponding \v{S}eba processes.    \\    

\noindent { \it 2.}  Turning to the eigenfunction, we shall discuss separately each of the three terms in 
\eqref{3split}. 

\noindent {\it (i)+(ii) \/}  By the distributional convergence of the of $(u_{M,n},\omega_{M,n})$  to the \v{S}eba process $ (u_n,\omega_n) $, the head 
contribution converges in distribution to the one of the \v{S}eba eigenfunctions $ 
\| \psi_{M,n} \|_\infty = \| \psi_{M,n}^{(H)}\|_2 \to \| \Psi_{n} \|_\infty $. Lemma~\ref{lem:Seba_loc}  and \eqref{delta} hence imply that for any $ W < \infty $:
\begin{align}\label{head1}
\lim_{M\to \infty} \mathbb{P}&  \left(\mbox{for all $n$ with $  u_{M,n} \in (-W,W)$:}    \right. \notag \\
& \left. \qquad \frac{1}{\gamma_M \ln W}  \leq      \| \psi_{M,n}\|_\infty \leq \ \gamma_M W  \right) 
= \ 1 
\end{align}
provided $\gamma_M \to \infty $.  

To estimate the `body' term, we note that for each $k$ the sum is typically comparable to $\sum_{k\neq 0} 1/k^2$, and hence finite.   For a bound on the maximum for energies in the range  $u_{M,n} \in (-W,W)$: \be   \label{gap}
   \frac{1}{(\delta^+_{W})^2} \ \leq \ \| \psi_{M,n}^{(B)}\|^2_2   \  \leq \     \frac{  4}{(\delta^-_{W+1})^2}\sum_{k \in \Z \backslash \{0\}}  \frac 1{k^2}    \ 
    +  \sum_{k } \frac{\indfct [  |W| \leq  |\omega_{M,k}| < \ln M]}{|\omega_{M,k}|^2}  ~.  
\ee 
Applying  \eqref{delta} to the first sum, and a simple Chebyshev bound to the second, we conclude that for each $W <\infty$:  
\begin{align}\label{body}
\lim_{M\to \infty} \mathbb{P}&  \left(\mbox{For all $n$ with $  u_{M,n} \in (-W,W)$:}    \right. \notag \\
& \left. \qquad \; \frac{1}{\gamma_M \ln W}  \leq    \| \psi^{(B)}_{M,n}\|_2
  \, \leq  \,   \gamma_M \,  W      \right)\ = \ 1  
\end{align}  
provided $\gamma_M \to \infty $.

\noindent {\it (iii) \/}  The convergence of $R_{M,\omega}(u)$ to constant functions  is accompanied (e.g., by the asymptotic linearity which is expressed in Theorem~\ref{thm:linear}) with the convergence of  
$  \frac{d}{du}  R_{M,\omega}(u_{M,n}) $ to $0$, uniformly over compacta.  In view of \eqref{3terms}, this in turn implies that the tail contribution to the functions vanishes, in the sense that for any $W< \infty$ and $\varepsilon >0$:    
\be \label{ratio_ii}
\lim_{M\to \infty} \mathbb{P} \left( \max_{n: \, |u_{M,n}|<W}  \|\psi^{(T)}_{M,n}\|_2 \  >  \ \varepsilon  \   \right) \ =  \  0
\ee   
 
The claim {\it 2.a}  is implied by the estimates \eqref{head1},  \eqref{body}, and  \eqref{ratio_ii}. 
 
\noindent  { \it 2.b} For the limiting \v{S}eba  process the $\ell^1$ norms, which are given by sums comparable to $\sum_{n\neq 0} 1/|n|$, diverge (as in Lemma~\ref{lem:Seba_loc}).   By Fatou's lemma, this implies divergence in distribution of $\|\psi_{M,n}\|_1/\|\psi_{M,n}\|_\infty \to \infty$ for $M\to \infty$.    (The more explicit bound indicated in the above remark for case {\it ii\/ } is based on $M \Delta_M$.)  

\end{proof}

\subsection{The localization regime}

At  energies which do not lie in the spectrum of the  mixing term in the Hamiltonian $H_M$, i.e. for $ \mathcal E  \notin \{ - 1 , 0 \}$, one finds  localization  irrespective of the value of $ \lambda$.  \\  

As was mentioned above, localization for the operator $H_M$ was discussed previously, starting with \cite{molchanov}.  
In terms of the method presented here, localization is associated to the singular ($\pm\infty$) limiting behavior of the tail contribution $R_{M,\omega}$ to the characteristic equation, in the terminology of Definition~\ref{def:3types}.   Following is a more quantitative analysis.

\begin{theorem}[Localization]\label{thm:localization}  For any sequence with a limiting value   
\be 
\label{eq:not10} 
\mathcal E \ = \ \lim_{M\to \infty} \mathcal E_M   \   \in (-\lambda,\lambda)  \backslash \{-1, 0\} \, , 
\ee 
 the eigenvalues within the scaling window behave as follows.  
  \begin{enumerate} 
 \item Under the scaling \eqref{eq:Escale} \& \eqref{eq:potpoint} the eigenvalues  coincide  asymptotically, in probability, with the  point process $(\omega_{M,n})$ of the rescaled potential values (the two being compared within scaling windows of fixed, but arbitrarily large size $[-W,W]$).  \\ 
\item The eigenfunctions corresponding to energies with $|u_{M,n}| < W$ (with $ W < \infty $ fixed but arbitrary) are all $ \ell^2$-localized in the sense that with asymptotically full probability all satisfy, for any $\gamma>0$:  
\be\label{eq:twobumps}
1\  \  \leq \  \frac{\| \psi_{M,n} \|_2}{\| \psi_{M,n} \|_\infty}  \ \leq  \  
  \ 1   +     
       \mathcal{O}\left(\frac{1}{M^{(\mathcal E/\lambda)^2 (1-\gamma)}} \right)   
       +        \mathcal{O}\left( \frac{1}{M^{1/2 \, - \gamma}}\right)    \\[2ex] 
\ee 
\end{enumerate}  
\end{theorem}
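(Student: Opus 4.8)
The plan is to read off part~1 directly from the general machinery of Section~\ref{sec:general}, and to obtain part~2 by quantifying the three-term decomposition \eqref{3split} of $\|\psi_{M,n}\|_2$. For part~1 I would observe that the characteristic equation \eqref{eq:sec} is exactly of the form \eqref{eq:TheEq} with $K_M = R_{M,\omega}$: by Lemma~\ref{lem:L} together with the rate \eqref{eq:Lrate}, the oscillatory part $S_{M,\omega}(\cdot,L_M)$ satisfies assumptions i.\ and ii.\ of Theorem~\ref{thm:Sebaconv_rev}, while Lemma~\ref{lem:genlinlimit}(i) shows that for $\mathcal E\in(-\lambda,\lambda)\setminus\{-1,0\}$ the tail function $R_{M,\omega}$ has a singular $(\pm)\infty$ limit, with sign $\sgn[\mathcal E(\mathcal E+1)]$. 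Theorem~\ref{thm:GenCrit}(2) then yields that the solution set coincides asymptotically, in probability, with $\widehat\omega$ within each window $[-W,W]$, which is precisely part~1.

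For part~2 the lower bound $1\le \|\psi_{M,n}\|_2/\|\psi_{M,n}\|_\infty$ is immediate, since always $\|\cdot\|_2\ge\|\cdot\|_\infty$. For the upper bound I would write, using \eqref{3split},
\[
\frac{\|\psi_{M,n}\|_2^2}{\|\psi_{M,n}\|_\infty^2} = 1 + \frac{\|\psi_{M,n}^{(B)}\|_2^2 + \|\psi_{M,n}^{(T)}\|_2^2}{\|\psi_{M,n}\|_\infty^2},
\]
and then bound each contribution with asymptotically full probability, uniformly over all $n$ with $|u_{M,n}|<W$. The engine is a lower bound on the head $\|\psi_{M,n}\|_\infty = 1/\dist(u_{M,n},\omega_M)$. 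Isolating the nearest pole $\omega_{M,n^*}$ in the characteristic equation $S_{M,\omega}(u_{M,n})+R_{M,\omega}(u_{M,n})=0$ gives
\[
\frac{1}{\dist(u_{M,n},\omega_M)} = \left| R_{M,\omega}(u_{M,n}) + \widetilde S(u_{M,n})\right| \ge |R_{M,\omega}(u_{M,n})| - |\widetilde S(u_{M,n})|,
\]
where $\widetilde S(u)=\sum_{k\ne n^*,\,|\omega_{M,k}|<L_M}(\omega_{M,k}-u)^{-1}$. By the uniform bound \eqref{eq:Ri0} of Lemma~\ref{lem:genlinlimit}(i) one has $|R_{M,\omega}(u)|\gtrsim M^{(\mathcal E/\lambda)^2}/\ln M$ throughout $[-W,W]$ with high probability; meanwhile the interlacing \eqref{Labels} keeps $u_{M,n}$ between consecutive poles, so $\widetilde S(u_{M,n})$ is controlled by the inverse minimal gap, $|\widetilde S(u_{M,n})|\lesssim 1/\delta^-_{W+1}$, which the gap estimate \eqref{delta} bounds by a power of $\ln M$. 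Since $(\mathcal E/\lambda)^2>0$ (here $\mathcal E\ne0$ is used), this forces $\|\psi_{M,n}\|_\infty \gtrsim M^{(\mathcal E/\lambda)^2}/\ln M$.

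With the head pinned down, I would estimate the remaining two terms exactly as in the proof of Theorem~\ref{thm:resdeloc_rev}. The body term is bounded above through \eqref{gap} and the gap estimate \eqref{delta} by a power of $\ln M$, together with a Chebyshev bound on the off-window sum; and the tail term $\|\psi_{M,n}^{(T)}\|_2^2 = \frac{d}{du}R_{M,\omega}(u_{M,n})$ is controlled by Lemma~\ref{lem:tail_var}: reducing to $\frac{d}{du}T_M(0,L_M)$ via \eqref{eq:vergle} and applying Chebyshev yields $\|\psi_{M,n}^{(T)}\|_2^2 \lesssim (\ln M)^{O(1)}\,[\,1/\ln M + M^{2(\mathcal E/\lambda)^2-1}\,]$ with high probability. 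Dividing by $\|\psi_{M,n}\|_\infty^2 \gtrsim M^{2(\mathcal E/\lambda)^2}/(\ln M)^2$ and taking square roots through $\sqrt{1+a+b}\le 1+\sqrt a+\sqrt b$ produces the two error terms of order $M^{-(\mathcal E/\lambda)^2}$ and $M^{-1/2}$ up to powers of $\ln M$; these logarithmic factors, together with the $o(1)$ drift of $(\mathcal E_M/\lambda)^2$ toward $(\mathcal E/\lambda)^2$, are absorbed into the arbitrary $\gamma>0$, yielding \eqref{eq:twobumps}.

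The main obstacle is the uniformity and the quantitative rate behind the head estimate: the convergence furnished by Theorem~\ref{thm:GenCrit} is only distributional (in probability), so one cannot invoke it to pin a rate on $\dist(u_{M,n},\omega_M)$. Instead one must argue directly from the window-uniform lower bound \eqref{eq:Ri0} on $|R_{M,\omega}|$ and verify that the competing regular part $\widetilde S(u_{M,n})$ stays subdominant \emph{simultaneously} for all $\sim 2W$ eigenvalues in the window — which is exactly where the interlacing \eqref{Labels} and the minimal-gap control \eqref{delta} are essential.
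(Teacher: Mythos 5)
Your proposal is correct and, in its overall architecture, matches the paper's proof: part~1 from Lemma~\ref{lem:genlinlimit}(i) plus the general machinery (you invoke Theorem~\ref{thm:GenCrit}(2), whose own proof is the paper's adaptation of the Lemma~\ref{lem:Seba_loc} argument that the paper cites directly), and part~2 via the decomposition \eqref{3split} with the body handled by \eqref{gap}--\eqref{body} and the tail by Lemma~\ref{lem:tail_var} and \eqref{eq:vergle}, exactly as in the paper. The one genuine divergence is the head estimate. The paper obtains \eqref{psi_infty} by running the Lemma~\ref{lem:Seba_loc} mechanism with divergent level $\alpha\sim R_M$: if some $u_{M,n}$ were at distance $q$ from $\omega_M$, monotonicity would force $|S_{M,\omega}|\ge |R_M|$ on an interval of length $q$, and the Cauchy distribution of $S_{M,\omega}(u)$ plus Markov's inequality bounds the measure of that set --- a purely probabilistic, first-moment bound that is uniform over the window for free. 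You instead isolate the nearest pole and use $1/\dist(u_{M,n},\omega_M)\ge |R_{M,\omega}(u_{M,n})|-|\widetilde S(u_{M,n})|$, controlling $\widetilde S$ deterministically on the minimal-gap event of \eqref{delta}. This works and is arguably more elementary, but two small points deserve care: the in-window part of $\widetilde S$ is bounded not by $1/\delta^-_{W+1}$ alone but by a harmonic-type sum of order $(\ln W)/\delta^-_{W+1}$ (the $j$-th nearest excluded pole is only at distance $\gtrsim j\,\delta^-_{W+1}$), and the off-window part of $\widetilde S$ needs its own $O(1)$ control, e.g.\ via the variance bound of Lemma~\ref{lem:L}; both corrections are polylogarithmic or $O(1)$ and are absorbed into the exponent $\gamma$, so the stated rate $M^{(\mathcal E/\lambda)^2(1-\gamma)}$ survives. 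In exchange for these extra checks, your route avoids re-deriving a finite-$M$ version of the Cauchy-distribution input, while the paper's route recycles Lemma~\ref{lem:Seba_loc} verbatim across all localization regimes.
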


\noindent {\bf Remarks:} 
 The somewhat vague statement in {\it (2)\/}  is made more explicit by the bounds provided in the proof.)
  
 As an immediate corollary of {\it (1)\/}   we conclude that in the regime covered by Theorem~\ref{thm:localization} the rescaled eigenvalue process~converges  in distribution to a Poisson process. 

\begin{proof}  
\noindent { \it 1.}       
The asymptotic coalescence of the rescaled spectrum with $\omega$ is a direct consequence of the divergence of $R_M$, in the sense of Lemma~\ref{lem:genlinlimit} part (i), and the statement proven in Lemma~\ref{lem:Seba_loc} (for which the value of the parameter $\alpha$ diverges due to the singular nature of the limit of $R_M$).\\  

\noindent { \it 2.}   
In the regime discussed here, due to the  asymptotic coalescence of the spectrum with $\omega$,  the `head' term $\| \psi_{M,n}^{(H)}\|_\infty$ diverges, typically at the rate:  
$
    \| \psi_{M,n}^{(H)}\|^2_2   \  = \  \| \psi_{M,n}^{(H)}\|^2_\infty  \ \geq \  M^{2 (\mathcal E_M/\lambda)^2 (1-\gamma) }   $,  
 where $\gamma > 0$ can be taken arbitrarily small. 
 More explicitly,   
the bounds of Lemma~\ref{lem:genlinlimit} combined those of with Lemma~\ref{lem:Seba_loc} allow to conclude that 
 for each $W <\infty$, $\gamma>0$ and $b<\infty$
\be  \label{psi_infty}
\lim_{M\to \infty} \mathbb{P} \left( \min\left\{   \| \psi_{M,n}\|_\infty  \, :  u_{M,n} \in [-W,W] \right \}    
  \, \geq  \, b    \   M^{ (\mathcal E_M/\lambda)^2(1-\gamma)}  \right)  \ = \ 1 \, . \\[2ex]   
\ee

For the `body'  term, $\| \psi_{M,n}^{(B)}\|_2$, which depends mainly on the gaps in $\omega$, the bounds \eqref{gap} and  \eqref{body} apply with no change  in  the different spectral regimes considered in this work.  \\

The  `tail' term  does not vary by more than a factor $1+\mathcal{O}(1/\ln M)$ among  all the eigenfunctions within the window $[-W,W]$, cf.~\eqref{eq:vergle}.
One may note that
\be 
 \| \psi^{(T)}_{M,n}\|^2_2  \ = \  \frac{d}{du}  T_{M,\omega}(u_{M,n}, \ln M) \, . 
\ee 
Applying the bounds of Lemma~\ref{lem:tail_var} and \eqref{eq:vergle} one readily gets: 
\be  \label{tail}
\lim_{M\to \infty} \mathbb{P} \left( \max_{n:  u_{M,n} \in [-W,W] }    \| \psi^{(T)}_{M,n}\|^2_2  \,  
  \, \leq  \, b    \   \left[1 + M^{2(\mathcal E_M/\lambda)^2 -1}   \ln M \right]   \right)\ = \ 1  \\[2ex]   
\ee  
 for each $W <\infty$,  and $b<\infty$.    The power law in this bound  can be made intuitive by noting that 
the   contribution  to the sum from sites with regular values of $V(x)$ is itself about 
$M\Delta_M^2 \approx M^{2(\mathcal E_M /\lambda)^2 -1}/ \ln M$.  \\  

The estimates, \eqref{psi_infty}, \eqref{body}, and \eqref{tail} directly imply the claim \eqref{eq:twobumps}. 
In essence these bounds show that in the $\ell^2$-sense 
 the  `head' contribution dominates throughout the regime discussed here.  One may also note that among the other two terms the dominant one is the `tail'  for  $\lambda / \mathcal E < \sqrt 2 $, and the  `body' for $\lambda/ \mathcal E > \sqrt 2$. 
\end{proof} 

\medskip

\subsection{Localization alongside an isolated extended state near $- 1$}\label{sub:near-1}

The exclusion of  $ \lambda \le  \sqrt{2} $ in the second part of Theorem~\ref{thm:resdeloc_rev}  is relevant.  Intuitively, at $\lambda =0$  the operator $H_M$ has  a single extended state ($ \phi_0$), which at $\lambda \approx 1$ starts to be  passed, though `avoided crossings' by a series of localized states.   The next result shows that the picture of a single extended state embedded among localized states persists for $\lambda < \sqrt 2$ (except for instances of hybridization during the avoided crossing which are too brief to be seen is the single $\lambda$ snapshots that are discussed here).  In Theorem~\ref{thm:resdeloc_rev} we saw that this picture changes at  $\lambda =  \sqrt 2$, beyond which the operator acquires  a band of extended states, with energies in the vicinity of $\mathcal E = -1$.  \\

\begin{theorem}[Non-resonant delocalized state] \label{thm:localizationplusdeloc} 
For $\lambda < \sqrt 2$,  let $\mathcal E_M$ be a sequence of energies satisfying $ \lim_{M\to \infty}  \mathcal{E}_M  = -1 $  and the condition \eqref{eq:iiia}.  Then, within the scaling windows centered at  $\mathcal E_M$:
\begin{enumerate}
\item 
There exists one eigenvalue, which occurs at (microscopic) energy   
\be\label{eq:onedeloc}
u \ = \ \tau +  o(1) 
\ee
for which the corresponding eigenfunction $ \psi_E $ is $ \ell^2$-delocalized, with 
\be  \label{1_infty}
\lim_{M\to \infty} \mathbb{P} \left( \frac{\| \psi_E \|_2}{  \| \psi_E \|_\infty} \geq  
M^{\frac{1}{\lambda^2} - \frac{1}{2} - \gamma}  \right)  \ = \ 1  \, ,   
\ee  
for any $\gamma>0$. \\ 
\item
 All other eigenfunctions  in the scaling window are $ \ell^2$-localized in the sense that for each $W<\infty$ and $\gamma>0$: 
 \be  \label{tail_0}
\lim_{M\to \infty} \mathbb{P} \left( \max_{n:  |u_{M,n}|<W }    
\frac{\|\psi_{M,n}\|_2}{\|\psi_{M,n}\|_\infty} \leq 1 +  \frac{1}{ M^{\frac{1}{\lambda^2} - \frac{1}{2} - \gamma }}   \right)\ = \ 1  
\ee 
\end{enumerate}
\end{theorem}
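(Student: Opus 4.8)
The plan is to read off both the spectrum and the eigenfunctions from the scaling limit of the tail function $R_{M,\omega}$, which in this regime ($\lambda<\sqrt2$, $\mathcal E=-1$) is of \emph{singular--with--transition} type by Lemma~\ref{lem:genlinlimit}\,(iii.a), the transition point $\tau$ being given by \eqref{eq:iiia}. Since the characteristic equation \eqref{eq:sec} reads $S_{M,\omega}(u)+R_{M,\omega}(u)=0$, Theorem~\ref{thm:GenCrit}\,(3) shows that its solution set converges in distribution to $\omega\cup\{\tau\}$. Exploiting the monotonicity of $S_{M,\omega}+R_{M,\omega}$ between consecutive poles (as in the proof of Theorem~\ref{thm:GenCrit}), the unique solution in the gap of $\omega$ that contains $\tau$ sits at $u=\tau+o(1)$; this is the candidate delocalized state of \eqref{eq:onedeloc}. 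Every other solution is pinned to a pole $\omega_{M,n^*}$ at distance $\approx 1/|R_{M,\omega}(\omega_{M,n^*})|$, on the side fixed by $\sgn(u-\tau)$, and these are the localized states.

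For the delocalized state I would bound the three pieces of the decomposition \eqref{3split}--\eqref{3terms}. As $u\approx\tau$ lies away from $\omega$, its head term is $\|\psi_E\|_\infty=1/\dist(u,\omega)$, and the Poisson--repulsion estimate discussed below gives $\dist(\tau,\omega)>M^{-\eta}$ with probability tending to one, hence $\|\psi_E\|_\infty<M^{\eta}$. The tail term equals $\|\psi_E^{(T)}\|_2^2=\tfrac{d}{du}R_{M,\omega}(u)$, whose mean is $\gtrsim M^{2(\mathcal E_M/\lambda)^2-1}/\ln M$ (that is $M^{2/\lambda^2-1}$ up to a factor $M^{o(1)}$) by the lower bound of Lemma~\ref{lem:tail_var}; the companion variance bound, applied to this positive sum, yields concentration, so with high probability $\|\psi_E\|_2\ge\|\psi_E^{(T)}\|_2\gtrsim M^{1/\lambda^2-1/2}/\sqrt{\ln M}$. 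Dividing and choosing $\eta<\gamma$ gives \eqref{1_infty}.

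For the localized states I would show that the head dominates. The lower bound \eqref{eq:Riia} gives $|R_{M,\omega}(\omega_{M,n^*})|\gtrsim|\omega_{M,n^*}-\tau|\,M^{2/\lambda^2-1}/\ln M$ at every pole with $|\omega_{M,n^*}-\tau|>\delta_M$, so the dominant balance $1/(\omega_{M,n^*}-u)\approx-R_{M,\omega}(\omega_{M,n^*})$ in \eqref{eq:sec} yields $\|\psi_{M,n}\|_\infty=|R_{M,\omega}(\omega_{M,n^*})|\,(1+o(1))$. By Lemma~\ref{lem:tail_var} and \eqref{eq:vergle} the tail term is uniformly $\|\psi_{M,n}^{(T)}\|_2^2\lesssim M^{2/\lambda^2-1}$, while \eqref{delta} and \eqref{gap} make the body term at most polylogarithmic and hence negligible against the head (here $4/\lambda^2-2>2/\lambda^2-1$ for $\lambda<\sqrt2$). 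The correction therefore comes from the tail, $\|\psi_{M,n}\|_2^2/\|\psi_{M,n}\|_\infty^2-1\le C\,(\ln M)\,/\bigl(|\omega_{M,n^*}-\tau|^2\,M^{2/\lambda^2-1}\bigr)$, and since $\sqrt{1+x}\le1+x/2$ this yields \eqref{tail_0} provided $|\omega_{M,n^*}-\tau|\gtrsim M^{-(1/\lambda^2-1/2)/2}$ for all poles in the window.

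The main obstacle, and the step needing the most care, is the uniformity of \eqref{tail_0} over the poles nearest $\tau$, for which the head term is smallest and the required bound tightest. The decisive fact is that $\tau$ is a \emph{fixed} (non-random) point, so the distance from $\tau$ to $\omega$ obeys the Poisson estimate $\mathbb{P}\bigl(\dist(\tau,\omega)<M^{-\eta}\bigr)\approx 2M^{-\eta}\to 0$ for every fixed $\eta>0$. Choosing $\eta<(1/\lambda^2-1/2)/2$ then guarantees, with probability tending to one, that no point of $\omega$ --- in particular neither endpoint of the gap carrying $\tau$ --- lies within $M^{-\eta}$ of $\tau$; this single event simultaneously keeps $\|\psi_E\|_\infty$ small in part~(1) and $|\omega_{M,n^*}-\tau|$ large in part~(2), closing both claims. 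The threshold $\sqrt2$ enters exactly here: only for $\lambda<\sqrt2$ is $2/\lambda^2-1>0$, so that the tail derivative diverges and drives both the delocalization \eqref{1_infty} and the smallness of the correction in \eqref{tail_0}.
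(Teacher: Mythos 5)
Your proposal is correct and takes essentially the same route as the paper's own (much terser) proof: Lemma~\ref{lem:genlinlimit}\,(iii.a) supplies the singular-with-transition limit of $R_{M,\omega}$, Theorem~\ref{thm:GenCrit}\,(3) places one solution at $u=\tau+o(1)$ and pins all others to the poles of $S_{M,\omega}$, and the eigenfunction bounds follow from the head/body/tail decomposition \eqref{3split}--\eqref{3terms} exactly as in Theorems~\ref{thm:localization} and~\ref{thm:resdeloc_rev}. Your write-up simply makes explicit the details (Poisson repulsion of $\omega$ from the deterministic point $\tau$, concentration of the tail derivative, and the resulting exponent bookkeeping) that the paper compresses into the remark that the eigenfunction characteristics ``can be read off this description by following the arguments'' of the earlier theorems.
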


\begin{proof}
The behavior of the function $R_{M,\omega}(u)$ in this case, is described by \eqref{eq:Riia} of Lemma~\ref{lem:genlinlimit}: 
\begin{enumerate} 
\item the function is monotone and of high derivative, its  typical order being 
\be \frac{d}{du} R_{M,\omega}(u)  =  \mathcal{O}\left( \frac{M^{2(\mathcal E_M/\lambda)^2 -1} }{\ln M  } \right) ~,
\ee 
\item  $R_{M,\omega}(u) =0 $  at $u = \tau + \mathcal O(\delta_M)$, with  $\delta_M= (\ln M)^2 / M^{(\mathcal E_M/\lambda)^2 -1/2}$, 
   \item the value of  $S_{M,\omega}(\tau,\ln M)$ does not depend on $\tau$, has the Cauchy distribution, and is typically of order $1$. 
 \end{enumerate} 
It follows  that the characteristic equation, $S_{M,\omega}(\tau,\ln M) = - R_{M,\omega}(u)$ has one solution at 
\be 
u \ =\  \tau +   \mathcal O(\delta_M)
\ee 
and others at close proximity to the poles of $S_{M,\omega}(\tau, \ln M )$, as described in Theorem~\ref{thm:GenCrit}.  
The characteristics of the eigenfunctions can be read off this description by following the arguments which were used in the proofs of Theorems~\ref{thm:localization} and \ref{thm:resdeloc_rev}. 
\end{proof}

\section{Discussion   }   \label{sec:disc} 

\subsection{Relation with a two-state hybridization criterion}

While our main results concern the effects of resonances involving many localized approximate eigenfunctions, let us   note their relation with a  simple criterion for two level eigenfunction hybridization.     

In the simplest two level system the Hamiltonian $H$ and the corresponding resolvent operator $G(\zeta) := (H-\zeta \id)^{-1}$ are of the form of the $2\times 2$ matrices
 \be 
H \ = \ 
\begin{pmatrix} \mathcal E_1 & \tau  \\ \tau^*  &   \mathcal E_2 
    \end{pmatrix} \,  , 
    \qquad  
    G(\zeta)  \ = \ 
\left[ \, \begin{pmatrix} \mathcal E_1-\zeta  & \tau  \\ \tau^*  &   \mathcal E_2 - \zeta 
    \end{pmatrix} \, \right]^{-1}. 
\ee 
The spectrum and eigenfunctions can be found by studying the poles and residues of the resolvent matrix $G$. 
Of particular interest to us is the case where $\tau $ is small.   In this situation, the system has two approximate 
eigenstates, or quasi-modes, corresponding to the (column) vectors $(1,0)$ and $(0,1)$, with $\tau$ serving as the mixing term, or the tunneling amplitude.   The relevant quantity is the ratio of the quasi-modes' energy gap   $\Delta \mathcal E = (\mathcal E_2-\mathcal E_1)$ to the tunneling amplitude.  A simple calculation shows that:
\begin{itemize} 
\item[i.]   If $ |\Delta \mathcal E |\gg |\tau|$  then the eigenfunctions of $H$ are localized, i.e., $ \Psi_1\  \approx \  \left(1,\,  0 \right) $, $ \Psi_2 \ \approx \ 
\left(0, \, 1 \right) $,  to the leading order in $\tau/|\Delta \mathcal E |$;
\item[ii.]    If  
$ |\Delta \mathcal E |\ll |\tau|$ 
 then the eigenfunctions are  equidistributed between the two sites, and close to:  
$
\Psi_1\  \approx \  \frac{1}{\sqrt 2} \left(1,\,  1 \right) $, 
$ \Psi_2 \ \approx \ 
  \frac{1}{\sqrt 2} \left(1,\,  -1 \right) $. 
 \end{itemize} 

Turning  to a system with a large configuration space and random potential, the following is a useful and established term.       
\begin{definition}[Quasi-modes]  
A \emph{quasi-mode} for the self-adjoint operator $H$ with discrepancy $d$ is a pair $(\Psi, \mathcal E)$ such that 
\be 
\| (H - \mathcal E) \Psi \| \ \le \  d \, \| \Psi \|  \, . 
\ee 
\end{definition} 

Tunneling amplitude is a regularly used term 
however its meaning is often left somewhat open, allowing for creative interpretation.   In the context of operators with on-site disorder, we find the  following formulation to be of relevance.   

\begin{definition}[Tunneling amplitude]  \label{def:tunneling_amp}
For a collection of orthogonal quasi-modes $(\Psi_j, \mathcal E_j)$, with $\Psi_i \perp \Psi_j$ at $i\neq j$ and $P_j$ the corresponding projections, we define the  pairwise \emph{tunneling amplitude} as  $|\Sigma_{i,j}(\mathcal E)|$ the modulus  of the off diagonal term in the following representation of the operator's resolvent, $(H - \mathcal E)^{-1}$, restricted to the range of $P_i+P_j$: 
\be \label{eq:quasimode}
\begin{pmatrix} G_{i,i}(\mathcal E )  & G_{i,j}(\mathcal E )  \\ G_{j,i}(\mathcal E )  &  G_{j,j}(\mathcal E )     \end{pmatrix}  
\ = \ 
\left[ \, \begin{pmatrix}  \mathcal E_i  + \Sigma_{i,i}( \mathcal E)  &  \Sigma_{i,j}(\mathcal E) \\ \Sigma_{j,i}( \mathcal E)     &      \mathcal E_j  + \Sigma_{j,j}(\mathcal E) 
    \end{pmatrix} \, \right]^{-1}
    \ee 
\end{definition} 

\medskip 

To place that in context, let us recall  the Schur complement formula, which states that if for a specified pair $(i,j)$, the operator $H$ is decomposed as
$
H  = \mathcal E_i \, P_i  +  \mathcal E_j \, P_j  +   \widehat H  $,
then~\eqref{eq:quasimode} holds with $\Sigma(\mathcal E) = \begin{pmatrix}     \Sigma_{i,i}  &  \Sigma_{i,j} \\ \Sigma_{j,i}     &   \Sigma_{j,j} 
    \end{pmatrix} $  the $2\times 2$ inverse of the restriction of $(\widehat H - \mathcal  E)^{-1}  $ 
to  the range of $ P_i+P_j$.  \\

Consider now the situation in which an $M\times M $ matrix $ H_M $ has a large collection of quasi-modes, whose energies fluctuate with a considerable degree of independence at density $\mu_M(\mathcal E)$. Thus in the 
vicinity of energy $ \mathcal{E}$ they have gaps of the order 
$
\Delta_M (\mathcal E) = \ [M\, \mu_M( \mathcal E)]^{-1}  $.  
Assume also that the pairwise tunneling amplitudes are ``mostly''  of a common order of magnitude $\tau(M, \mathcal E)$.    
   The  previous rank-two discussion suggests, as a `rule of thumb' that at energies at which  
\be   \label{eq:DeltaU_tau}  
\frac{\Delta_M ( \mathcal E) }{ \tau(M, \mathcal E) }\  \lesssim \ 1\,    
\ee 
the localized quasi-modes would be unstable with respect to resonant delocalization, and the proper eigenstates will take the form of hybridized wave functions.    This paper grew out of an attempt to develop further insight on the relevance of the  criterion~\eqref{eq:DeltaU_tau}.  
We find that our results support its relevance in the present context.    \\ 

 More explicitly:  from  the expression~\eqref{eq:res} for the resolvent, we find that for the operator $H_{M}$ the tunneling amplitude between a pair of  the $\delta$ function quasi-modes is given by
 \begin{eqnarray}  \label{eq:tauij}
\tau_{i,j}(\mathcal{E}) & = &   \frac{1}{M} \left| 1 + \langle \varphi_0 \, , \big(\widehat H_M - \mathcal E  \big)^{-1} \varphi_0 \rangle \right|   
\notag \\ 
& =&  
 \frac{1}{M}  \left| 1-   \frac{1}{M} \sum_{n\neq i, j} \frac{1}{\kappa_M V_n - \mathcal E }  \right|^{-1} 
 \end{eqnarray} 
where $ \widehat H_M $ is a modified version of $ H_M $ with $ V_i = V_j = 0 $.  It  may be note  that the `direct' tunneling amplitude $ 1/M $ is boosted by the term $\langle \varphi_0 \, , \big(\widehat H_M - \mathcal E  \big)^{-1} \varphi_0 \rangle$.   Intuitively, that is so since the tunneling is through the state $\varphi_0$.    \\

By the estimates of  Lemmas~\ref{lem:tail_var} and \ref{lem:meanT},  typical value of the tunneling amplitude among states of energy in the vicinity of 
$\mathcal E$ is: 
\begin{eqnarray}  \label{eq:tauij_est}
\tau_{i,j}(\mathcal{E}) 
& \approx& 
 \frac{1}{M}    \left[  1-  \widehat{\varrho}_M( \mathcal E)    +   
 \Theta \left(\frac{1}{M \, \Delta_M(\mathcal E)} \right ) \right] ^{-1}  \, .
\end{eqnarray} 
Thus, the heuristic condition for resonant delocalization \eqref{eq:DeltaU_tau} corresponds to:   
\be  \label{eq:resdeloccond}
M\,  \Delta_M( \mathcal E)  \left| 1 - \widehat{\varrho}_M( \mathcal E) 
+\Theta\left( \frac{\sqrt{\ln M}} {\lambda \, M^{(\mathcal E/\lambda)^2 }}\right) \right|     \  \lesssim    \  1  
  \, .    
\ee 
which played a role in our discussion in Section~\ref{sec:regions} (see  \eqref{eq:Tmean}).  \\

To answer the question posed above we note that  the two-level resonance condition is pointing at the right direction.  At the same time,  rank-two analysis alone does not yet address  a number of relevant points, such as: 
 \begin{enumerate}[i.] 
 \item a possible stratification of quasi-mode pairs, e.g. by the distance  (which in the tree graph example studied in \cite{AW13} affects the tunneling amplitude) or by the efficacy of mixing channels, 
\item the effects of possible  interactions among distinct  quasi-modes, as well as other states 
\item the question of formation of a \emph{band} of extended (or semi-extended) eigenstates.  \\ 
 \end{enumerate} 
 The last   has been tackled here through  the  more detailed analysis of the structure of the resolvent, and the  Green function's local scaling limit.   \\

We also found that the hybridization  among levels which resonate through a single channel, as in the  present example, yields   delocalization in only a partial sense: it is delocalization in the spacial distribution of the wave function and in the $\ell^1$-sense, 
  but not in the $\ell^2$-sense,  meaning that most of the state's $\ell^2$-mass is carried  by few   localized 
sites.  This point is discussed next.  \\

\subsection {Different notions of delocalization} \label{sec:localization}

Localization and delocalization of the eigenfunctions on a finite or infinite graph can be formulated in terms which may either 
depend on the graph's metric or be independent of it.   It is a relevant observation that the two  terminologies need not coincide.  
The point is exemplified by functions which are the sum of  few localized wave packets, which are located at large distance from each other.   Judging by their spatial spread,  such functions would be deemed {\it delocalized}, whereas judging by the number of points on which the bulk of the corresponding probability distribution (or $\ell^2$-norm)   is supported, the function may be viewed as {\it localized}.  
This distinction is of relevance in the case discussed below, and  more generally wherever  eigenfunctions are 
formed through resonances among local quasi-modes.   \\

The two different  forms of localization for functions on a graph $\mathcal{G}$ can be quantified through the quantities: 
\begin{align}
p_{diam}(d) &=  \inf \Big\{ \sum_{x\in \G \backslash A}  \frac{|\psi (x)|^2}{\|\psi \|_2^2}  \, \big|  \, 
A\subset \G,\, \rm{diam}(A) \le d \Big\}  \notag \\ 
p_{vol}(d) &= \inf \Big\{ \sum_{x\in \G \backslash A}  \frac{|\psi (x)|^2}{\|\psi \|_2^2}  \, \big|   \, 
A\subset \G,\, \rm{card}(A) \le d \Big\} 
\end{align}  
with $\rm{diam}(A)$ the diameter of the set $A$,   $\rm{card}(A)$ the set's cardinality.\\

\begin{enumerate} 
\item[]{\bf Spatial localization}   can be expressed through suitable bounds on the  distances at which $ p_{diam}(d)$ reaches small values.    For example, `spatial exponential localization' with localization length $\xi$ may be expressed by a bound of the form 
\be
p_{diam}(d) \ \le \  C \, e^{-d/\xi}  \, , 
\ee 
 at some $C<\infty$.  \\ 
 
 \item[]{\bf $\ell^2$-localization}  is similarly expressed through bounds on the inverse function of  $p_{vol}(d)$.    Exponential  $\ell^2$-localization would be expressed by a bound of the form
\be
p_{vol}(d) \ \le \  C \, e^{-d/\alpha}  \, , 
\ee 
 with some $C, \alpha <\infty$.  
 The inverse of that  function show how many sites does it take to capture all but  fraction $p$ of the function's $\ell^2$-mass.\\ 
    \end{enumerate} 
 In case the discussion concerns not a fixed graph but a sequence of graphs, of diverging diameters, the two notions of  localization may be tested by whether the inverse functions of $p_{diam}(d)$, and correspondingly $p_{vol}(d)$ are uniformly bounded, or at least grow at slow rate.  \\ 

In the converse direction, the term delocalization  can also be given  different meanings:   
\begin{enumerate} 
\item[]{\bf Spatial delocalization} on scale $\xi_M$, can be expressed by the condition that eigenfunctions with energies in the specified range  typically satisfy:  
\be  \label{eq:spatial_deloc}
\sum_{x,y\in \mathcal{G}_M} |\psi(x)|^2 \, |\psi(y)|^2\,  \indfct\left[ {\rm dist}(x,y) \geq {\xi_M} \right]   \  \ge \   p_0 \,   \|\psi\|_2 ^2 \, . 
\ee
for some $M$-independent $p_0 > 0$. (Here $\mathcal{G}_M$ are graphs $\G_M$ of growing diameter).\\ 

\item[] {\bf  $\mathbf{\ell^2}$-delocalization}  
for a sequence of functions on graphs $\G_M$ of growing diameter,   
{$\ell^2$-delocalization} is expressed in the statement that   
\be \label{eq:deloc}
\lim_{M\to \infty} p_{vol}(d) \ = \ 1 
\ee 
for  all $d<\infty$. \\ 
\end{enumerate}%

A more standard formulation of delocalization is  through the  function's {\bf inverse participation ratio} (with $q=2$ or more generally $q>1$)
\be  
P_q(\psi) := \frac{\sum_x |\psi(x)|^{2q}}{\left[ \sum_x |\psi(x)|^2 \right]^q} \,   .
\ee 
These are linked to the norm ratio $
r(\psi)  :=  \|\psi \|_\infty / \|\psi \|_2   \,  
$
(with $\|\psi\|_\infty  := \max_{x\in \mathcal{G}_M} |\psi(x) |$) through the bounds: 
\be  \label{eq:equiv}
 r(\psi) ^{2q} \ 
\le \   P_q (\psi)
\  \le \    r(\psi) ^{2(q-1)}  \, . 
\ee 
(The upper bound is implied by a convexity argument, and the lower bound is due to   the contribution of the site at which $\psi$ is maximized.)

Thus  {\bf $\mathbf{\ell^2}$-delocalization}~\eqref{eq:deloc} is equivalently expressed in the vanishing, in the suitable sense, of the inverse participation ratio $P_2$, or  of the norm ratio $r(\psi) $.   In dynamical terms this can be viewed as the opposite of 
{\bf positive recurrence} (terminology which is suggested by an analogy with a classical Markov chain term).

The example considered here   is  degenerate, since  on the complete graph the distance between  neighboring sites equaling the graph's diameter (i.e., except for the situation of   total localization - when the $\ell^2$-mass is asymptotically concentrated at a single site, the function's support is of diameter comparable with that of the entire graph). 
Thus, in Theorem~\ref{thm:resdeloc_rev}, we find that resonant delocalization occurs in the sense of spatial delocalization without meeting the standard $\ell^2$-delocalization criterion.   However, this partial delocalization does coincide with the following  weaker  measure of the spread of the wave function \\ 
 \begin{enumerate} 
\item[]{\bf $\ell^{1}$-delocalization:} as the size of the system is taken to infinity, for all $ q \in (0,1/2] $, the ratio $ P_q(\psi) $ diverges in the distributional  sense (for eigenfunctions with energies in a specified range). \\ 
 \end{enumerate} 
  
\medskip

\appendix

\section{The spectral range of $H_{M,\omega}$ and the ground state transition}   \label{app:ground_state}

This appendix is devoted to a proof of Theorem~\ref{thm:spectrumandgs}. We abbreviate $E_0 = \min \sigma(H_M)$. Since $T$ is of rank one, the eigenvalues of $H_M$ and $\kappa_M V$ interlace. Since $T \leq 0$ and by the extreme-value statistics of Gaussian random variables, 
\be\label{eq:extremevaluestat}
\lim_{M \to \infty } \mathbb{P}\left( \max_x V(x) > \frac{\lambda}{\kappa_M } - \frac{\ln( 4\pi \ln M)}{2 \sqrt{2 \ln M}} + \frac{u}{ \sqrt{2 \ln M}}  \right)  \to 1 - \exp\left(-e^{-u}\right) \, , 
\ee
for any $ u \in \R $ (cf.~\cite{Bovier}),  this implies that for any $ \varepsilon > 0 $:
\be\label{eq:Hausdconv2}
\lim_{M\to \infty} \mathbb{P}\left( d_H( \sigma(H_M) \setminus \{E_0\}  \, , \,  [-\lambda, \lambda] ) > \varepsilon \right) \ = \ 0 \, .
\ee
Thus all the spectrum, except for the ground state, converges as claimed. 
We now distinguish two cases.

\vspace{2mm}\noindent {\bf The case $0 < \lambda < 1$.} 
According to Proposition~\ref{lem1}, $E_0$ is the smallest
solution of the equation $  F_M(E) = 1 $.  For the remainder of the proof we may assume that 
\be\label{eq:maxass} 
  \max_x |\kappa_M V(x)| \leq \lambda 
 \ee 
since 
this holds with asymptotically full probability by~\eqref{eq:extremevaluestat}. Under this assumption,
for any $E < -\lambda $:
\be\label{eq:FM}
F_M(E) = \frac{1}{M} \sum_{x=1}^M \frac{\indfct[\kappa_M V(x) \geq - \lambda ]}{\kappa_M V(x) - E} = T_M\Big(E, - \frac{ \lambda + E}{\kappa_M}\Big)~.
\ee
This implies that for any $ E < -\lambda  $, any $ \eta > 0 $ and  all sufficiently large $M $:
\begin{equation}\label{eq:nother}
\lim_{M \to \infty }  \mathbb{P}\left( \left| F_M(E) - \widehat\varrho_{M}(E) \right| \geq M^{-1/2+ \eta}   \right) = 0    \, . 
\end{equation}
The proof of~\eqref{eq:nother} either follows directly from Lemma~\ref{lem:meanT}  and \ref{lem:tail_var}. Alternatively, it is  derived using the representation
$ F_M(E) = - (\lambda+ E)^{-1} - \int_{-\lambda}^\infty \frac{1}{M} \sum_x \indfct[\kappa_M V(x) \geq t ] \frac{dt}{(t-E)^2}  $ and  the Dvoretzky-Kiefer-Wolfowitz inequality \cite{DKW}:
\be \mathbb{P} \left\{ \sup_t \left| \frac{1}{M} \sum_x \indfct[\kappa_M V(x) \geq t ] - \int_{t/\kappa_M}^\infty e^{-s^2/2} \frac{ds}{\sqrt{2\pi}} \right| \geq \frac{R}{\sqrt{M}} \right\} \leq C e^{-2 R^2} 
\ee 
for any $R > 0$. 

Since $ \widehat{\mathcal{E}}_M^{(-1)} $ is the unique solution of $ \widehat\varrho_{M}(E)=1 $ for $ E \leq -1 < -\lambda $, the bound~\eqref{eq:nother}  implies that for any $ \eta > 0 $ with asymptotically full probability 
$ | E_0 - \widehat{\mathcal{E}}_M^{(-1)} | \leq C M^{-1/2+ \eta}  $. From~\eqref{eq:asympE} we hence conclude  $E_0 =  -1 - \kappa_M^2 + O(\kappa_M^4) $.

To prove the strong delocalization of the eigenfunction $ \psi_0(x) = (\kappa_M V(x) - E_0)^{-1} $  corresponding to $E_0$ in this case, we  use~\eqref{eq:maxass} to estimate:
\be
\sqrt{M} \geq \frac{\| \psi_0 \|_2 }{ \| \psi_0 \|_\infty} \geq \ \sqrt{M} \frac{1- \lambda}{2 ( 1 +\lambda)} \, .
\ee
This concludes the proof in case $ 0 < \lambda < 1 $. \\

\vspace{2mm}\noindent
{\bf The case $\lambda > 1$}. In this case, the statement is contained in Theorem~\ref{thm:localization} applied  with $ \mathcal E = - \lambda $. Alternatively, we may directly bound
$ E_0 \leq \min \sigma(\kappa_M V)   $. For a lower bound we may use the variational characterization and the Cauchy-Schwarz inequality:
\be\label{eq:minsigma}
E_0   \geq \inf_{ \| \psi \|_2 = 1 } \left[\sum_x \kappa_M V(x) \psi(x)^2 - \frac{1}{M} \sum_y \frac{1}{\alpha(y)} \sum_x \psi(x)^2 \alpha(x) \right] 
\ee
for any $\alpha > 0$.
We pick
\be
\alpha(x) = \begin{cases}
R + M(\kappa_M V(x) - \min \sigma(\kappa_M V)~, & V(x) \leq - R \\
\frac{M}{1-\delta}~, &V(x) > - R~,
\end{cases}
\ee
where $\delta > 0$ is such that $(1-\delta)^{-1} < \lambda$, and $R>0$ is a large number independent of $M$ to be chosen shortly. One can check that
\be
\sum_{V(x) \leq - R} \frac{1}{\alpha(x)}  \leq \delta/2
\ee
for sufficiently large $R>0$.  Therefore
$ \sum_x \alpha(x)^{-1} \leq 1 - \delta/2 $ which by~\eqref{eq:minsigma}, yields
\be E_0 \geq \min \sigma(\kappa_M V) - \frac{2R}{M}~. \ee
By the same argument, the minimum in (\ref{eq:minsigma}) is attained on a function $\psi_0$ which satisfies~\eqref{eq:gs}.

\medskip


\section{A  useful estimate}  \label{app:est}  

In Lemma~\ref{lem:tail_var} use was made of the following bound.    \\ 

\begin{lemma}\label{lem:intrho}  Let $\varrho(v) = (2\pi)^{-1/2} \exp(-v^2/2)$.  
Then for some $ 0 < c, C < \infty $ and any $0 < \delta \leq 1$ and any $v \in \mathbb{R}$:
\be \label{eq:est} 
\frac{c}{(1+|v|)^2} \leq  \int_{|u-v|\geq\delta} \frac{\varrho(u)du}{|u-v|^2}
\leq C \left\{ \frac{\varrho(v)}{\delta} + \frac{1}{(1+|v|)^2}\right\}~.
\ee
\end{lemma}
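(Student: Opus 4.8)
The plan is to prove the two inequalities separately, after reducing to $v \ge 0$ by the substitution $u\mapsto -u$, which (together with the evenness of $\varrho$) shows that the integral $I(v,\delta):=\int_{|u-v|\ge\delta}\varrho(u)\,|u-v|^{-2}\,du$ is even in $v$.

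For the lower bound I would simply discard all of the domain except a single interval that stays bounded away from $v$ (so that the constraint $|u-v|\ge\delta$ holds automatically for every $\delta\le 1$) and on which $\varrho$ is bounded below by a fixed positive constant. For $v\ge 2$ the interval $[-1,1]$ works: there $|u-v|\le 1+v$, so the integrand is $\ge \varrho(u)/(1+v)^2$ and integration yields $\left(\int_{-1}^{1}\varrho\right)/(1+v)^2$. For $0\le v\le 2$ the interval $[v+1,v+2]$ works, since there $|u-v|\in[1,2]$ and $\varrho(u)\ge\varrho(4)$, giving a uniform positive constant, which dominates $c/(1+v)^2$ because $(1+v)^2\le 9$. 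Taking $c$ to be the smaller of the two resulting constants finishes this (easy) direction.

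For the upper bound I would split $\{|u-v|\ge\delta\}$ into the near annulus $\{\delta\le |u-v|\le 1\}$ and the far set $\{|u-v|>1\}$, and within each treat the two sides of $v$ differently. On the side away from the origin ($u\ge v$, for $v\ge 0$) monotonicity gives $\varrho(u)\le\varrho(v)$, so the near part is $\le \varrho(v)\int_\delta^1 t^{-2}\,dt\le \varrho(v)/\delta$ and the far part is $\le\varrho(v)$; both are absorbed into the first target term. The origin-side far region and the deep tail are controlled by the distance factor: for $u\le v/2$ one has $(u-v)^2\ge v^2/4$, so that piece is $\le (4/v^2)\int\varrho\le C/(1+v)^2$, while the middle range $v/2\le u\le v-1$ has $(u-v)^2\ge 1$ and is thus $\le \int_{v/2}^\infty\varrho(s)\,ds\le \varrho(v/2)/(v/2)$, which is $\le C/(1+v)^2$ since $\varrho(v/2)=(2\pi)^{-1/2}e^{-v^2/8}$ beats any polynomial.

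The hard part will be the near annulus on the origin side, $u\in[v-1,v-\delta]$, where $\varrho(u)$ can exceed $\varrho(v)$ by an exponential factor so that the bound $\varrho(u)\le\varrho(v)$ is unavailable. Here I would use the exact ratio $\varrho(v-t)=\varrho(v)\exp(vt-t^2/2)$ and split at the crossover $t\sim 1/v$: on $t\le 1/v$ the factor $e^{vt}$ is $\le e$, so this sub-piece is $\le e\,\varrho(v)\int_\delta^{1/v}t^{-2}\,dt\le e\,\varrho(v)/\delta$, matching the first term; on $1/v\le t\le 1$ I would bound $t^{-2}\le v^2$ and combine the Gaussian tail estimate $\int_x^\infty\varrho(s)\,ds\le\varrho(x)/x$ (Mills' ratio) with $\varrho(v-1)=\varrho(v)e^{v-1/2}$ to see that $v^2\int_{v-1}^{v}\varrho(s)\,ds$ is super-exponentially small, hence $\le C/(1+v)^2$. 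The genuine obstacle is essentially bookkeeping: making every constant uniform in $\delta\in(0,1]$ and in $v$, and in particular checking the two cases $\delta\le 1/v$ and $\delta> 1/v$, and the regime $|v|\le 2$ separately—where both target terms are of order $1/\delta$ and $1$, so that the crude estimate $I(v,\delta)\le 2\|\varrho\|_\infty/\delta+1$ already suffices.
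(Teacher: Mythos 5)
Your proof is correct, and its crucial step coincides with the paper's: both arguments hinge on the crossover scale $|u-v|\sim 1/|v|$, inside which the Gaussian ratio bound $\varrho(v-t)=\varrho(v)e^{vt-t^2/2}\le e\,\varrho(v)$ (for $t\le 1/v$; the paper writes it as $\varrho(v+x)\le\varrho(v)e^{|vx|}$) produces the $\varrho(v)/\delta$ term after integrating $t^{-2}$. Where you genuinely diverge is the far field. The paper disposes of the entire region $\{|u-v|\ge C_1/|v|\}$ in one stroke: it factors $\varrho(u)=\tfrac{1}{\sqrt{2\pi}}e^{-u^2/4}\cdot e^{-u^2/4}$, pulls out $\int e^{-u^2/4}\,du/\sqrt{2\pi}=\sqrt 2$, and bounds the remaining supremum of $e^{-u^2/4}/|u-v|^2$ by noting that (for suitable $C_1$) it is attained at $|u|\le 1/2$, where $|u-v|\gtrsim|v|$; this gives $C/|v|^2$ immediately. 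You instead partition the far field into four sub-regions (the monotone side $u\ge v$, the deep tail $u\le v/2$, the middle band $v/2\le u\le v-1$, and the post-crossover annulus $t\in[1/v,1]$) and control each with elementary tools — monotonicity of $\varrho$, the trivial distance bound, and Mills' ratio. Your route is longer and requires the $\delta\lessgtr 1/v$ and $|v|\le 2$ bookkeeping you flag, but every step is a one-line elementary inequality; the paper's route is shorter and uniform in the far region, at the price of a small calculus argument locating the maximum of $e^{-u^2/4}/|u-v|^2$. Your lower bound (a fixed interval where $\varrho$ is bounded below and $|u-v|\le 1+|v|$, shifted to $[v+1,v+2]$ for small $v$) is the same idea as the paper's $|u|\le 1$ restriction, and is in fact slightly cleaner in the regime $|v|\le 2$, which the paper dismisses as "similar and simpler".
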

\begin{proof} 
Let us assume that $|v|\geq 1$; the proof is similar and simpler for $|v| \leq 1$. For the upper bound we decompose the integral into two parts. The first part is estimated using the
elementary inequality $\varrho(v+x) \leq \varrho(v) e^{|vx|}$:
\be
 \int_{\delta \leq |u-v| \leq C_1|v|^{-1}} \frac{\varrho(u) }{|u-v|^2} du \ \leq \ e^{C_1} \frac{\rho(v)}{\delta} \, . 
\ee
The second integral is bounded by
\be
\int_{|u-v|\geq C_1 |v|^{-1}} \frac{e^{-u^2/4}}{\sqrt{2\pi}} \frac{e^{-u^2/4}}{|u-v|^2} du \ \leq \ C_2 \max_{|u-v| \geq C_1 / |v|} \frac{e^{-u^2/4}}{|u-v|^2}~ \, . 
\ee
For suitably chosen $C_1 > 0$, the maximum is attained inside the interval $[-1/2, 1/2]$, and is
thus bounded by $C_3 / |v|^2$.

For the lower bound we estimate the contribution from $ | v | \leq 1 $:
\be
 \int_{\substack{|u-v|\geq\delta \\ | u | \leq 1 }} \frac{\varrho(u)du}{|u-v|^2} \geq \min_{| u | \leq 1} \varrho(u) \,  \int_{\substack{|u-v|\geq\delta \\ | u | \leq 1 }} \frac{du}{|u-v|^2} \geq \frac{c}{(1+|v|)^2} \, . 
\ee
\end{proof}

Of possible interest is also  the following exponential improvement of the variance bound of Lemma~\ref{lem:tail_var}.  

\begin{lemma}   \label{lem:tail_exp} 
For  $\lambda>0$, $\mathcal E \in (-\lambda , \lambda)$,  $|u| < L $, and any $\tau >0$
\begin{multline}  \label{eq:Tvar}
\qquad\mathbb{P}\left(\big| T_M(u,L)  - \mathbb{E}(T_M(u,L))   \big | \geq \tau  \right) \\ 
\leq 2 \exp\left( - c \tau \min\left\{ \frac{ \tau}{\mathbb{E} \left( \big | \frac{d}{d \, u}  T_M(u,L) \big |  \right)} , L  \right\} \right) 
\end{multline}
with some numerical constant $ c  > 0 $.
\end{lemma}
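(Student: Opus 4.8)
The plan is to recognise $T_M(u,L)$ as a sum of independent, uniformly bounded random variables and then apply a Bernstein--Bennett type concentration inequality, feeding in the variance bound already established in Lemma~\ref{lem:tail_var}. Although the summands are labelled by order statistics, the sum
\[
T_M(u,L) \ = \ \sum_{n} X_n \, , \qquad X_n := \frac{\indfct[|\omega_{M,n}| > L]}{\omega_{M,n} - u} \, ,
\]
may be regarded, exactly as in the proof of Lemma~\ref{lem:L}, as a sum over the $M$ i.i.d.\ rescaled potential values, so that the $X_n$ are independent and the problem reduces to a classical one.

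Two ingredients feed the concentration estimate. First I would record the uniform term bound $|X_n| \le b := C/L$: each nonzero summand has $|\omega_{M,n}| > L$ while $|u| < L$, so $|\omega_{M,n} - u| \ge |\omega_{M,n}| - |u|$, which in the regime relevant here (a fixed energy window $|u|\le W$ against a diverging cutoff $L = L_M$, effectively $|u| \le L/2$) gives $|\omega_{M,n}-u| \ge L/2$ and hence $|X_n| \le 2/L$. Second, I would use the variance bound $\sigma^2 := \mathrm{Var}(T_M) \le \mathbb{E}\big(\tfrac{d}{du} T_M(u,L)\big)$, which is precisely \eqref{eq:Tvar2}; here one notes $\tfrac{d}{du}T_M = \sum_n X_n^2 \ge 0$, so that $\mathbb{E}(|\tfrac{d}{du}T_M|) = \sum_n \mathbb{E}(X_n^2) \ge \sum_n \mathrm{Var}(X_n) = \sigma^2$, in particular identifying the quantity appearing on the right of \eqref{eq:Tvar}.

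With these inputs the result is a standard consequence of Bennett's inequality for independent variables bounded by $b$,
\[
\mathbb{P}\big( |T_M(u,L) - \mathbb{E}(T_M(u,L))| \ge \tau \big) \ \le \ 2\exp\Big(-\frac{\sigma^2}{b^2}\, h\big(\tfrac{b\tau}{\sigma^2}\big)\Big) \, , \qquad h(x) = (1+x)\log(1+x) - x \, ,
\]
which one may also derive directly by bounding $\mathbb{E}[e^{s(X_n - \mathbb{E}(X_n))}] \le \exp(C s^2\,\mathrm{Var}(X_n))$ for $|s| \le 1/(2b)$, multiplying over $n$, and optimising over $s$. Using the elementary estimate $h(x) \ge c\min\{x^2, x\}$ one gets $\tfrac{\sigma^2}{b^2} h(b\tau/\sigma^2) \ge c\,\tau\min\{\tau/\sigma^2, 1/b\}$; substituting $1/b = L/2$ and then replacing $\sigma^2$ by the larger quantity $\mathbb{E}(|\tfrac{d}{du}T_M|)$ (which only weakens, hence preserves, the bound) yields the exponent $-c\,\tau\min\{\tau/\mathbb{E}(|\tfrac{d}{du}T_M|),\, L\}$ claimed in \eqref{eq:Tvar}, after absorbing the factor $1/2$ into $c$.

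The step requiring the most care is the uniform term bound $|X_n|\le C/L$: it is genuinely sensitive to $u$ approaching the cutoff $\pm L$, where individual summands may become large, so that the naive constant degenerates. This is harmless in every intended application, where $|u|$ stays in a fixed window while $L_M \to \infty$; to retain the bound uniformly over all $|u| < L$ one must either restrict to $|u| \le L/2$ or track the dependence of the constant on $L - |u|$. The remaining steps---independence, the variance identity, and the passage from Bennett's inequality to the two-regime exponential---are routine.
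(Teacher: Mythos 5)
Your proposal is correct and takes essentially the same route as the paper: the paper's proof is exactly the hand-rolled Bernstein/Bennett argument you sketch as the alternative derivation --- exponential Chebyshev, factorization of the moment generating function over the $M$ i.i.d.\ summands, the elementary bound $e^x \le 1 + x + \frac{x^2}{2}e^{|x|}$ to bring in $\mathbb{E}\big(\big|\tfrac{d}{du}T_M(u,L)\big|\big)$, and the choice $t = \min\left\{ L, \tau/\big(e\,\mathbb{E}\big(\big|\tfrac{d}{du}T_M(u,L)\big|\big)\big)\right\}$, which reproduces the two-regime exponent. Your caveat about $u$ approaching the cutoff is well taken: the paper's factor $e^{|t|/L}$ likewise rests on the implicit bound $|\omega_{M,n}-u|\gtrsim L$ for the surviving terms, so its proof shares the same need to keep $|u|$ bounded away from $\pm L$ (harmless in all applications, where $|u|\le W$ is fixed while $L_M\to\infty$).
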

\begin{proof}  
The probability bound is based on the exponential Chebyshev estimate:
\be
\mathbb{P}\left(\big| X \big | \geq \tau  \right) 
\leq e^{-t \tau} \left(\mathbb{E} \left[ e^{t  X }\right]  + \mathbb{E} \left[ e^{- t  X}\right] \right) 
\ee
applied to $ X =  T_M(u,L)  - \mathbb{E}(T_M(u,L))  $, and optimized over $ t \geq 0 $.

In  a variant of the argument which was used in the proof of Lemma~\ref{lem:tail_var}, the moment generating function can be estimated by noting that $\mathbb{E} \left[ e^{t  T_M(u,L)  }\right]  $ is an average of a product of $M$ functions of iid random variables.    One obtains: 
\begin{align}  \label{eq:102} 
\mathbb{E} \left[ e^{t  T_M(u,L)  }\right] & = \prod_n \left( 1 + \mathbb{E}\left[ e^{t \frac{  \indfct \left[ |\omega_{M,n}| > L  \right]}{ \omega_{M,n}-u}} -1\right] \right) \\
& \leq \exp\left\{t \,  \mathbb{E}[T_M(u,L)]  + \frac{t^2}{2}  e^{\frac{|t|}{L}} \mathbb{E} \left[ \sum_{n}  \frac{  \indfct \left[ |\omega_{M,n}| > L  \right]}{ |\omega_{M,n}-u|^2 }  \right]  \right\} \, . \notag
\end{align}
where the inequality is based on the elementary bounds: $ e^x \leq 1  + x + \frac{x^2}{2} e^{|x|} $, for $ x \in \R $, and $   1 + x \leq e^x$, for 
$x \geq 0$. The second expectation value in the exponential equals
\be \label{eq:202} 
\mathbb{E} \left[ \sum_{n}  \frac{  \indfct \left[ |\omega_{M,n}| > L  \right]}{ |\omega_{M,n}-u|^2 }  \right]   =  \mathbb{E} \left[ \big | \frac{d}{d \, u}  T_M(u,L) \big |  \right] \, . 
\ee
The choice $ t = \min\left\{ L , \tau /  ( e \mathbb{E} \left[ \big | \frac{d}{d \, u}  T_M(u,L) \big |  \right]) \right\} > 0 $ in the Chebyshev inequality yields the claim \eqref{eq:Tvar}.  \\  
 \end{proof} 
 
For  explicit probability bounds  \eqref{eq:Tvar} may be combined with \eqref{eq:Tprime} of Lemma~\ref{lem:tail_var}. 
Although it was not used in the present work, this exponential bound is included here since it allows to strengthen the implications on the fluctuations of $T_{M,\omega}(u,L)$ into estimates which apply   uniformly over macroscopically broad ranges of $\mathcal E$. 
  
\bigskip

\subsection*{Acknowledgment}
This work was supported in part by NSF grant PHY-1104596.   MA and SW  thank  CIRM (Luminy)  and MS and SW thank the Institute for Advanced Study (Princeton)  for the support and hospitality  accorded to the authors during work on this project.

 \end{document}